\newcommand{\tr}{\operatorname{Tr}}
\newcommand{\erfc}{\mathrm{erfc}}
\newcommand{\proj}{p}
\newcommand{\Resolution}{\gamma}
\newcommand{\Endiff}{\Delta}
 \newcommand{\Cohere}{\Gamma}
  \newcommand{\hatH}{H}
\newcommand{\mc}[1]{\mathcal{#1}}
\newcommand{\sun}[1]{\textcolor{black}{#1}}
\newcommand{\blue}[1]{\textcolor{black}{#1}}
\newtheorem{proposition}{Proposition}
\newtheorem{lemma}{Lemma}
\newtheorem{remark}{Remark}
\begin{document}

\title{
Probing spectral features of quantum many-body systems with quantum simulators
}

\date{\today}

\author{Jinzhao Sun}
\email{jinzhao.sun.phys@gmail.com}
\affiliation{Clarendon Laboratory, University of Oxford, Parks Road, Oxford OX1 3PU, United Kingdom}
\affiliation{ Blackett Laboratory, Imperial College London, London SW7 2AZ, United Kingdom}
  
\author{Lucia Vilchez-Estevez}
\email{lucia.vilchezestevez@physics.ox.ac.uk}
\affiliation{Clarendon Laboratory, University of Oxford, Parks Road, Oxford OX1 3PU, United Kingdom}

\author{Vlatko Vedral}
\affiliation{Clarendon Laboratory, University of Oxford, Parks Road, Oxford OX1 3PU, United Kingdom}

\author{Andrew T. Boothroyd}
\affiliation{Clarendon Laboratory, University of Oxford, Parks Road, Oxford OX1 3PU, United Kingdom}
 
 \author{M. S. Kim}
\affiliation{ Blackett Laboratory, Imperial College London, London SW7 2AZ, United Kingdom}


\begin{abstract}

The efficient probing of spectral features is important for characterising and understanding the structure and dynamics of quantum materials. In this work, we establish a framework for probing the excitation spectrum of quantum many-body systems with quantum simulators. Our approach effectively realises a spectral detector by processing the dynamics of observables with time intervals drawn from a defined probability distribution, which only requires native time evolution governed by the Hamiltonian without ancilla. The critical element of our method is the engineered emergence of frequency resonance such that the excitation spectrum can be probed. We show that the time complexity for transition energy estimation has a logarithmic dependence on simulation accuracy and how such observation can be guaranteed in certain many-body systems. We discuss the noise robustness of our spectroscopic method and show that the total running time maintains polynomial dependence on accuracy in the presence of device noise. We further numerically test the error dependence and the scalability of our method for lattice models. We present simulation results for the spectral features of typical quantum systems, either gapped or gapless, including quantum spins, fermions and bosons. We demonstrate how excitation spectra of spin-lattice models can be probed experimentally with IBM quantum devices.\\

\end{abstract}

\maketitle

\section{Introduction}

Estimating spectral features of quantum many-body systems has attracted great attention in condensed matter physics and quantum chemistry.
To achieve this task, various experimental and theoretical techniques have been developed, such as spectroscopy techniques~\cite{mukamel1995principles,lovesey1984theory,boothroyd2020principles,als2011elements,knap2013probing,wan2019resolving,kowalewski2017simulating} and quantum simulation either by engineering controlled quantum devices~\cite{senko2014coherent,jurcevic2014quasiparticle,valdes2017fourier,dalmonte2018quantum,manmana2009time,yoshimura2014diabatic,richerme2014non,menu2018quench,lee2022simulating} or executing quantum algorithms~\cite{miessen2023quantum,Ge19,low2017hamiltonian,lee2021simulation,low2019hamiltonian} such as quantum phase estimation and variational algorithms. 
However, probing the behaviour of complex quantum many-body systems remains a challenge, which demands substantial resources for both approaches. For instance, a real probe by neutron spectroscopy requires access to large-scale facilities with high-intensity neutron beams, while quantum computation of eigenenergies typically requires controlled operations with a long coherence time~\cite{miessen2023quantum,Ge19}.
Efficient estimation of spectral properties has become a topic of increasing interest in this noisy intermediate-scale quantum era~\cite{bharti2021noisy}.




A potential solution to efficient spectral property estimation  is to extract the spectral information from the dynamics of observables, rather than relying on real probes such as scattering spectroscopy, or direct computation of eigenenergies. This approach capitalises on the basics in quantum mechanics that spectral information is naturally carried by the observable's dynamics~\cite{roushan2017spectroscopic,lee2021simulation,valdes2017fourier,villa2019unraveling,villa2020local,chan2022algorithmic,baez2020dynamical}.
In a solid system with translation invariance, for instance, the dynamic structure factor, which can be probed in spectroscopy experiments~\cite{baez2020dynamical,kowalewski2017simulating}, reaches its local maximum when both the energy and momentum selection rules are satisfied.
Therefore, the energy dispersion can be inferred by tracking the peak of intensities in the energy excitation spectrum. 
Inspired by spectroscopy, a straightforward way to detect spectral features of model systems is by directly simulating spectroscopy~\cite{chiesa2019quantum,lee2022simulating,lee2021simulation,baez2020dynamical} using quantum computers, which often requires the measurement of an unequal-time correlator on a thermal state with long-time evolution. 
Another similar idea is to extract spectral information by post-processing the time-dependent signals~\cite{roushan2017spectroscopic,valdes2017fourier,villa2019unraveling,villa2020local,matsuzaki2021direct,zintchenko2016randomized,sugisaki2021quantum,sugisaki2021bayesian,chan2022algorithmic} which is usually ancilla-free.  For example, Zintchenko and Wiebe~\cite{zintchenko2016randomized} proposed to estimate the spectral gap by using Bayesian inference of the measurement outcomes generated by applying random unitaries (see other similar works~\cite{sugisaki2021quantum,sugisaki2021bayesian}). 
Wang {\it et al.}~\cite{wang2012quantum,wang2016quantum} and Stenger~{\it et al.}~\cite{stenger2022simulating} proposed detecting energy differences by measuring the dynamical response of a quantum system when coupled to a probe qubit.
Recently, Chan {\it et al.}~proposed to perform the Fourier transform of observable's dynamics and innovatively use shadows~\cite{huang2020predicting} to estimate many observables simultaneously~\cite{chan2022algorithmic} (see other relevant works~\cite{roushan2017spectroscopic,valdes2017fourier,villa2019unraveling,villa2020local,gu2023noise}).~Nevertheless, methods that extract spectral information from dynamics face challenges in achieving high-precision results with short-time dynamics. \sun{In addition, it is generally hard to design appropriate spectroscopic protocols for many-body systems.}
A pressing question is whether we can probe the spectral features and obtain high-precision results with fewer quantum resources.

In this work, we introduce a spectroscopic method that links spectroscopy techniques and quantum simulation while addressing the above challenges in probing transition energies and excitation spectra. We show how a spectral detector can be effectively realised by processing the dynamics of observables with time intervals drawn from a defined probability distribution.
The maximum time complexity is found to be logarithmic in precision under assumptions similar to those used in eigenenergy estimation~\cite{zeng2021universal,lin2021heisenberglimited,wang2022quantum,huo2023error,yang2024resource,dong2022ground}, which enables high-precision simulation with short-time dynamics. 
The essential requirement of the spectroscopic method is a nonvanishing observation of the target transitions between eigenstates, which depends on the initial state and the observable. \sun{We illustrate how this nonvanishing observation can be guaranteed in some representative many-body systems with quasiparticle number conservation.}
In terms of practical implementation, our method only requires the realisation of time evolution $e^{-i\hatH t}$, in contrast to existing algorithms for eigenenergy estimation or simulated spectroscopy, which rely on controlled unitaries as a subroutine, either using Hamiltonian simulation~\cite{dong2022ground,wang2022quantum,lin2021heisenberglimited,zeng2021universal,chiesa2019quantum,huo2023error} or variational algorithms~\cite{lee2021simulation,lee2022simulating}. We further analyse the noise robustness on spectral property estimation, \sun{including both coherent and incoherent noise}. \sun{In the presence of device noise, the total running time maintains polynomial in inverse precision, showing advantages over existing approaches such as standard quantum phase estimation and Fourier transform of the time signals~\cite{roushan2017spectroscopic,gu2023noise,villa2020local,chan2022algorithmic}. We test the performance of the error mitigation strategy and the spectroscopic protocol by including the device and statistical noise in the numerical simulation.
}
Finally, we investigate our method in cases of quantum spins, bosons,  and fermions by numerics and simulation on IBM quantum devices.  We show how transition energies and excitation spectra of spin-lattice models can be probed with IBM quantum devices with 13 qubits and over 350 two-qubit gates.

\section{Results} 

\noindent \textbf{Motivation.}
The critical element in spectroscopy is the emergence of frequency excitations (and momentum excitations in translationally invariant systems)  corresponding to the elementary excitations between eigenstates, which enables us to probe the transition energies as well as excitation spectra. 
Nevertheless, there are several constraints to conventional spectroscopy approaches, including (1)  the perturbed system $\rho$ is in an equilibrium state $[\rho, {H}] = 0$, (2) the perturbation is weak and the linear response theory holds.
As there is no coherence of the initial state, ($\rho$ is diagonal in the eigenbases $\ket{n}$ of the Hamiltonian, $\braket{n|\rho|n'} = \delta_{nn'}e^{-\beta E_n}$),  we can only probe properties in the equilibrium phase. 

The dynamical structure factor, which can be obtained from  spectroscopy experiments, reflects the energy resonance between eigenstates $\ket{n'}$ and $\ket{n}$ and can be expressed as
$ 
    S(\omega) = \sum_{n,n'} \rho^{nn} A_{n',n} \delta(E_{n'} - E_n -\omega ) 
$
with   $A_{n',n} = \braket{n|\hat{O}_1^{\dagger}|n'}\braket{n'|\hat{O}_2|n}$ and observables $\hat{O}_1$ and $\hat{O}_2$ (e.g.~spin operators). The dynamical structure factor
$S(\omega)$ is a Fourier transform of a two-point unequal-time correlation function $C(t) = \tr(\rho \hat{O}_1^{\dagger}(t) \hat{O}_2 )$ in the Heisenberg picture on the equilibrium state $\rho$, $S(\omega) 
 = \int_{-\infty}^{+\infty} C(t)e^{i\omega t} dt$.
A straightforward way to detect the transition energy is by directly simulating the spectroscopy process, as studied in \cite{chiesa2019quantum,stenger2022simulating,lee2022simulating,baez2020dynamical}. However, it is less efficient since the time complexity for realising the spectral function $S(\omega)$ is large and an ancillary qubit is required for measuring $C(t)$ when using the Hadamard-test circuit. In addition, we need to prepare a thermal state $\rho$ and can detect the equilibrium properties only.   
This raises the question of whether we can reduce the simulation time while maintaining simulation accuracy.

\vspace{6pt}
\noindent \textbf{Framework.}
Here we develop a framework for estimating transition energies between the eigenstates of a quantum many-body system.
Let us consider a quantum operation
$ 
    \mathcal{G}(\rho, \omega) = \sum_{n,n'\geq 0}  \rho^{n'n}\ket{n'}\bra{n} p_{\tau}(E_{n'} - E_n - \omega),
$
where $\rho^{n'n} := \bra{n'}{\rho} \ket{n}$ and a function $p_{\tau}(\cdot)$ that selects the energy difference between eigenstates $\ket{n'}$ and $\ket{n}$ is introduced, for instance, the Gaussian function $p_{\tau}(\omega) = \exp(- \tau^2\omega^2)$. With a properly selected observable $\hat{O}$, we can obtain the measurement outcome 
$
		G(\omega)= \tr[ \mathcal{G}(\rho, \omega) \hat{O} ] 
$
which can be expressed by
\begin{equation}
\begin{aligned}
		G(\omega) 
		&:= \sum_{n,n'\geq 0} \Cohere_{n',n} p(\tau(E_{n'}-E_n - \omega)),
\end{aligned}
\label{eq:C_omega}
\end{equation}
where $\Cohere_{n',n} := \rho^{n^{\prime} n}\langle n|\hat{O}| n^{\prime}\rangle $ represents the state-and-observable dependent coherence, yet is time-independent; $\Cohere_{n',n}$ can also be regarded a spectral weight associated with the initial state and observable.
Here, we first arrange that $p_{\tau}(\omega) = p(\tau \omega)$, such that  $\tau$ is coupled with $\omega$.
The quantity $G(\omega)$ contains the information on transition energies $\Endiff_{n',n}:= E_{n'} - E_n$. Specifically, given a proper $p(\cdot)$ and a large coherence $\Cohere_{n',n}$ between $\ket{n'}$ and $\ket{n}$, $G(\omega)$ takes its local maximum when $\omega$ approaches $\Endiff_{n',n}$, and thus serves as a spectral detector. 
{$G(\omega)$ characterises similar features as that of the dynamic structure factor, but its realisation will require fewer quantum resources than the dynamic structure factor. }
 
 
A question is how to effectively implement the quantum operation and estimate $G(\omega)$ in \autoref{eq:C_omega}.
A natural idea is to effectively realise $G(\omega)$ by real-time dynamics, which is usually easy to implement on quantum simulators. 
To do so, we consider a Fourier transform of $G(\omega)$. Specifically, the dual form of $p$ via its Fourier transform is given by
$
	\tilde g(t) := \int_{-\infty}^{+\infty} p(\tau\omega)  e^{i \tau \omega t} d (\tau\omega),
$ and its inverse form
$ 
	p(\tau \omega) = \frac{1}{2\pi} \int_{-\infty}^{+\infty}  \tilde g(t) e^{-i \tau \omega t} dt.
$  
Consider the normalised function $g(t) = |\tilde g(t)|/ c $ with the normalisation factor $c:= \int_{\infty}^{\infty}|\tilde g(t)|dt $ and phase $e^{i\theta_t}:=  \tilde g(t)/|\tilde g(t)|$, and we have  
$ 
	 g(t) = \frac{1}{c} \int_{-\infty}^{+\infty} p(\tau\omega) e^{-i\theta_t} e^{i \tau \omega t} d (\tau\omega)
$
and its dual form
$
	p(\tau \omega) = \frac{c}{2\pi} \int_{-\infty}^{+\infty}  g(t)e^{i\theta_t} e^{-i \tau \omega t} dt.
$  
Plugging the Fourier transform of $p$ into \autoref{eq:C_omega}, the spectral detector takes the form of
\begin{equation}
\begin{aligned}
		G(\omega) 
		&= \frac{c}{2\pi} \int_{-\infty}^\infty G(\tau t) g(t)e^{i\theta_t}  e^{i \tau \omega t}dt 
\end{aligned}
\label{eq:C_omega_FT}
\end{equation}
with
$ 
	G(t) := \tr [ \hat O  \rho (t) ]
$  in the Schr\"odinger picture.~\autoref{eq:C_omega_FT} indicates that we can first obtain $\tr [ \hat O \rho (\tau t) ] $ by measuring $\hat{O}$ on the time-evolved state at time $\tau t$, and then use \autoref{eq:C_omega_FT}  to obtain $G(\omega)$.
Since  $g(t)$ is normalised and hence can be regarded as a probability distribution, a single-shot estimator of $\hat{G}(\omega)$ takes the form of 
\begin{equation}
\hat{G}(\omega) = \frac{c}{2\pi}    \hat{o} (\tau t_i)    e^{i\theta_{t_i}}  e^{i \tau \omega t_i}, 
 \label{eq:C_omega_estimate}
\end{equation}
where $t_i$ is sampled from the distribution  $g(t)$ which is $\tau$-independent, and $\hat{o} (\tau t_i) $ is an unbiased estimate of $\tr [\hat O(\tau t) \rho]$.
One can verify that $\hat{G}(\omega)$ is an unbiased estimator of $G(\omega)$, $G(\omega) = \mathbb E \hat{G}(\omega), $
where the average is taken over the probability distribution~$g(t)$.
Here, we choose to treat $\tau \omega$ as a whole in \autoref{eq:C_omega_FT} when performing the Fourier transform. As such, $g(\cdot)$ is a $\tau$-independent probability distribution, and the total time length for evaluating $G(\omega)$ is $\tau t$, which is extended by a factor $\tau$ compared to original Fourier transform. 
An advantage of this treatment is that it enables a simple evaluation of the resource requirements using different functions $p$ within a unified framework, rather than a case-by-case analysis.
Alternatively, we can treat $\tau$ as a variable (in $p_{\tau}(\cdot)$) that is independent of $\omega$, and hence will not be Fourier-transformed. These two ways are proven to be equivalent in Methods.




There are two necessary requirements for inferring the transition energy $\Endiff_{n',n}$ using the spectral detector in \autoref{eq:C_omega}: (1) a sufficiently large state-and-observable coherence $\Cohere_{n',n}$, and (2) a proper function $p(\omega)$ (or equivalently its dual form $g(t)$) that ensures that $\Endiff_{n',n}$ can be distinguished from other transition energies.
The coherence $\Cohere_{n',n}$ is time-independent yet dependent on the state and the chosen observable. 
In the following sections, we first discuss the selection of the initial state and observables in order to satisfy the first condition. We then show that by choosing the Gaussian function $p(\omega)$, our method only requires short-time dynamics to achieve high-precision energy estimation.

In a concurrent work \cite{yang2024resource}, Yang \textit{et al.} developed a similar method for evaluating the energy gap by introducing a so-called tuning parameter to increase the convergence rate, although the choice of the tuning parameter is not discussed and is fixed in numerics. Within our framework, the tuning parameter can be regarded as a separate parameter that is irrelevant to the Fourier transform.

\vspace{8pt}
\noindent
\textbf{Spectroscopic protocol.} 
Now, we discuss the selection of the initial state and observables in several representative quantum systems such that $\Cohere_{n',n}$ is nonvanishing.
Several works have discussed how to probe the excitation spectra by engineering the controllable quantum system~\cite{jurcevic2015spectroscopy, senko2014coherent,villa2020local,bleicker2018strong,vijayan2020time,villa2019unraveling,villa2021finding,villa2021out}. 
The basis of the spectroscopic protocols is that the initial state is populated by a superposition of low-lying excited states, which could be expressed as $\ket{\psi_0} = \sum_{j} a_j \ket{j} 
$ where $\ket{j} $ is an eigenstate of $\hatH $.
The initial state,  generated by a global or local operation $\hat{B}$,  could be formally expressed as
$
	\ket{\psi_0} = b^{-\frac{1}{2}} \hat{B} \ket{0}.
$
Here  $b:= \braket{\psi_0 |  \hat{B} ^{\dagger} \hat{B}| \psi_0
}$ is the normalisation factor, and $\ket{0}$ denotes the ground state of $\hatH $.  The observation in relation to the transition between the excited state and the ground state is
$	\langle n |\hat{O} | 0 \rangle    = \sum_{j}  a_j b^{\frac{1}{2}} \braket{ n | \hat{O}  \hat{B} ^{-1}   |j}$, and  the state-and-observable coherence can be nonzero by choosing an appropriate $\hat {B}$.

In solid systems, translation invariance is usually conserved and the Hamiltonian satisfies  $[{H}, \hat{\mathbf{P}}] = 0$, where $\hat{\mathbf{P}}$ is the total momentum operator. Each eigenstate $\ket{n}$ has a well-defined momentum of $\mathbf{p}_n$,  $\hat{\mathbf{P}} \ket{n} = \mathbf{p}_n \ket{n}$. 
Suppose we choose the observable at position $\mathbf{x}$, 
$
	\hat{O}(\mathbf{x}) = e^{-i \hat{\mathbf{P}} \cdot \mathbf{x}} \hat{O} e^{i \hat{\mathbf{P}} \cdot \mathbf{x}}
$
with $\hat{O} := \hat{O}(\mathbf{0})$.
Taking a space Fourier transform of $G_x(\omega)$ in \autoref{eq:C_omega}   with the observable $\hat{O}(\mathbf{x})$, $G_k(\omega) = \int d \mathbf{x}  e^{-i \mathbf{k} \bf x}   G_x(\omega) $, we have   
\begin{equation}
\label{eq:G_k_omega}
	G_k(\omega) =  2\pi  \sum_{n,n'=0} \Cohere_{n',n} p(\tau(E_{n'}-E_n - \omega)) \delta(\mathbf{p}_{n'}-\mathbf{p}_n-\bf k),
\end{equation}
when the system size reaches infinity and the translation invariance of the initial state is broken; for instance, the translation invariance of the state after applying a local operation to a single site is broken. 
The above equation indicates that translation invariance imposes selection rules of both energy and momentum for transition between eigenstates. This is the key element in spectroscopy detection, where elementary excitations between eigenstates emerge when the selection rules of energy and momentum are both satisfied. 

Although in general a large coherence cannot be guaranteed (as this problem is quantumly hard; see Discussion), there are certain cases where we can manipulate the system to meet this requirement and allow for probing specific types of excitations. 
In a weakly coupled system, for instance, the particle excitations induced by perturbations are restricted to a manifold of single-particle excitations, as discussed in \cite{jurcevic2015spectroscopy,villa2020local}. In this limit, an excited state could be understood as a single particle (or quasiparticle) excitation above the ground state $\ket{0}$, $\ket{n} = \hat{\gamma}_{\mathbf{q}} ^{\dagger} \ket{0}  $, carrying momentum ${\mathbf{q}} $, where $\hat{\gamma}_{\mathbf{q}} ^{\dagger}$ is the creation operator of a particle with momentum~${\mathbf{q}} $. Note that $\hat{\gamma}^{\dagger}$  does not have to be the same creation operator in the Hamiltonian and could be either the creation operator of a particle or quasiparticle.
The excitation generated from the ground state can be observed by choosing
$ 
\hat{O} = \sum_{\mathbf{p}}  A_{\bf p} \hat{\gamma}_{\mathbf{p}} + A_{\mathbf{p}}^* \hat{\gamma}_{\bf p}^{\dagger},
$
and we have $\langle 0 |\hat{O} | n \rangle = A_{\bf q} \delta_{\bf q p}$. 
To probe the single particle excitation above the ground state with energy $E_n - E_0$,  we may prepare the state containing a small perturbation with momentum $\mathbf{q}$ as
$\ket{\psi_0} \approx \ket{0} + \beta \ket{n}$ where $\beta$ is a small constant (see \cite{jurcevic2017direct} and Supplementary Section III). This choice of state and observable enables a nonzero observation as $\Cohere_{n,0}=\beta A_{\mathbf{q}}$, and the excitation spectrum can thus be observed.

We give some comments on more general cases.
Let us consider the Hamiltonian which conserves either the particle or quasiparticle number. We denote the eigenstates of $H$ as $\ket{n}$ and the vacuum state $\ket{0}$, and suppose the system has $L$ modes (either in real space or momentum space).
Any single-particle state $\hat{\gamma_{\bf p}}^{\dagger}|0\rangle$, which is generated by creating a particle at the $\bf p$th mode, could be decomposed into the basis of $\ket{n}$, and  the decomposition coefficient is denoted as $\braket{n| \hat{\gamma_{\bf p}}^{\dagger}|0 } :=  c_{n, {\bf p}}$.
Given that the quasiparticle picture holds, we can prepare the initial state  $\ket{\psi_0} = \frac{1}{\sqrt{1+\beta^2}} (1+ \beta \hat{\gamma_{\bf p}}^{\dagger}) |0\rangle$. 
Then  the initial state coherence is $\rho^{n0} = \beta { \braket{n|\hat{\gamma}_{\bf p}^{\dagger}|0} }/{(1+\beta^2)} = {\beta c_{n, {\bf p}} }/{(1+\beta^2)}$. The transition amplitude observed by the annihilation operator on the $\bf p'$th mode is $\braket{0|\hat{\gamma}_{\bf p'}| n} = c_{{\bf p'}, n}^*$. Therefore the coherence observed by $\hat{O}$ is given by \begin{equation}
\label{eq:coherence_condition}
    \Cohere_{n,0} =\sum_{\bf p'} \frac{\beta A_{\bf p'}}{1+\beta^2} c_{{\bf p'}, n}^*c_{n, {\bf p}}.
\end{equation}
A similar fashion can be used to probe the transition energy between the excited states $\ket{n}$ and $\ket{n'}$ with the same particle number.
We can prepare the initial state by creating two particles as $|\psi_0 \rangle = \frac{1}{1+\beta^2} (1+ 
\beta \hat{\gamma}_{\bf p}^{\dagger} )(1+ \beta 
 \hat{\gamma}_{\bf p'}^{\dagger}  ) |0\rangle$, and choose the observable 
$ \hat{O} = \sum_{\bf p,p'} A_{\bf p,p'}\hat{\gamma}_{\bf p}^{\dagger} \hat{\gamma}_{\bf p'}$ which conserves the particle number.
More detailed derivations of the coherence $\Cohere_{n',n}$ can be found in Methods.
Since the Hamiltonian can be engineered, we can specifically engineer quantum devices to detect the energy excitation spectra of various quantum systems. Typical transitions could be observed by preparing initial states consisting of the desired superposition (see Methods and Supplementary Information for discussions).

When the target quantum system's eigenstates can be labelled, we could track this label and detect the energy difference between these two states. 
This condition is satisfied when the quasiparticle picture holds. For example, in the case of Fermi liquids, the low-energy eigenstates are labelled by a set of quantum numbers $n_{\mathbf{p},\sigma} = 0, 1$ (i.e., the occupation numbers)~\cite{wen2004quantum}, and we can ensure a nonzero coherence in the thermodynamic limit.  This approach can thus be useful in detecting Fermi-liquid and some non-Fermi-liquid systems such as BCS types of systems. Another application of this spectroscopy approach is that we could detect if a system exhibits a non-Fermi-liquid feature and identify the transitions from a Fermi-liquid to a non-Fermi-liquid phase. 


In addition to the above example,  spectroscopy protocols have demonstrated that excitations can be effectively created in many quantum systems through numerical simulations~\cite{bonvca2019spectral,jansen2020finite,schrodi2017density,villa2019unraveling,villa2020local} or analogue quantum simulations~\cite{jurcevic2015spectroscopy,senko2014coherent,roushan2017spectroscopic} in cases of the Bose-Hubbard model~\cite{roushan2017spectroscopic,villa2019unraveling}, spin chains~\cite{jurcevic2015spectroscopy,yoshimura2014diabatic,senko2014coherent}, and disordered systems~\cite{villa2021finding}. 
It is worth mentioning how our method differs from these spectroscopy protocols.
In the first place, the aim here is energy extraction instead of the experimental observables in spectroscopy experiments.
In neutron spectroscopy experiments, for example, the externally injected neutrons act as a weak perturbation, which can probe the intrinsic properties of materials in the equilibrium state. In global or local quench spectroscopy,  the eigenstates are assumed to be nearly unchanged after the quench, which is similar to that in spectroscopy experiments, although the state will be driven out of equilibrium. 
Our method does not put such constraints on the eigenstates and is applicable for probing more general quantum many-body systems given that the coherence associated with the excitation is nonvanishing.
Moreover, quench spectroscopy is essentially limited to analogue simulations, and thus cannot be used to probe the unequal-time correlation function or systems without particle conservation. Our framework enables a direct extension to the probing of higher-order time correlation functions.
For example, we can probe the nonlinear spectroscopic features~\cite{nandkishore2021spectroscopic,wan2019resolving} of the target system by applying perturbations multiple times (at time $t_i$ for different $i$) and obtaining the corresponding higher-order time correlation functions.
The advantages over analogue quantum simulations are discussed in Supplementary Section III.
Recently, Ref.~\cite{kokcu2024linear} presents a way of simulating the correlation functions in a linear response framework. Regarding the simulation of the spectroscopic features, this could be regarded as a special case of our method when taking the filter operator to be an identity and the comparisons can be found in Supplementary Information.





\vspace{0.15cm}

\vspace{8pt}
\noindent
\textbf{Error analysis and resource requirement.} We then discuss the computational complexity in transition energy estimation.
In order to observe $\Endiff_{n',n}$,  the coherence $\Cohere_{n',n}$ associated with  the transition is assumed to be nonvanishing, i.e. $\Cohere_{n',n}$ can be lower-bounded by a polynomial of the inverse of system size $N$ as $ 
	\Cohere_{n'n} \geq {\Omega} (\textrm{Poly} (\frac{1}{N})).
$
This condition can be satisfied in certain cases as discussed above. For instance, in the above particle-number conserved case, the initial state is populated by a collection of low-lying excited states, in which $\Cohere_{n,0}$ is nonvanishing while $\Cohere_{n,n' \geq 1}$ is of higher order.  
To simplify the discussion, we will sort $\Endiff_{n',n}$ in ascending order hereafter, and denote the ordered transition energies as $\Endiff_{j}$ and the spectral gap difference as $\Resolution_j:=\min(\Endiff_{j+1}-\Endiff_{j},\Endiff_{j}-\Endiff_{j-1})$.
\sun{It is worth noting that in the excitation spectrum analysis,  the index $j$ of the gap difference runs over the possible allowed excitations which should satisfy the momentum and coherence selection rules.}


The objective is to estimate the transition energy $\Endiff_{j}$ within an error $\varepsilon$, i.e., $|\hat{\Endiff}_{j} - {\Endiff}_{j}| \leq \varepsilon$.
\sun{From now on, this is converted into an estimation problem. }
Intuitively, $\Endiff_{j}$ can be determined by searching peaks of the absolute value of $G(\omega)$ in the frequency domain~$\omega$. 
To see this point, let us rewrite $G(\omega)$ as 
\begin{equation}
	G(\omega) = \Cohere_{j} p(\tau(\Endiff_j - \omega)) + \sum_{i \neq j} \Cohere_{i} p(\tau(\Endiff_i - \omega)). 
	\label{eq:G_rho_omega_expansion_E_j}
	\end{equation}
Given a large $\tau$, the first term will be dominant in $G(\omega) $, and thus $G(\omega)  $ is close to the peak value $\Cohere_{j}$ only if $\omega$ is close to $\Endiff_j$.  
In addition, in the vicinity of $\Endiff_j$, i.e. $\omega \in [a_L, a_R]$, we have $\partial^2 G(\omega)  /\partial^2 \omega < 0$. The transition energy can thus be estimated by finding the peak of the estimate $\hat{G}(\omega)$ concerning finite measurement, i.e.,
$\hat{\Endiff}_j = \mathrm{argmax}_{\omega \in [a_L, a_R]}  |\hat{G}(\omega)|.$ 
\sun{
The results established in spectral filter methods can be used to analyse the simulation cost~\cite{lin2021heisenberglimited,zeng2021universal,wang2022quantum,zintchenko2016randomized,chan2022algorithmic}.}
The total running time (maximal evolution time $\times$ sampling numbers) in~\cite{zeng2021universal} or~\cite{zintchenko2016randomized} is proven to reach the Heisenberg limit for eigenenergy estimation as  $\mathcal{\tilde O} (\varepsilon^{-1})$. Note that it is sub-optimal concerning the maximal time complexity, which is a more important metric for implementation with near-term devices due to their short coherence time. 
We shall see that a relatively small $\tau$ of the order of $\log(\varepsilon^{-1})$ suffices to suppress contributions from the other transition and enables accurate estimation of $\Endiff_j$, as found in the eigenenergy estimation task \cite{wang2022quantum}.

We illustrate the proof idea in the following and refer to Methods for details. For~$\tau = \mathcal{O}(\Resolution_j^{-1} \log^{\frac{1}{2}}(\varepsilon^{-1}))$, one can show that   (1)
$|\hat G(\omega) - \Cohere_j |  < c_1 \tau^2\varepsilon^2,~\forall |\omega - \Endiff_j| \leq   0.5 \varepsilon$; (2) $|\hat G(\omega) - \Cohere_j |  > c_2 \tau^2\varepsilon^2 $, $\forall  |\omega - \Endiff_j| \in [ \varepsilon,   0.1 \Resolution_j)$. Here, $c_1$ and $c_2$ are some constants that are irrelevant to $\tau$ and $\varepsilon$ yet are dependent on $\Cohere_{j}$, and $\Resolution_j$ is the gap difference (see Lemma 1 in Methods).
This indicates that the distance  $d =  |\Cohere_{j} - \hat G(\omega)|$ is modulated by the estimation error $|\omega - \Endiff_j|$, and consequently, $\Endiff_j$ can be distinguished from the other transitions when $\omega$ approaches $\Endiff_j$.
It is assumed that $\Cohere_{j}$ is nonvanishing, otherwise the peak will not appear. 
Given the theoretical guarantees, we first get an estimate of  $G(\omega)$ by \autoref{eq:C_omega_estimate}, and   $\Endiff_j$ is then determined by finding the peak of $\hat G(\omega)$ over frequency $\omega$. 
Note that calculating $\hat G(\omega)$ as a function of $\omega$ is a task involving purely classical computing,  and does not cost any quantum resources.



A remaining issue is the error from the finite cutoff when evaluating the integral in \autoref{eq:C_omega_FT} with the integral range from $(-\infty, +\infty)$ to $[-T, +T]$. 
We highlight that our framework allows for a straightforward evaluation of the truncation error and the requirement for $T$.
One can easily find that $T = \mathcal{O}({\log^{\frac{1}{2}}(\varepsilon^{-1})})$ suffices to guarantee the cutoff error below $\varepsilon$. Therefore, a finite cutoff only contributes to a logarithmic factor to the circuit complexity. 
The algorithmic complexity concerning a finite $\tau$, a finite cutoff $T$ for the integral and a finite number of measurements is shown below.
Given nonvanishing $\Cohere_j$, to guarantee that the estimation $\hat \Endiff_{j}$ is close to the true value  $\Endiff_{j}$  within error  $ \varepsilon$, we require the maximum time   $\mathcal{O} ( {\Resolution_j}^{-1} \mathrm{log} (1/\varepsilon))$ and the number of measurements $\tilde{\mathcal{O}} ( \varepsilon^{-4}\Cohere_j^{-2}\Resolution_j^4)$.
The rigorous description and the proof can be found in  Methods.







\sun{It is worth noting that the energy excitation of a periodic system typically is a collective behaviour and the spectral weights are concentrated around the allowed excitations. This can be seen from \autoref{eq:G_k_omega} where the momentum selection rule restricts the allowed excitations to those that satisfy this condition. In addition, the coherence condition will impose restrictions on the possible excitations that can be probed by the initial state and the observable; for example, the selection of certain mode $\mathbf{p}$ in \autoref{eq:coherence_condition} in cases where single-particle excitations are primarily concerned. This implies that in certain systems, the total number of allowed excitations may not grow exponentially with respect to the system size. }  

The key element in this protocol is the measurement of observable dynamics $G(t)$, which can be implemented using Trotterised product formulae which are ancilla-free. Here, we consider using an improved Trotter formula developed in~\cite{zeng2022simple}, which can
effectively eliminate the Trotter error without sacrificing the precision advantage or introducing any ancillary qubits. Suppose the Hamiltonian can be decomposed into Pauli bases as $H = \sum_{l=1}^L \alpha_l P_l$ with $P_l$ being the Pauli operator and $\| \alpha \|_1 := \sum_l |\alpha_l|$. The gate complexity for running the time evolution is $O((\| \alpha \|_1 t )^{1 + o(1)}L \log(\varepsilon^{-1}))$, and the small overhead in the power depends on the order of the Trotter algorithm.
The asymptotic scaling for lattice models with nearest-neighbour interactions is $O((Nt)^{1+o(1)} \varepsilon^{-o(1)})$ with the system size $N$. In the situation where the evolution time is $t = \mathcal{O} (N \Resolution_j^{-1})$, the gate complexity scales as~$O(N^{2+o(1)}  \varepsilon^{-o(1)} \Resolution_j^{-1})$. {We will then numerically investigate the performance of the spectroscopic approach with an increasing system size}.

\sun{A caveat is that some systems may become thermalised up to a certain time scale, for example, nonintegrable systems. In this situation, the initial state information is lost up to a certain time, and the microscopic feature of the system may not be resolvable by long-time evolution. We note that the spectral information is extracted from a relatively shorter time scale, which to some extent avoids this issue.
It would be an interesting direction to discuss the appropriate time scale for resolving the target transitions.
}

\vspace{6pt}
\noindent\textbf{Error effect in transition energy estimation.} 
\sun{
The preceding sections have discussed the algorithmic error due to finite simulation time and the uncertainty error.
In this section, we will first discuss the algorithmic errors from Trotterisation or imperfect control of the Hamiltonian dynamics. Then, we will discuss errors due to imperfect quantum operations.
}

The effect of algorithmic errors from Trotterisation or imperfections of the Hamiltonian can be regarded as an additional term $\delta H$ to the ideal Hamiltonian.
For lattice models, the Trotter error conserves translation invariance and simply results in a shift to the energy spectrum.
The new Hamiltonian also conserves translation invariance $\hat{\mathbf{P}} \ket{\nu} = \mathbf{p}_{\nu}  \ket{\nu}$ where   the eigenbasis of the new Hamiltonian is denoted as $\ket{\nu}$.
The observable expectation is given by
$ 
	\braket{\nu|\hat{O}(\mathbf{x})|\nu'} = e^{i ({\mathbf{p}}_{\nu'} - {\mathbf{p}_{\nu} )}\mathbf{x}} \braket{ \nu | \hat{O}| \nu'}
$. 
The  momentum selection rule still holds, which imposes $\mathbf{k}=\mathbf{p}_{\nu'}-\mathbf{p}_{\nu}$ with the excitation between $|\nu'\rangle$ and $|\nu \rangle$ being connected by wavevector $\mathbf{k}$.
However, noise will result in a deviation in transition energies. Up to the first-order perturbative expansion,  $G(\omega)$ becomes
\begin{equation}
	G(\omega) = \sum_{\nu',\nu} \Cohere_{\nu',\nu} \proj (\tau(\Endiff_{\nu',\nu}     - \omega)),
\end{equation}
where  $\Endiff_{\nu',\nu} = \Endiff_{n',n}+   \braket{n'| \delta H| n'} - \braket{n| \delta H| n}$. The error results in a deviation of the resolved energy difference from the ideal one.

The first-order change in the $\nu$th eigenstate is related to the unperturbed one by $\ket{\nu} = \ket{n} + \sum_{m \neq n} A_{mn} \ket{m}$ with $A_{mn} = \Delta_{nm}^{-1}  {\braket{m|\delta H |n}}  $.
The coherence has a deviation from the unperturbed case,
\begin{equation}
\begin{aligned}
\delta \Cohere_{\nu',\nu}  
 =  \alpha_{n'n} \braket{n| \hat{O}|n'}  + \beta_{n'n} \rho^{n'n}, 
\end{aligned}
\end{equation}
where $\delta \Cohere_{\nu',\nu}  = \Cohere_{\nu',\nu}  
 - \Cohere_{n',n}$, and the coefficients are defined as $\alpha_{n'n}:=    \sum_m (A_{mn}\rho^{n',m} +  A_{mn'}^*\rho^{m,n} ) $ and $\beta_{n'n} := \sum_m( A_{mn'} \braket{n| \hat{O}|m} + A_{mn}^* \braket{m| \hat{O}|n'} ) $ which are related to the initial state and the observable, respectively.
Compared to the unperturbed case,    some eigenstates $\ket{m}$, which are absent in the original selection rule, also contribute to $\Cohere_{n',n} $ and hence change the spectral weight that can be observed by $G(\omega)$.
In the presence of the disordered term which breaks the translation variance, the peaks of $G(\omega)$ will be broadened.

\vspace{6pt}
\noindent\textbf{Resource requirement in the presence of device noise.}
Finally, we analyse the effect of device noise,   its mitigation and the resource requirement due to noise. We start with the discussion on errors in the process of implementation $e^{-iHt}$. Since this is a continuous process, a simple way to describe a physical noisy process is the following: the state remains unaffected with probability $\lambda \delta t$ while becomes a mixed state with probability $1 - \lambda \delta t$, with noise strength characterised by $\lambda$.
This process is described by
$
    \mathcal{E}_{\delta t} (\rho) = (1 - \lambda \delta t) \rho + \lambda \delta t  \rho_{\mathrm{mix}}
$, where $\rho_{\mathrm{mix}} = \frac{I}{2^N} $ is the maximally mixed state and $I$ is the identity matrix.
It is easy to see that the noise channel coupled with the unitary operator is only dependent on $t$ and the effective action of such a noise channel is a global depolarising noise.
To have an estimate of $\lambda$ under the global noise model, we can fit the experimental results in a similar vein to randomised benchmarking, which could be robust against state preparation and measurement noise~\cite{knill2008randomized}.  
The expectation value of a Pauli operator is \begin{equation}
\label{eq:global_ansatz_fit}
    \mathbb E \hat{o}_{noisy}(t) = \tr (  \prod_{\delta t}^T (\mathcal{E}_{\delta t} \circ \mathcal{U}_{\delta t})  (\rho_0 )) = \Lambda(t)^{-1} \mathbb E \hat{o}_{ideal}(t)
\end{equation}
with $\Lambda(t) =  e^{\lambda t}$.
This result indicates that under depolarising types of noise, the ideal expectation value can be obtained by multiplying the factor $\Lambda(t)$ to the noisy result. In some circumstances, the realistic circuit noise may be converted to a global white noise by Pauli twirling~\cite{wallman2016noise,knill2008randomized} and as studied in Refs.~\cite{dalzell2024random,foldager2023can}, shallow quantum circuits can scramble the noise into global white noise.  We numerically verify that the global white noise ansatz could be used to approximate physical noise in certain circumstances, such as local depolarising noise, with simulation results shown in Methods.

For more general noise and its mitigation, when the total noise strength is bounded, we can still mitigate errors by probabilistic error cancellation~\cite{temme2017error,endo2018practical,sun2021mitigating}, and the strategy is discussed in Supplementary Section II.1. 
Due to error mitigation, the variance of $G(\omega)$ will be amplified by $\max_t \Lambda(t) $.
In the presence of noise, the number of samples required by our method maintains polynomial in inverse precision $\mc{O} (\textrm{poly}(1/\epsilon))$. On the other hand, other methods based on the Fourier transform of the time signals or quantum phase estimation will require an amplified number of samples $\mc O (\exp(1/\epsilon) )$. Our method thus shows certain advantages over the existing methods when considering device noise. The simulation results incorporating noise and noise mitigation are shown in Methods. More detailed discussions on error mitigation and resource requirements can be found in Supplementary Section II.

\begin{figure}[t!]
\centering
\includegraphics[width=1.02\linewidth]{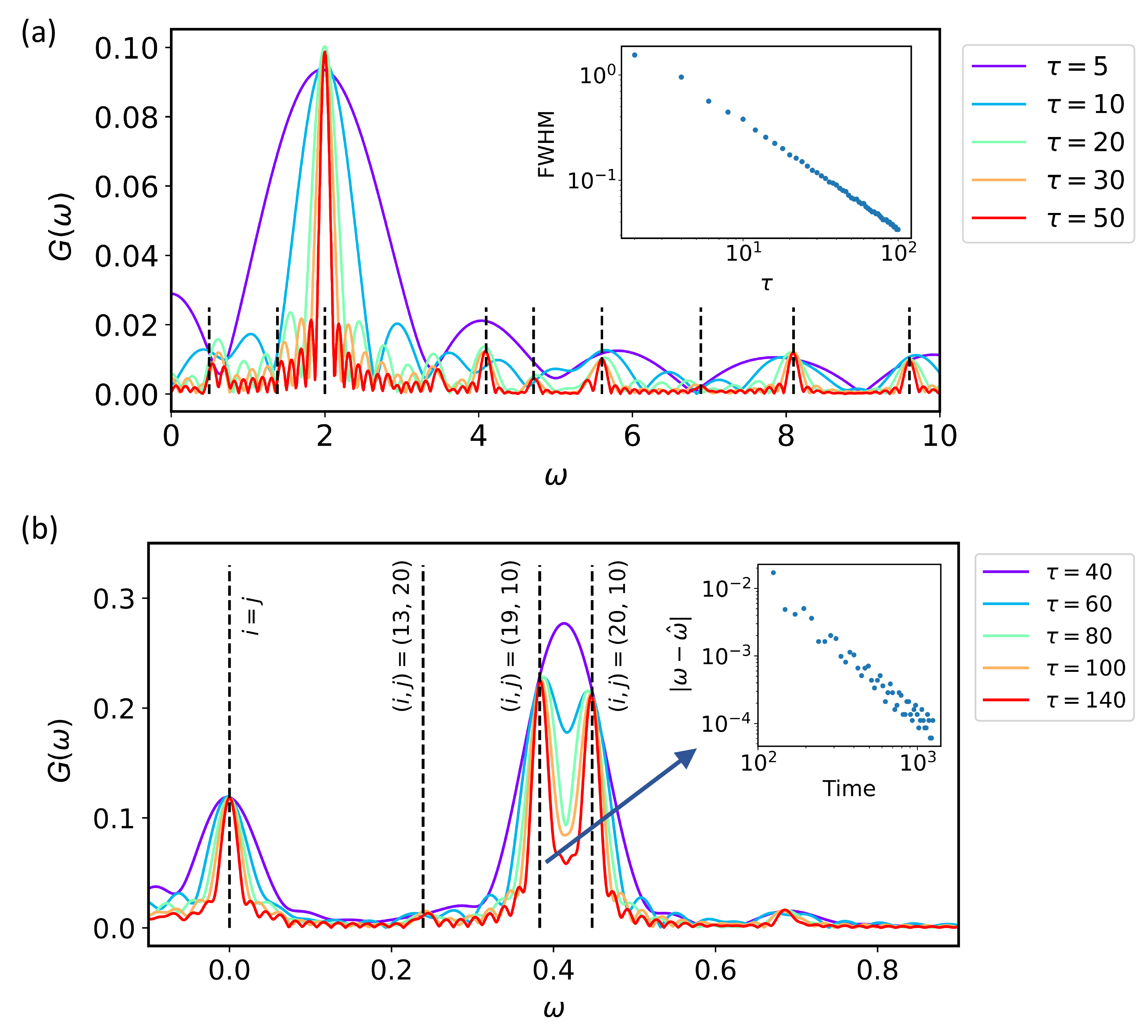} 
\caption{\textbf{Transition energy spectra search of the Heisenberg model and the LiH molecule}. \textbf{a}  The~7-site Heisenberg model.
The state is first initialised in a product state $\ket{+}^{\otimes N}$, and then a rotation is applied to the central qubit, which takes the form of $\hat{\sigma}^x_3 \ket{+}^{\otimes N}$, where the qubit
numbering starts from zero in this work.
The dashed vertical lines show the ideal transition energies whose coherence $\Cohere_{i,j}$ is above a certain threshold,  which are calculated by exact diagonalisation (ED). 
Inset: FWHM dependence on varying $\tau$.  The cutoff is chosen as $T_{\mathrm{cut}} = 1$.
\textbf{b}  The excitation spectrum for LiH. The initial state is prepared as a Hartree-Fock state. The vertical lines show the ideal transition energies $\Endiff_{i,j}$ calculated by ED.  
Inset: Estimation error for the dominant transition energy indicated by the arrow with varying total time.   
}
\label{fig:energy_spectrum}
\end{figure}

\begin{figure*}[t]
    \centering
    \includegraphics[width=0.95\linewidth]{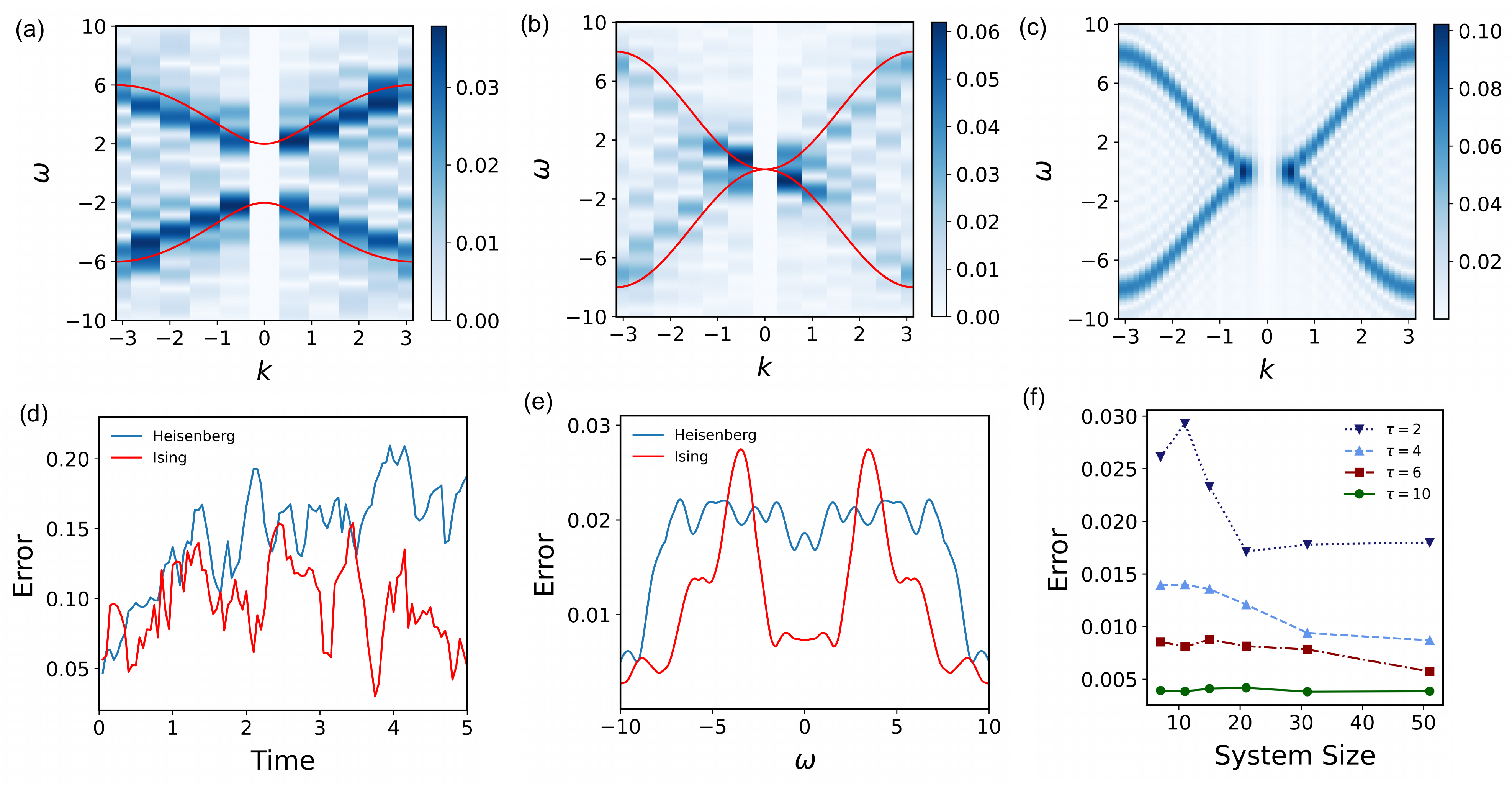}
    \caption{\textbf{The energy dispersions of 1D lattice models through the demonstration on the IBM quantum devices (a,b,d,e) and tensor-network simulation (c,f)}. \textbf{a} The excitation spectra $G(\omega)$ of an 11-site Ising model.  \textbf{b} The excitation spectra of a 13-site Heisenberg model. The results are obtained from measurements on IBMQ devices. The red lines in  \textbf{a} and  \textbf{b} represent the energy dispersions of infinite long spin chains obtained by analytic calculations.
    The intensities at $\mathbf{k} = 0$ have been removed.
    \textbf{c}  The energy dispersion for the 1D Heisenberg chain of length $N=51$ with $h_z = 0$ and $J=1$. The total time evolution is set as $T_{\mathrm{tot}}=\tau = 5$.
    \textbf{d}  Experimental simulation error for the measurement results of $\sum_i \braket{\hat{\sigma}_i^y(t)}/N$ under real-time evolution conducted on IBMQ devices, which is compared to results using noiseless Trotter formulae for both the Ising Hamiltonian (red line) and the Heisenberg Hamiltonian (blue line).
    \textbf{e}  Experimental simulation error for the measurement results $G(\omega)$ from IBMQ experiments in the frequency domain averaged over different momenta.
    The results in \textbf{d} and \textbf{e} are compared to those using noiseless Trotter formulae. The maximal time length is $T_{\mathrm{tot}} = 5$.
    \textbf{f} Scalability analysis for the gapless Heisenberg chain.  The numerical simulation results for different values of $\tau$ are compared with the approximately ideal case in which $\tau \rightarrow \infty$. The figure shows the maximum error of the intensity $G(\omega)$ in the momentum space for different time scales $\tau$ and system sizes. The error is calculated as $\mathrm{max}_k 
\overline{ |G_{k,\tau}(\omega) - G_{k,\tau \rightarrow \infty}(\omega)|}$ where the averaging has been taken over $\omega$. 
         }
    \label{fig:TFI_IBM}
\end{figure*}

\vspace{8pt}
\noindent


\noindent \textbf{Numerical studies and experimental demonstration on IBM quantum devices.}
We investigate the performance of our method for probing the energy excitation spectrum of quantum many-body systems by numerical simulation and experimental demonstration on IBMQ devices.
We consider a local perturbation applied to the initial product state, which generates a branch of low-lying excitations of the system.  Then the system is evolved under unitaries $U=e^{-iHt_i}$ with different times~$t_i$ drawn from the Gaussian distribution $g(t)$, followed by quantum measurement.


We first demonstrate how transition energies can be estimated.~Let us consider a one-dimensional (1D) Heisenberg model with an external field 
\begin{equation}
     \hatH  =J\sum_{i}\left(\hat{\sigma}^x_i \hat{\sigma}_{i+1}^x+\hat{\sigma}^y_i \hat{\sigma}_{i+1}^y+\hat{\sigma}^z_i \hat{\sigma}_{i+1}^z\right) + h_z \sum_{i}\hat{\sigma}^z_i,
    \label{eq:heis_hamiltonian}
\end{equation} 
with the periodic boundary condition, where $\hat{\sigma}^{\alpha}_i$ is the Pauli operator along the $\alpha$ axis acting on the site~$i$,  $J = -1$ is the ferromagnetic coupling strength, and $h_z$ is the external magnetic field.
To break the ground state degeneracy and make more excitations emerge, a small field $h_z = 10^{-2}J$ aligned in the $z$-direction is applied.
The observable is chosen as $\hat{\sigma}_{i}^y$ on each site, which reverses the quench operation. 
The transition energy spectrum of the Heisenberg model is shown in \autoref{fig:energy_spectrum}(a).
The peaks become sharper and thus distinguishable with increasing time.
We find that the transition energies for the Heisenberg model are estimated very accurately because the dominant transition is distinguishable from the other ones. 
\autoref{fig:energy_spectrum} inset shows the full width of the peak at half maximum (FWHM) with increasing time.



We then show results for estimating transition energies for molecular systems, taking LiH as an example.
The molecular Hamiltonian of LiH at the bond length $r = 5.0 \AA$ is encoded into six qubits by Jordan-Wigner mapping.
The observable is chosen as a particle-conserving operator $\hat{c}^{\dagger}_0 \hat{c}_1 + \hat{c}_0 \hat{c}^{\dagger}_1$ with fermionic   (creation)  annihilation operator $\hat{c}$ ($\hat{c}^{\dagger}$), which considers the transitions between low-lying excited states, in order to make $\braket{n|\hat{O}|n'}$ non-vanishing. 
The transition states are marked alongside the corresponding energy resonance in \autoref{fig:energy_spectrum}(b).
In molecular systems, the excited states of molecular systems are often closely spaced, resulting in interference between different transitions. As a result, the visible peak appears as an addition of different peaks, making it difficult to locate the true transitions from the peak. However, as time increases, the peaks become sharper and more peaks appear, which allows for the distinction of the true transition. The inset of \autoref{fig:energy_spectrum}(b) shows the estimation error with increasing time for the dominant transition.
It is worth noting the initial state remains a low-lying excitation above the ground state, and the first few low-lying excited states are degenerate.
As molecules become more complex, it is anticipated that more near-degenerate low-lying states will emerge, leading to an increase in peak interference and hence a more intricate spectrum. To distinguish small, adjacent peaks, we need to increase the evolution time.  
The results in \autoref{fig:energy_spectrum} show how transition energies can be estimated and how the simulation accuracy can be improved by increasing the simulation time. The evolution of peaks of other molecules with varying time and their spectral weights can be found in Supplementary Section IV.





Next, we show how the excitation spectra of lattice models can be probed with the IBM Kolkata quantum device.
We consider the 1D Ising model $H = \sum_i \hat{\sigma}_i^z \hat{\sigma}_{i+1}^z + 2 \sum_i \hat{\sigma}_i^x $ with nearest-neighbour interaction. The ground state is close to the product state when the external field is large, and thus we consider the excitation being generated by a local perturbation.
The initial state is prepared as  $\ket{\psi_0} = \mathrm{R}_y^i(\frac{\pi}{2}) \ket{+}^{\otimes N}$ with the single-qubit rotation operation acting on the central site $i$.
Similarly, we demonstrate the simulation of the 1D  Heisenberg model in \autoref{eq:heis_hamiltonian} with $h_z = 0$ and $J = 1$. 
The time evolution is simulated using Trotter formulae with an even-odd order pairing method such that the Trotter error is reduced. More details about the circuit compilation can be found in Methods.
The excitation spectra of the Ising model and the Heisenberg model are shown in \autoref{fig:TFI_IBM}(a,b), respectively. The energy dispersions for both models are in good agreement with the analytic results for infinitely long spin chains. 
The experimental simulation errors for the observable dynamics and energy excitations are shown in \autoref{fig:TFI_IBM}(d,e), respectively. 
The simulation on the Kolkata device involves over 350 CNOT gates, but the simulation error is maintained at an acceptable level, which indicates the robustness of our method.

The maximum system size is restricted by the large noise and the available size of the hardware. To further show the performance of our method for a relatively large system, we numerically test our method with an increasing system size using tensor network methods. We show the simulation error with different system sizes up to 51 qubits in \autoref{fig:TFI_IBM}(f).  The error is decreased by increasing $\tau$, and it is not increased with increasing system size. 
In \autoref{fig:TFI_IBM}(c), we show the excitation spectra of the $51$-site Heisenberg model for comparison with the experimental result in \autoref{fig:TFI_IBM}(b).
We refer to Supplementary Section IV for more results of excitation spectra simulation of the 2D Heisenberg model, the Bose-Hubbard and the Fermi-Hubbard model.
Note that the examples studied here are classically simulable by tensor networks since the central aim is to test the performance of our method for known systems under different conditions. 
We do not attempt to prove a quantum advantage over classical computing or show its generality for solving arbitrary quantum systems. Instead, we aim to show how the spectroscopic properties of many-body systems can be probed using either analogue or digital quantum simulators as a proof of concept.

We remark that simulating general gapless systems is a widely believed challenge, which poses great challenges to most existing algorithms, such as phase estimation and quantum singular value transformation~\cite{gilyen2019quantum}. The difficulty is that the energies are highly degenerate such that energy is not sufficient to distinguish these states. 
By introducing other conjugate variables, we could label the states with both the energy and the auxiliary characters (e.g. energy and momentum $(E,k)$) such that the degenerated states can be distinguished. Therefore, this strategy may be useful for analysing gapless systems with certain symmetry conservation, like translation invariance.  Although we only test for simple and specific examples in this work, the results might shed light on the resolution of energy spectra of other gapless systems. The spectroscopic method and the examples examined here could potentially stimulate further discussions on estimating the properties of gapless systems.

\section{Discussion}
In this work, we introduce a spectroscopic method for probing transition energies and excitation spectra of quantum many-body systems.  The framework presented in this work establishes a connection between spectroscopic techniques and quantum simulation, and thus enables mutual benefits and advancements in both fields.
The key element is the realisation of frequency resonance, and one can detect systems with certain symmetries by introducing additional momentum resonance.
Our method requires the implementation of short-time dynamics only with the time length logarithmic in precision, and could be experiment-friendly for the current generation of quantum simulators.
Advanced measurement algorithms~\cite{huang2020predicting,wu2023overlapped} (e.g.~classical shadows~\cite{huang2020predicting}) can be directly employed to reduce the measurement complexity in measuring observables within this framework (see Methods for details).

We numerically test the scalability of the spectroscopic method for lattice models including both gapped and gapless cases. The simulation results obtained for gapless systems with translation invariance imply that our method may be useful for analysing similar systems with certain symmetry conservation, and we leave it to future work.  
We note that when the ground-state energy is known, finding the transition energy between the excited state $\ket{n}$ and the ground state $\ket{0}$ is equivalent to the problem of finding the eigenenergy of a quantum system. Therefore, resolving the transition energy could be quantumly hard. Our method is efficient when the state and observable dependent coherence is nonvanishing which could be satisfied in several quantum systems with the conservation of particle numbers. Though this in general does not hold, we tested the coherence for molecules and lattice models.
On the other hand, we remark that although we do not expect that our method can resolve the transition energy for general systems, the spectroscopic protocol can find applications in simulating the energy excitation spectrum; in this task the excitation is a collective behaviour and hence we do not necessarily require resolving a specific energy transition. \sun{In addition, the simulated spectroscopic features could be useful for identifying the transition from a Fermi-liquid to a non-Fermi-liquid phase and pinpointing the breakdown of where the quasiparticle picture no longer holds.}

The simulation result suggests that the spectral properties of several quantum many-body systems can be estimated with high accuracy, even with increasing system sizes for lattice Hamiltonians considered in this work and \sun{when statistical and device noise are present}.
We also demonstrated how our method can be applied to probe the excitation spectra of spin Hamiltonians on IBM quantum devices. These results indicate that our approach could serve as a quantum computing solution complementary to high-intensity scattering facilities. For example, it could be useful in stimulating new quantum computing-based methods for resolving complex many-body systems with different conditions, such as various materials-dependent conditions (e.g.~doping levels) and environmental conditions (e.g.~pressure and temperature).







\section*{ Methods}

\noindent
\textbf{Equivalence of the two formalisms.} 
The function $p(\cdot)$ plays a role as a spectral detector that can filter $\Endiff_{n'n}$ out of the other transition energies. In this work, we focus on the Gaussian function since the truncation error is small.  Other functions can be chosen, and the resource requirement can be easily analysed under our framework.

If $\tau$ is introduced as a separate parameter that is irrelevant to the Fourier transform, it is straightforward to have
$ 
	\tilde{g}_{\tau} (t) := \int_{-\infty}^{+\infty} p_{\tau} (\omega) e^{i \omega t} d\omega
	$
and
$
	p_{\tau} (\omega) = \int_{-\infty}^{+\infty} \tilde{g}_{\tau} (t) e^{-i \omega t} dt.
$
One can check that 
$\tilde{g}_{\tau} (t) =\frac{1}{\tau} \tilde g(\frac{t}{\tau})$, where $\tilde g$ was defined in the main text.
We have 
\begin{equation}\label{eq:p_tau_omega}
    p_\tau( \omega) = \frac{c(\tau)}{2 \pi}  \int_{-\infty}^{+\infty} \mathrm{Pr}(t,\tau) e^{i\theta_{t}} e^{-i \omega t} dt
\end{equation}
where $c(\tau):= \int_{-\infty}^{+\infty} |\tilde{g}_{\tau} (t) | dt = c$ and $\mathrm{Pr}(t,\tau):=   {|\tilde{g}_{\tau} (t)|}/{c(\tau)}   $.  
On the other hand, we can arrange that  $\tau$ is coupled with $\omega$ in the Fourier transform, which is the way introduced in the main text. One can check that \autoref{eq:p_tau_omega} is equivalent to $p(\tau \omega)$ defined in the main text, and thus the two formalisms are equivalent. 
One important point to highlight is that the function $g(t)$ is $\tau$-independent and is normalised. That means we can easily analyse and compare different methods in a unified framework, rather than a case-by-case analysis. Therefore, the method presented in the main text facilitates a straightforward evaluation of the resource requirement.  More detailed derivations can be found in Supplementary Section I.

We remark that our approach is capable of estimating the energy differences between energy eigenstates by identifying the peaks in the excitation spectrum. However, applying the spectral filter will alter the width of the peak, in contrast to conventional spectroscopic techniques. Therefore, properties in relation to the peak width, such as the lifetime of quasiparticles, cannot be determined. 
In addition, the introduction of a filter function will alter the width of the peak. Therefore, we cannot distinguish whether the broadening of the peak is caused by a finite evolution time or a continuum, because both can lead to a broadening of the peak.

\vspace{8pt}

\noindent \textbf{Analysis of the algorithmic error and resource requirements.}
In this section, we first discuss the coherence in particle-conserved systems. Then, we analyse the estimation error of the transition energies under a finite imaginary time $\tau$, a finite cutoff $T$ when evaluating the integral, and a finite number of measurements $N_s$. Based on the error analysis, we estimate the resource requirement for transition energy estimation.

The observable is assumed to have a bounded norm $\| \hat{O} \| \leq 1$; for instance, the observable can be chosen as a tensor product of single-qubit Pauli operators.
Consequently, we have $|\tr(\hat{O}\rho) | \leq 1$ and $\max_j \Cohere_j \leq 1$.
As $\tr(\hat{O}\rho) = \sum_{n,n'} \rho^{n'n} \braket{n|\hat{O}|n'}$, the sum of the spectral weight has an upper bound $\sum_{n,n'} \Cohere_{n'n} \leq 1.$  We will discuss the condition for ensuring a nonzero coherence in particle-conserved systems in the following.

In the main text, we showed that for certain many-body systems, by measuring the (quasi)particle annihilation and creation operator $\hat{O} = \sum_{\mathbf{p}'} A_{\mathbf{p}'} \hat{\gamma}_{\bf p'}^{\dagger} + A_{\bf p'}^* \hat{\gamma}_{\mathbf{p'}}$, we can observe the transitions (since $\Cohere_{n,0}\neq0$). In a special case, for Bose-Hubbard models, as has been demonstrated experimentally for the hard-core boson cases in \cite{roushan2017spectroscopic},  we may consider the measurement of ${\hat a_r}$ which is the original bosonic annihilation operator on the $r$th site in this context.
This is a special case of our choice.
Since the $\hat a_r = \hat{x}_r + i \hat{p}_r $ where $\hat x_r$ and $\hat p_r$ are the canonical coordinates for the $r$th mode, the measurement of $\braket{\hat a_r}$  could be realised by measuring in the single-mode canonical coordinates and then post-processing the measurement outcomes. 


To probe  $\Cohere_{n',n}$, we could choose to initialise the state by creating two particles above the vacuum state as $|\psi_0 \rangle = \frac{1}{1+\beta^2} (1+ 
\beta \hat{\gamma}_{\bf p}^{\dagger} )(1+ \beta 
 \hat{\gamma}_{\bf p'}^{\dagger}  ) |0\rangle$.  
Again, we can decompose the single-particle state $\hat{\gamma}_{\bf p}^{\dagger} \ket{0 }$ with momentum $\bf p$ into the single-particle eigenbasis of  $\ket{n}$, and we denote the expansion coefficient  $\braket{n|  \hat{\gamma}_{\bf p}^{\dagger}|0 } := c_{n, \bf p}$ with the normalisation condition $\sum_n |c_{n, \bf p}|^2 = 1$. 
We could find that in the case where the number of possible excitations grows polynomially with respect to the system size the coefficient will be nonvanishing.
Similarly, for the two-particle state, we can decompose it into the two-particle eigenbasis of  $\ket{n^{(2)}}$, and we denote the expansion coefficient  $\braket{n^{(2)}|  \hat{\gamma}_{\bf p}^{\dagger}\hat{\gamma}_{\bf p'}^{\dagger}|0 } := c_{n, \bf p\bf p'}^{(2)}$.

If we simply wish to observe the transitions in the single-particle manifolds, then for the observable $ \hat{O} =  \sum_{\bf q,q'} A_{\bf q,q'} \hat{\gamma}_{\bf q}^{\dagger} \hat{\gamma}_{\bf q'}$, the transition amplitude between single-particle eigenstate $\ket{n}$ and $\ket{n'}$ may be approximated by $\braket{n|\hat{O}|n'} =  \sum_{\bf q,q'} A_{\bf q,q'}  c_{n, \bf q} c_{{\bf q'},n'}^* $. 
Here we have put a strong assumption that the space of eigenstates with different particle numbers is orthogonal, specifically $\braket{n |\hat{\gamma}_{\bf q}^{\dagger} |m } = 0$ for $m \neq 0$. We conjecture that this condition holds when the quasiparticle number is well-defined.
The transitions in the many-particle manifolds are more complicated; we may deliberately choose the observable to infer the transition energy.
In our case, the initial state coherence is $$\rho^{n',n} = \frac{\beta^2}{(1+
\beta^2)^2} (c_{n',\bf p'} + c_{n',\bf p})(c_{{\bf p'},n}^* + c_{{\bf p},n}^*).$$
To sum up, the coherence is given by
\begin{equation}
    \Cohere_{n',n} =  \sum_{\bf q,q'} \frac{\beta^2 A_{\bf q,q'}  c_{n, \bf q} c_{{\bf q'},n'}^*}{(1+
\beta^2)^2} (c_{n',\bf p'} + c_{n',\bf p})(c_{{\bf p'},n}^* + c_{{\bf p},n}^*).
\end{equation}

\sun{We give a few comments on the coherence in many-body quantum systems.
As discussed in the main text, systems that host quasiparticles satisfy the condition for a nonvanishing coherence. For Fermi liquid, the low-energy eigenstates are labelled by a set of quantum numbers $n_{k,\sigma} = 0, 1$ (the occupation numbers) and the introduction of interactions will not let the quasiparticle decay into multiple other quasiparticles~\cite{wen2004quantum}. 
The one quasiparticle state has a finite overlap with the state where a bare particle carrying the same quantum numbers is added.
More concretely, the overlap between the quasiparticle state $\ket{\psi_k}$ carrying momentum $k$ and the ground state $\ket{\psi_0}$ excited by $ \hat{c}_{k,\sigma}^{\dagger}$, $| \braket{ \psi_k  | \hat{c}_{k,\sigma}^{\dagger} | \psi_0} |^2$, is not zero in the thermodynamic limit for $k$ near the Fermi surface. Note that the overlap is between $0$ and $1$ because the quasiparticle state  $\ket{\psi_k}$ is a superposition of all states with momentum $k$.  This property gives a guarantee for the nonzero coherence.
The non-metallic states such as anti-ferromagnetic states are generated due to interactions between electrons and can be understood from this Fermi liquid theory~\cite{wen2004quantum}. 
Nevertheless, it is worth noting that finding the occupancy of the original orbitals in the interacting eigenstate $\ket{n'}$, $n_{\mathrm{occ}}(k, \sigma) = \braket{n'  | \hat{c}^{\dagger}_{k, \sigma} \hat{c}_{k, \sigma} |n'}$ is still a computationally hard task. 
The simulated spectroscopic features can thus be useful for understanding certain behaviours of many-body systems.
}

\sun{Although our method relies on the coherence between the target eigenstates, it could be useful in identifying certain spectroscopic features of the target system.
For example, in the case of non-Fermi liquid, the lifetime of quasiparticles is short and its spectrum is expected to be blurred. This blurred spectroscopic feature indicates the beginning of non-Fermi-liquid behaviour.
Understanding this blurred behaviour in nature is an interesting question, and our approach can help reveal it. The simulated spectroscopic features could be useful for identifying the transition from a Fermi liquid to a non-Fermi-liquid phase and pinpointing the breakdown of where Fermi liquid theory no longer holds.
}

Below, we will assume $\Cohere_{n',n} $ is nonzero and give the circuit complexity dependence on $\Cohere_{n',n}$.
Recall that the spectral detector takes the form of
\begin{equation}
\begin{aligned}
		G(\omega) 
		&:= \sum_{n,n'\geq 0} \Cohere_{n',n} p(\tau(E_{n'}-E_n - \omega)),
\end{aligned}
\end{equation}
which can be simplified as  
$ 
		G(\omega) 
		= 2\sum_{n < n'} \Re(\Cohere_{n',n}) p(\tau(E_{n'}-E_n - \omega)).
 $
For simplicity, we sort $ \Endiff_{n',n}$ in ascending order and denote ordered energies as $\Endiff_{i}$ and the associated coherence is denoted as $\Cohere_i \leftrightarrow 2 \Re(\Cohere_{n',n})  \in \mathbb{R}$.
The spectral detector can now be expressed as
$ G(\omega) =  \sum_{i} \Cohere_{i} p(\tau(\Endiff_i - \omega)).
$ Then, the original problem is converted to a standard energy estimation problem as stated and studied in \cite{zeng2021universal,wang2022quantum}.

Suppose we aim to estimate $\Endiff_j$ with a nonvanishing $\Cohere_j$.
Without loss of generality, we assume $\Cohere_j > 0$ in the following discussion, such that $G(\omega)$ is positive when $\omega$ approaches $\Endiff_j$ (given a large $\Cohere_j$).  
Similar results can be obtained in the case of negative $\Cohere_j$. Intuitively, the spectral information can be inferred by looking at the peak of the intensity of $|G(\omega)|$. This is guaranteed by the following lemma.

\begin{lemma}
\label{lemma:inequality}
When $\tau  =   \frac{1}{0.9 \Resolution}\sqrt{ \ln \left( \frac{20}{\varepsilon^2 \Cohere_j} \right) } $ and $\varepsilon \leq 0.2 \Resolution  $,	the two inequalities hold:
\begin{equation}
\Cohere_j - G(\omega)		 \leq  0.3 \tau^2 \varepsilon^2 \Cohere_j, \forall |\omega -\Endiff_j | \leq 0.5 \varepsilon  
\label{eq:lemma1_A1}
\end{equation}
and 
\begin{equation}
\Cohere_j - G(\omega)	\geq   0.8 \tau^2 \varepsilon^2 \Cohere_j, \forall |\omega -\Endiff_j | \in  ( \varepsilon,   0.1 \Resolution)
\end{equation}


\end{lemma}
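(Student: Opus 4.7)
The plan is to estimate $\Cohere_j - G(\omega)$ by isolating the contribution of the target transition $i=j$ from the rest of the sum:
\begin{equation*}
\Cohere_j - G(\omega) \;=\; \Cohere_j\bigl(1 - e^{-\tau^2 \delta^2}\bigr) \;-\; \sum_{i\neq j}\Cohere_i\, e^{-\tau^2 (\Endiff_i - \omega)^2},
\end{equation*}
where $\delta := \omega - \Endiff_j$ and I have substituted the Gaussian $p(x)=e^{-x^2}$. The two inequalities then reduce to showing that the ``main term'' on the right is small when $|\delta|$ is small and sizeable when $|\delta|$ is not too small, provided the ``crosstalk'' sum is uniformly negligible.

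The first step is to tame the crosstalk. Whenever $|\delta|\leq 0.1\Resolution$, the triangle inequality combined with the gap condition $|\Endiff_i-\Endiff_j|\geq \Resolution$ for $i\neq j$ yields $|\Endiff_i-\omega|\geq 0.9\Resolution$. The choice $\tau=(0.9\Resolution)^{-1}\sqrt{\ln(20/(\varepsilon^2\Cohere_j))}$ is engineered precisely so that $e^{-\tau^2(0.9\Resolution)^2} = \varepsilon^2\Cohere_j/20$, and combining this with the global spectral-weight bound $\sum_i|\Cohere_i|\leq 1$ (inherited from $\|\hat O\|\leq 1$) shows that the crosstalk is uniformly bounded in absolute value by $\varepsilon^2\Cohere_j/20 = 0.05\,\varepsilon^2\Cohere_j$.

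The second step is to estimate the main term with elementary scalar inequalities on $1 - e^{-y}$. For the upper bound \autoref{eq:lemma1_A1}, I would use $1 - e^{-y}\leq y$, so that $\Cohere_j(1 - e^{-\tau^2\delta^2})\leq \Cohere_j\,\tau^2\delta^2 \leq 0.25\,\tau^2\varepsilon^2\Cohere_j$ on $|\delta|\leq 0.5\varepsilon$; adding the crosstalk margin then yields the claimed $0.3\,\tau^2\varepsilon^2\Cohere_j$. For the lower bound, I would exploit monotonicity: on $|\delta|>\varepsilon$ we have $\tau^2\delta^2 > \tau^2\varepsilon^2$, hence $1 - e^{-\tau^2\delta^2} > 1 - e^{-\tau^2\varepsilon^2}$. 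A Taylor-type estimate such as $1 - e^{-y} \geq y - y^2/2$, valid whenever $\tau^2\varepsilon^2$ stays moderate in the admissible parameter range, gives $1 - e^{-\tau^2\varepsilon^2}\geq \tau^2\varepsilon^2(1-\tau^2\varepsilon^2/2)$, which exceeds $0.85\,\tau^2\varepsilon^2$; subtracting the crosstalk contribution then produces the claimed $0.8\,\tau^2\varepsilon^2\Cohere_j$ lower bound.

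The main technical hurdle I anticipate is ensuring the numerical constants $0.3$ and $0.8$ work out uniformly over the admissible parameter range. In particular, the linear approximation $1 - e^{-y}\approx y$ is only accurate for moderate $y$, so I must verify that $\tau^2\varepsilon^2$ stays comfortably below a threshold like $0.15$ throughout the region $\varepsilon\leq 0.2\Resolution$, using the interplay between the prefactor $\varepsilon^2/\Resolution^2\leq 0.04$ and the logarithmic factor $\ln(20/(\varepsilon^2\Cohere_j))$. A secondary point is to confirm that the crosstalk contribution of $0.05\,\varepsilon^2\Cohere_j$ (which lacks a $\tau^2$ prefactor) can be absorbed into the margin between $0.85$ and $0.8$, either by verifying $\tau^2\geq 1$ in the relevant regime or by a sharper crosstalk estimate that uses the fact that transitions further from $\Endiff_j$ contribute exponentially less than the worst-case bound.
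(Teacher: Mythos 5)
Your proposal follows essentially the same route as the paper's proof: the same decomposition into the $i=j$ "main term" plus crosstalk, the same use of the triangle inequality and the gap to get $|\Endiff_i-\omega|\geq 0.9\Resolution$, the same crosstalk bound $e^{-\tau^2(0.9\Resolution)^2}=0.05\,\varepsilon^2\Cohere_j$ absorbed via $\tau\geq 1$, and the same elementary estimates $1-e^{-y}\leq y$ (giving $0.25+0.05=0.3$) and $1-e^{-y}\geq 0.85y$ for moderate $y$ (giving $0.85-0.05=0.8$). The caveats you flag — requiring $\tau\geq 1$ and $\tau\varepsilon$ small enough for the $0.85$ factor — are exactly the side conditions the paper invokes ($\tau\geq 1$ and $\tau\varepsilon<1/2$), so the plan is correct and matches the paper's argument.
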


The above lemma guarantees that the distance  $d =  |\Cohere_{j} - \hat G(\omega)|$ is modulated by the estimation error $|\omega - \Endiff_j|$. Therefore, $\Endiff_j$ can be distinguished from the other transitions when $\omega$ approaches $\Endiff_j$.
The next step is to analyse the quantum resources required to have a good estimation of $\hat G(\omega)$.

When evaluating the integral in Eq.~(3) in the main text, we set a finite cutoff for the integral range from $[\infty, \infty]$ to $[-T,T]$ in practice.
The truncated detector of $\hat{G}(\omega)$ is defined as
\begin{equation}
	{G}^{(T)}(\omega) = \int_{-T}^T G(\tau t) g(t) e^{i \theta_t} e^{i \tau \omega t} dt.
	\label{eq:trancated_G_omega}
\end{equation}
If a Gaussian operation is used, it is easy to check that the truncation error has an upper bound
\begin{equation}
	|{G}^{(T)}(\omega) - {G}(\omega) | \leq 2\int_T^{\infty }g(t)dt =    \erfc (T/2) \leq  \exp(-T^2/4)
\end{equation}
which has been discussed in prior work~\cite{zeng2021universal,wang2022quantum}.
Therefore, the truncation error can be bounded by
\begin{equation}
	|{G}^{(T)}(\omega) - {G}(\omega) | \leq   \varepsilon_T,~\forall \omega \in \mathbb{R}
	\end{equation}
when the cutoff time $T \geq 2 \sqrt{  \ln(1/\varepsilon_T) }$.

In the following, we consider the error due to a finite number of measurements.
A single-shot estimator of  ${G}^{(T)}(\omega)$ takes the form of 
\begin{equation}
\hat{G}^{(T)}_i(\omega) =   \begin{cases}
\hat{o} (\tau t_i)    e^{i\theta_{t_i}}  e^{i \tau \omega t_i}, &~|t_i| \leq T \\
  0, &~|t_i| > T \\
\end{cases}
 \label{eq:C_omega_estimate_SM}
\end{equation}
where the time length $t_i$ is drawn from the probability distribution  $g(t)$, which is $\tau$-independent, and $\hat{o} (\tau t_i) $ is an unbiased estimate of $\tr [\hat O(\tau t) \rho]$. 
We can estimate $G(\omega)$ by 
\begin{equation} \hat{G}(\omega) = \frac{1}{N_s} \sum_{i = 1}^{N_s}  \hat{G}^{(T)}_i(\omega), \label{eq:C_omega_estimate_SM}
 \end{equation} where $N_s$ is the total number of samples.
It is worth noting at this juncture that both $\tr [\hat O(\tau t) \rho]$ and ${G}(\omega) $ are real numbers, in contrast to the cases in eigenenergy estimation where the expectation values are complex numbers.

The transition energy is estimated by
\begin{equation}
	\hat{\Endiff}_j = \mathrm{argmax}_{\omega \in [\Endiff_j - \Resolution/2, \Endiff_j + \Resolution/2]} \hat{G}^{(T)}(\omega). 
\end{equation}
Note that it is assumed that $\Cohere_j > 0$.
The result is summarised as follows.

%

 

\begin{proposition}
\label{prop:transition_energy}
To guarantee that an estimation $\hat \Endiff_{j}$ of the transition energy $\Endiff_{j}$ is close to the true value within an error  $ \varepsilon$, with a failure probability $\delta$, we require that the maximum time is ${\mathcal{O}} ( {\Resolution}^{-1} \mathrm{log} (1/\varepsilon))$ and the number of measurements $\tilde{\mathcal{O}} (\varepsilon^{-4}\Cohere_j^{-2} {\Resolution}^3 \log(1/\delta))$, where $\Resolution $ is a chosen lower bound of the gap difference $  \Resolution \leq \min \{ \Endiff_{j+1} - \Endiff_j, \Endiff_{j} - \Endiff_{j-1} \} $ where $j$ runs over the allowed excitations restricted by the energy and momentum selection rule.
\end{proposition}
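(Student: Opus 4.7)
The plan is to reduce the estimation of $\Endiff_j$ to a peak-finding problem on the Monte Carlo estimator $\hat G(\omega)$, and to control three independent sources of error: the finite-$\tau$ bias, the truncation at $|t|=T$, and the statistical fluctuations of $\hat G(\omega)$. First I would invoke Lemma~\ref{lemma:inequality} with $\tau = \mathcal{O}(\Resolution^{-1}\sqrt{\log(1/\varepsilon)})$. That lemma already guarantees that the ideal $G(\omega)$ exhibits a well-isolated peak at $\Endiff_j$, separated from the rest of the search interval $[\Endiff_j-\Resolution/2,\Endiff_j+\Resolution/2]$ by a ``peak margin'' of order $\tau^2\varepsilon^2\Cohere_j$. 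Concretely, any $\omega$ with $|\omega-\Endiff_j|\le 0.5\varepsilon$ gives $G(\omega)\ge \Cohere_j(1-0.3\tau^2\varepsilon^2)$, while any $\omega$ with $|\omega-\Endiff_j|\in(\varepsilon,0.1\Resolution)$ gives $G(\omega)\le \Cohere_j(1-0.8\tau^2\varepsilon^2)$, so the ideal argmax is within $\varepsilon$ of $\Endiff_j$.

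Next I would handle the truncation using the Gaussian tail bound already derived, choosing $T = \mathcal{O}(\sqrt{\log(1/\varepsilon)})$ so that $|G^{(T)}(\omega)-G(\omega)|$ is a small fraction of the peak margin uniformly in $\omega$. Because $g$ is $\tau$-independent and the dynamics are evaluated at $\tau t_i$, the maximum evolution time is $\tau T = \mathcal{O}(\Resolution^{-1}\log(1/\varepsilon))$, matching the claimed time scaling. The choice of $T$ has only a logarithmic effect on the circuit depth.

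For the sampling complexity, I would apply Hoeffding's inequality to $\hat G(\omega)=N_s^{-1}\sum_i\hat G^{(T)}_i(\omega)$. Each single-shot contribution obeys $|\hat G^{(T)}_i(\omega)|\le 1$ since $\|\hat O\|\le 1$ and $|e^{i\theta_{t_i}}|=|e^{i\tau\omega t_i}|=1$, so the per-sample variance is $\mathcal{O}(1)$. To bound the estimator uniformly over the interval $[\Endiff_j-\Resolution/2,\Endiff_j+\Resolution/2]$ I would evaluate $\hat G$ on a grid of spacing $\varepsilon$ (hence $\mathcal{O}(\Resolution/\varepsilon)$ points), apply a union bound, and use the Lipschitz continuity of $G(\omega)$ (with modulus of order $\tau$) to extend the guarantee to the continuum. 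Forcing the resulting uniform deviation to sit comfortably below the peak margin $\tau^2\varepsilon^2\Cohere_j$ gives $N_s=\tilde{\mathcal{O}}(\varepsilon^{-4}\Cohere_j^{-2}\mathrm{poly}(\Resolution)\log(1/\delta))$, with the advertised power of $\Resolution$; the empirical argmax is then within $\varepsilon$ of $\Endiff_j$ with probability at least $1-\delta$.

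The main obstacle is the coupled tuning of $\tau$ and $N_s$: $\tau$ has to be large enough to satisfy the hypothesis of Lemma~\ref{lemma:inequality} and create a quadratic-in-$\varepsilon$ separation between the ``inside'' and ``outside'' regions, yet every additional factor of $\tau$ shrinks the peak margin (which scales like $\tau^2\varepsilon^2$) and therefore inflates the required $N_s$ by two powers of $\tau$. Balancing these at $\tau=\mathcal{O}(\Resolution^{-1}\sqrt{\log(1/\varepsilon)})$, while still guaranteeing uniform concentration of $\hat G(\omega)$ over the full search window (not merely pointwise), is what fixes the polynomial dependence on $\Resolution$ and produces the $\log(1/\delta)$ factor from the union bound.
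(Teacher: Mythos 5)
Your proposal follows essentially the same route as the paper's own proof (Lemma~2 in the Supplementary Information): invoke Lemma~1 to create the $\tau^2\varepsilon^2\Cohere_j$ peak margin at $\tau=\mathcal{O}(\Resolution^{-1}\sqrt{\log(1/\varepsilon)})$, bound the Gaussian truncation tail with $T=\mathcal{O}(\sqrt{\log(1/\varepsilon)})$, and use Hoeffding on the bounded single-shot estimators to force the statistical deviation below that margin, yielding $N_s\sim(\tau^2\varepsilon^2\Cohere_j)^{-2}$ and an argmax over an $\varepsilon$-grid. Your explicit union bound over the $\mathcal{O}(\Resolution/\varepsilon)$ grid points together with the Lipschitz extension is a slightly more careful rendering of the uniform-deviation step that the paper asserts only pointwise, and it costs nothing beyond the polylogarithmic factors already hidden in the $\tilde{\mathcal{O}}$ notation.
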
 

In this work we use the  $\tilde{\mathcal{O}}$ notation where the polylogarithmic dependence is hidden. 
\autoref{prop:transition_energy} indicates that the maximum circuit depth is logarithmic in precision and  the total running time  scales as $\tilde{\mathcal{O}}(\varepsilon^{-4})$, in contrast to the result established in \cite{zeng2021universal} or \cite{zintchenko2016randomized} which is $\tilde{\mathcal{O}}(\varepsilon^{-1})$.
The above result is guaranteed by Lemma 2 in Supplementary Information, which shows that the maximum time required is $t = \tau \times T = \mathcal{O} (\Resolution^{-1} \log(\varepsilon^{-1}))$ and $N_s =  \tilde{\mathcal{O}}(\varepsilon^{-4} \Resolution^{4} \Cohere_j^{-2} \log (\delta^{-1}) )$.
The total running time scales as $ \tau \times T \times N_s = \tilde{\mathcal{O}} ( { \varepsilon^{-4} \Resolution}^3 \log (\varepsilon^{-1})\log (\delta^{-1}))$. 
\autoref{prop:transition_energy} naturally flows from the above analysis.
We leave the proof for Lemma 1 and Lemma 2 to Supplementary Section I.


\vspace{8pt}
\noindent
\textbf{Complexity for measuring multiple observables.}
In the above analysis, we assume that $\| \hat{O} \| \leq 1$.
It is easy to find that our method only requires measurement of $\braket{ \psi(t)| \hat {O}| \psi(t) }$ and thus any kind of measurement scheme is directly applicable.
Therefore, multiple low-support observables can be estimated in an efficient way.
Suppose there are $N_o$ low-support observables to be measured and the conditions in \autoref{prop:transition_energy} are satisfied. By employing classical shadow methods developed in \cite{huang2020predicting}, the total time complexity is $\mathcal{O} ( {\Resolution}^3 \varepsilon^{-4} \log (1/\varepsilon) \log (N_o))$, which is only amplified by a logarithmic factor compared to single observable estimation. Note that the measurement complexity scales exponentially with respect to the locality of the observables  $m$ as $\mathcal{O}(3^m)$. 
Advanced measurement algorithms can be directly employed within our framework, such as Pauli grouping methods~\cite{wu2023overlapped,huang2020predicting}, whose efficiency in measuring qubit-wise compatible observables has been verified for various examples.

\begin{figure}[!h]
    \centering
    \includegraphics[width=\columnwidth]{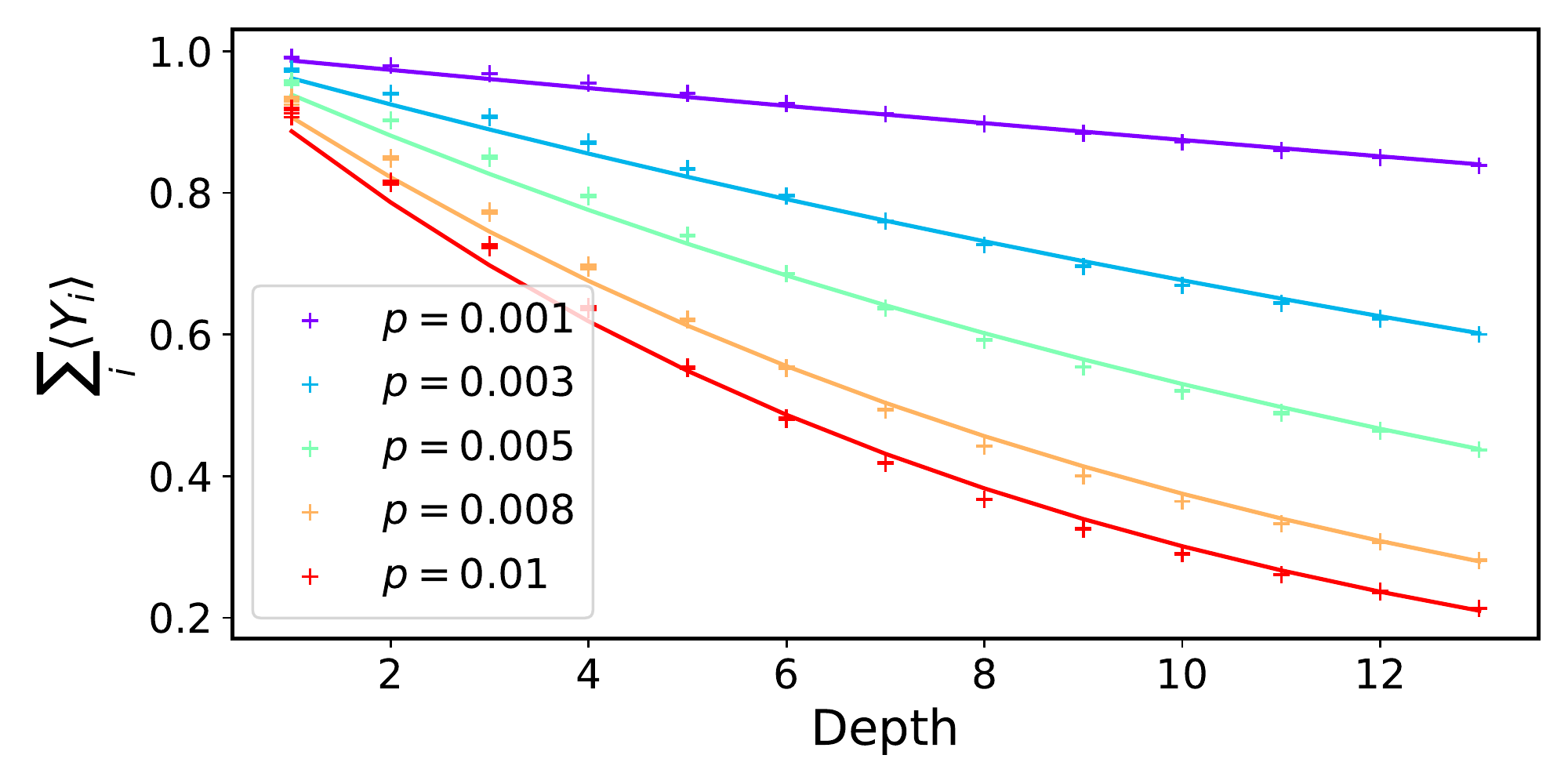}
    \caption{\sun{\textbf{Fitting for the circuit evolution with local depolarising noise.} Numerical results of the circuit $UU^\dagger$ implementation of noisy spectroscopy protocol for simulating the 7-site Ising model ($h_x = 2$ and $h_z = 0.1$) with different noise strength $p$. The depth refers to the repetition numbers. 
    The solid lines represent the exponential fit for every $p$. All the regressions have a relative predictive power $R^2 = 0.99$.}
    }    \label{fig:verify_noise_ansatz}
\end{figure}

\begin{figure}[!h]
    \centering
    \includegraphics[width=1\columnwidth]{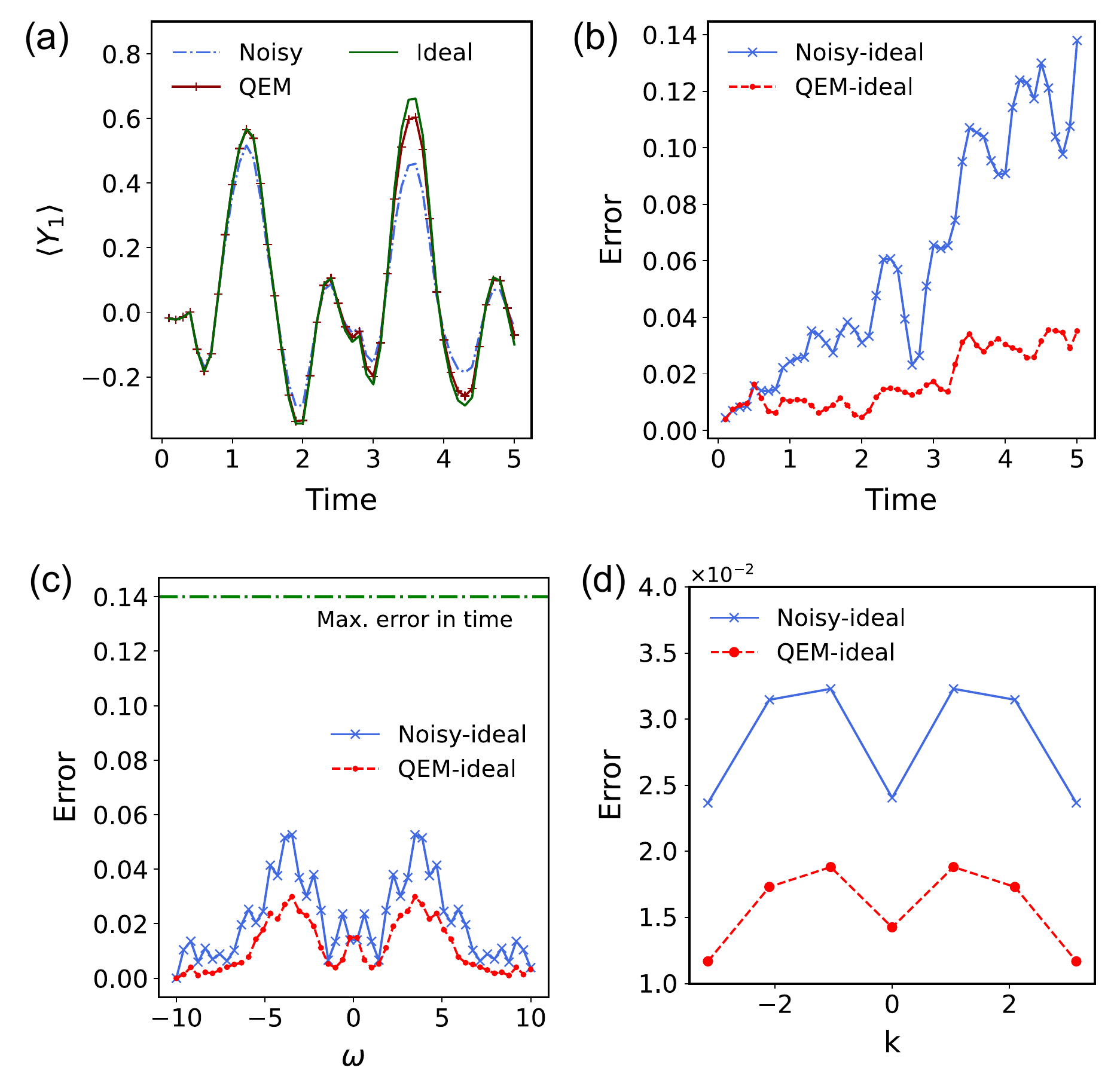}
    \caption{\sun{\textbf{Noisy and error-mitigated time dynamics of the spectroscopy protocol.} \textbf{a} Dynamics of the observable expectation value  $\braket{Y_i}$ at site $i=1$. The figure shows the results of the noisy (with noise rate $p=0.005$), the error mitigated and the ideal cases for a 7-site Ising model. \textbf{b} Root mean square error (RMSE) of the noisy and the error mitigated results in the time domain. \textbf{c} Error of the noisy and error mitigated spectrum in the frequency domain. The green line represents the maximum value of the error in the time domain as a reference. \textbf{d} Error in the $k$-space.}}
    \label{fig:simulation_error_w_noise}
\end{figure}

 Recently Chan et al. proposed to extract eigenenergy differences by post-processing classical shadows of time-evolved states~\cite{chan2022algorithmic}.~The quantum resources and the signal-to-noise ratio were analysed.
Since the extraction of eigenenergy differences is based on the post-processing of many time-evolved states, the low measurement cost is essential for their method. 
In this work, we show that for several quantum many-body systems, the observables are chosen with clear physical intuitions, which, to some extent, avoids the necessity of measuring many observables.


\vspace{8pt}
\noindent
\textbf{Numerical simulation based on tensor networks.}
For the numerical simulation conducted in this work, the time evolution $e^{-i\hatH t}$ is simulated using a time-dependent variational principle (TDVP) algorithm based on the representation by matrix product states (MPS). The density matrix renormalisation group (DMRG) method is used to initialise the system into the ground state of the system before applying quench. The following shows a few examples of excitation spectra simulation. In Supplementary Section IV, we show simulations of both the Bose-Hubbard and Fermi-Hubbard models.
The Hamiltonian for both bosons (BH) and fermions (FH) can be expressed as
\begin{equation*}
\begin{split}
     &\hatH_{\text{BH}}=-J\sum_i\left(\hat{a}_i^\dagger\hat{a}_{i+1}+\text{h.c}\right)+\frac{U}{2}\sum_i\hat{n}_i\left(\hat{n}_i-1\right), \\
    &\hatH_{\text{FH}}=-t\sum_i\left(\hat{c}_i^\dagger\hat{c}_{i+1}+\text{h.c}\right)+U\sum_i\hat{n}_{i\uparrow}\hat{n}_{i\downarrow},
\end{split}
\end{equation*}
where $\hat{a}_i$ and $\hat{a}^{\dagger}_i$ ($\hat{c}_i$ and $\hat{c}^{\dagger}_i$) are the bosonic (fermionic) annihilation and creation operators at the site $i$, respectively,  and $\hat{n}_i$ is the number operator acting on site~$i$.
In what follows, we examine whether our method does indeed replicate the true excitation spectra. For these types of particle systems (both bosons and fermions), we consider local perturbation by taking or adding particles to the site that we want to perturb. In this case, we remove all the particles in the central site of the chain. It is easy to see that this operation is not unitary and does not conserve the particle number, though we can find the unitary counterpart of this operation.

For the Hubbard models, we find the ground state using DMRG, apply local perturbation, evolve the system under the Hamiltonian, and finally measure the expectation value of the number operator. The results for the bosonic chain with an average filling of $\bar{n}=1.4$ and $U/J=2$  and for the fermionic chain simulation with $h/J=2$ are shown in Supplementary Figure 7. 
In all the 1D simulations, we use bond dimension $\chi=128$ and chains of up to $N=51$ sites. 
For the 2D cases, we need to increase the bond dimension in some cases due to the non-local nature of the MPS when having to simulate an extra dimension. We consider 2D lattices of dimensions $L_x \times L_y$ where $L_x=L_y=11$. The simulation result for the lattice model with nearest-neighbour interaction is shown in Supplementary Figure 8. 
The focus here is to test the effectiveness of the spectroscopic method in a proof-of-principle way.

\vspace{6pt}
\noindent
\textbf{Simulation results when considering device and statistical noise.}
Here, we include device and statistical noise in the simulation. As discussed in the main text, the total running time complexity in the presence of device noise is polynomial $\mathcal{O}(\mathrm{poly} (\epsilon^{-1}) )$ in order to achieve a given precision. 
For global depolarising noise, the noise effect on observable estimation can be analytically derived. Given the knowledge of the error model, we can obtain error-mitigated observable estimation by fitting the noisy results.

\blue{
Next, we test the performance of the simple error mitigation strategy by considering a more practical setup with a depolarising noise model. Specifically, we introduce local depolarising noise after each gate, including both single- and two-qubit gates. We numerically verify the behaviour of this noise model by emulating the noisy quantum circuits, where we apply local depolarising noise after each gate.  We run the circuit for a Trotter step with time interval $\delta t$ followed by its inverse: $U(\delta t) U^\dagger(\delta t)$, which is just identity without noise, and repeat it for different times. 
In \autoref{fig:verify_noise_ansatz}, we show the results of $\sum_i\braket{Y_i}$ with an increasing circuit depth.
 \autoref{fig:verify_noise_ansatz} indicates that the noisy results can be well-approximated by an exponential decay function aligning with \autoref{eq:global_ansatz_fit}. The fitting results for individual observable $\braket{Y_i}$ are provided in Supplementary Section II.1. 
}

For the noisy simulation and its error mitigation, we consider the 1D Ising model considered in the main text with an additional term $h_z = 0.1$. We set the initial state the same as that in the main text, a maximum time evolution of $T=5$ and the time interval $\delta t = 0.4$. Results for varying system sizes (up to $11$ sites) and noise rates are provided in Supplementary Section II.1.
We mitigate the noisy measurement outcomes by using the fitted exponential functions for each qubit. The results under different conditions (noisy, ideal, and error mitigated) are shown in both the time and frequency domains in \autoref{fig:simulation_error_w_noise}. We observe that even this simple error mitigation strategy can improve the results, though error mitigation is still needed for more general types of noise. As expected, the algorithm demonstrates strong resilience to noise, allowing us to recover the spectrum even after noisy evolution. This explains why the error is significantly smaller either in the frequency domain or the momentum space.

\vspace{8pt}
\noindent
\textbf{Compilation into quantum gates and implementation on IBM quantum devices}
After some initial tests on the available quantum devices, the Kolkata device performed consistently for this task and thus was selected for executing the quantum circuits. We selected up to 13 qubits with good readout fidelities and gate fidelities. 
We used the second-order Trotter formula to simulate the dynamics.
The real-time dynamics are compiled into single-qubit Pauli rotation gates and CNOT gates.
For the Heisenberg model, the  time-evolution operator $e^{-i t \sum_i (X_{i}X_{i+1}+Y_{i}Y_{i+1}+Z_{i}Z_{i+1})}$ will  be realised by Trotter formulae. Each component $e^{-i x (X_{i}X_{i+1}+Y_{i}Y_{i+1}+Z_{i}Z_{i+1})}$ with time duration $x$ can be realised by 3 CNOT gates as proposed by \cite{smith2019simulating}. For instance, $e^{-i x (X_{1}X_{2}+Y_{1}Y_{2}+Z_{1}Z_{2})}$ can be realised by the following circuit,
\begin{equation*}
    \begin{aligned}
        &R_z^{(1)}(\frac{\pi}{2})\mathrm{CNOT}_{2\rightarrow 1}R_y^{(2)}(\frac{\pi}{2} - 2x) \mathrm{CNOT}_{1\rightarrow 2}\\&R_y^{(2)}(2x-\frac{\pi}{2})R_z^{(1)}(\frac{\pi}{2}-2x)\mathrm{CNOT}_{2\rightarrow 1}R_z^{(2)}(-\frac{\pi}{2})
    \end{aligned}
\end{equation*}
This saves more gates when compared to using a naive compilation which needs 6 CNOT gates.

\section*{Data availability}

Data generated during the current study can be found on GitHub at \url{https://github.com/luciavilchez/probing_spec_features}.


\section*{Code availability}

Codes developed in the current study and the relevant examples can be accessed on GitHub at \url{https://github.com/luciavilchez/probing_spec_features}.



\begin{acknowledgments}
This work was initiated when J.S. was a postgraduate student at Oxford as part of his DPhil thesis, see Chapter 7 in~\cite{sun2023}.
L.V.E. and J.S. thank Louis Villa and Steven J. Thomson for discussions on tensor network simulation.
J.S. thanks Weitang Li, Andrew Green, Pei Zeng, Chonghao Zhai, and Thomas Cheng for related and valuable discussions on this work.
We acknowledge the use of the IBM quantum services for this work. The views expressed are those of the authors and do not reflect the official policy or position of IBM or the IBM quantum team.
J.S. and M.S.K. acknowledge the Samsung GRC grant for financial support.
V.V. thanks support from the Gordon and Betty Moore Foundation.
J.S. thanks support from the Innovate UK (Project No.10075020).
This work was also supported by EPSRC through EP/T001062/1, the National Research Foundation of Korea (NRF) grant funded by the Korea government (MSIT) (No. RS-2024-00413957) and Schmidt Sciences, LLC.
L.V.E is supported by the Clarendon Fund.

\end{acknowledgments}

\section*{Author contributions}

J.S., L.V.E., V.V., and A.B. conceived the idea. J.S. developed the theoretical aspect of the project. L.V.E. performed the numerical simulations with input from J.S..
J.S. performed simulations using IBM cloud services with input from M.S.K.. 
V.V.,  A.B. and M.S.K. supervised the project. J.S. and L.V.E wrote the manuscript. All the authors contributed to the discussion and writing up the manuscript.

\section*{Competing Interests}
The authors declare that there are no competing interests.

\newpage
\widetext

\clearpage

 \appendix

\section*{Supplementary Information for "Probing spectral features of quantum many-body systems with quantum simulators"} 

\section{Methods}
This section elaborates on the framework and the spectroscopic method developed in the main text. 
In \autoref{sec:equiv}, we first prove that the two spectroscopic methods which introduce $\tau$ in different ways are equivalent.
In \autoref{sec:Spectroscopy_Cooling}, we analyse the mechanism by which our spectroscopy method is capable of detecting spectral properties. In \autoref{sec:detailed_error_analysis}, we study the estimation error of the transition energies under the finite circuit depth and sampling numbers, and provide the computational complexity of our method.  
In \autoref{sec:error_effect}, we analyse the effect of algorithmic errors on our method.

\subsection{The spectral detector and the equivalence of the two formalisms}
\label{sec:equiv}

In the main text, we consider
$ 
	\tilde{g}(t) := \int_{-\infty}^{+\infty} p(\omega) e^{i\omega t} d\omega
$
and its dual form
$ 
	p(\omega) = \frac{1}{2\pi} \int_{-\infty}^{+\infty} \tilde{g}(t) e^{-i\omega t} dt.
$
The normalised function $g(t)$ is then introduced with the normalisation factor $c: = \int_{-\infty}^{+\infty} |\tilde{g}(t)| dt$ and the phase factor $e^{i \theta_t} := \tilde{g}(t) /(c g(t)) $. We have
$ 
	g(t) = \frac{1}{c}  \int_{-\infty}^{+\infty} p(\omega) e^{-i\theta_t} e^{i\omega t} d \omega
$
and the dual form
$ 
	p(\omega) = \frac{c}{2\pi} \int_{-\infty}^{+\infty} g(t)e^{i\theta_t} e^{-i\omega t} dt.
$
If $\tau$ is introduced as a separate parameter that is irrelevant to the Fourier transform, it is straightforward to have
\begin{equation}
	\tilde{g}_{\tau} (t) := \int_{-\infty}^{+\infty} p_{\tau} (\omega) e^{i \omega t} d\omega
	\label{eq:def_g_tau_t}
\end{equation}
and
\begin{equation}
	p_{\tau} (\omega) = \int_{-\infty}^{+\infty} \tilde{g}_{\tau} (t) e^{-i \omega t} dt.
\end{equation}
On the other hand, we can arrange that  $\tau$ is coupled with $\omega$ in the Fourier transform, which is the way introduced in the main text. 
Then we have
\begin{equation}
	g(t) = \frac{1}{c} \int_{-\infty}^{+\infty} p(\tau\omega) e^{-i\theta_t} e^{i\tau\omega t} d (\tau\omega)
\end{equation}
and
\begin{equation}
	p(\tau\omega) = \frac{c}{2\pi} \int_{-\infty}^{+\infty} g(t)e^{i\theta_t} e^{-i\tau\omega t} dt.
	\label{eq:p_tau_omega_def1}
\end{equation}

The following derivation shows the equivalence of these two ways.
According to \autoref{eq:def_g_tau_t}, $\tilde{g}_{\tau} (t)$ takes the form of
\begin{equation}
\begin{aligned}
    	\tilde{g}_{\tau} (t) &= \int_{-\infty}^{+\infty} p (\tau\omega) e^{i \omega t} d\omega \\
     &= \frac{1}{\tau} \int_{-\infty}^{+\infty} p (\tau\omega) e^{i \omega \tau  \frac{t}{\tau}} d( \tau \omega) = \frac{1}{\tau} \tilde g(\frac{t}{\tau}). 
\end{aligned}
\end{equation}
The normalisation factor of $\tilde{g}_{\tau} (t)$ is  given by
\begin{equation}
	c(\tau):= \int_{-\infty}^{+\infty} |\tilde{g}_{\tau} (t) | dt =  \int_{-\infty}^{+\infty} \frac{1}{\tau}  |\tilde g(\frac{t}{\tau}) |
 dt = c.
\end{equation}
We define a normalised function as $$\mathrm{Pr}(t,\tau):=   \frac{|\tilde{g}_{\tau} (t)|}{c(\tau)} = \frac{1}{c} \frac{1}{\tau} |\tilde g(\frac{t}{\tau}) |
  =  \frac{1}{\tau}  g(\frac{t}{\tau}) $$
which can be expressed by
$ 
	\mathrm{Pr}(t,\tau) = \frac{1}{c(\tau)} \int_{-\infty}^{+\infty} p_{\tau} (\omega) e^{-i\theta_{t}} e^{i \omega t} d\omega,
$
and the dual form is given by
\begin{equation}
	p(\tau \omega) = \frac{c(\tau)}{2 \pi}  \int_{-\infty}^{+\infty} \mathrm{Pr}(t,\tau) e^{i\theta_{t}} e^{-i \omega t} dt.
	\label{eq:p_tau_omega_def2}
	\end{equation}
According to \autoref{eq:p_tau_omega_def2}, we have
\begin{equation}
\begin{aligned}
p(\tau \omega) 
	&=  \frac{c(\tau)}{2 \pi}  \int_{-\infty}^{+\infty} \frac{1}{\tau}  g(\frac{t}{\tau})  e^{i\theta_{t}}   e^{-i \omega t} dt \\
&= \frac{c(\tau)}{2 \pi}  \int_{-\infty}^{+\infty}   g(\frac{t}{\tau})   e^{i\theta_{t}}  e^{-i \tau \omega \frac{t}{\tau}} d  (\frac{t}{\tau}) \\
&= \frac{c}{2 \pi} \int_{-\infty}^{+\infty} g(t') e^{i\theta_{t'}}  e^{-i\tau\omega t'} dt'.
\end{aligned}
\end{equation}
Here, we used that $g(t)$ and $g( \tau t)$ have the same phase factor. The above result indicates that \autoref{eq:p_tau_omega_def2} can be converted to \autoref{eq:p_tau_omega_def1} and hence the two formalisms are equivalent.

As discussed in the main text, there are two necessary requirements for inferring the transition energy $\Endiff_{n',n}:= E_{n'} - E_n$: (1) a sufficiently large coherence $\Cohere_{n',n}$, which can also be regarded a spectral weight and (2) a proper function $p(\omega)$ that ensures that $\Endiff_{n',n}$ can be distinguished from other transition energies.
The following strategy can be used in order to satisfy the first condition.
We first prepare the initial state $\rho$ as the ground state of the noninteracting system governed by $\hatH_0$. The interaction $\hatH_1$ at $t = 0$ is then suddenly turned on. The state will be evolved under the Hamiltonian $\hatH  = \hatH_0 +\hatH_1$. 
In a weakly coupled regime, where the initial state $\rho$ is close to the ground state $\ket{0}$ of $\hatH $,  the initial state can be expanded using the first-order perturbation as
$ 
\rho \simeq \rho^{00}|0\rangle\langle 0|+\sum_{n \neq 0} \rho^{0 n}| 0\rangle\langle n|+\rho^{n 0}| n\rangle\langle 0|$.
The state coherence $\rho^{n 0}$ is nonzero, which indicates that a transition between the eigenstate $\ket{n}$ and the ground state $\ket{0}$ is allowed. Therefore, by appropriately choosing the observable, we can in principle detect $\Endiff_{n,0}$. 

We can also probe the transition energy between the single-particle excited states $\ket{n}$ and $\ket{n'}$ in the weakly coupled system, similar to the way introduced in the main text.
The transition could be expressed as $\braket{n|\hat{O}|n'} = \braket{\mathbf{q}  |\hat{O}|\bf q+ \bf k}$ where we represent the particle excitations in the momentum space, and the momentum selection rule is imposed. 
If the excitations are restricted to a single-particle manifold, we may choose an observable that conserves the particle number
$ \hat{O} = \sum_{\bf p,p'} A_{\bf p,p'}\hat{\gamma}_{\bf p}^{\dagger} \hat{\gamma}_{\bf p'}$. In this case,  the observation is ensured to be nonzero since $\braket{n|\hat{O}|n'} = A_{\mathbf{q}, \mathbf{q}+\mathbf{k}}$. The derivation of the coherence $\Cohere_{n',n}$ was shown in Methods.


\subsection{Extenstion of spectroscopic methods and the relation to the projection-based methods}
\label{sec:Spectroscopy_Cooling}

To further understand why the engineered spectroscopy methods could be used to select the transition energies, we show that our method is closely related to the spectral-filter-based quantum algorithms, which effectively realise imaginary time evolution and can thus infer the eigenstates and eigenvalues.
In this section, we analyse the mechanism by which our spectroscopy method is capable of detecting spectral properties.
We will discuss the relation between our method and algorithmic cooling developed in \cite{zeng2021universal}.

Recall that we have introduced the function $G(t)$, which can be equivalently written in the Schr\"odinger picture as
\begin{equation}
    G(t) =  \braket{\psi_0| e^{iHt} \hat{O} e^{-i\hatH t} | \psi_0}.
\end{equation}
This  definition of the two-time correlation can be extended as
\begin{equation}
    G(t_1,t_2) =    \braket{\psi_0| e^{iHt_1} \hat{O} e^{-iHt_2} | \psi_0}.
\end{equation}
Let us define a weighted Fourier transform of $ G(\tau t,\tau t')$   as
\begin{align}
G(\omega, \omega')=\frac{c^2}{(2\pi)^2}\int_{-\infty}^\infty G(\tau t,\tau t') g(t) g(t') e^{i \theta_t} e^{-i \theta_{t'}} e^{-i\tau\omega t}e^{i\tau\omega' t'}dtdt'.
\label{eq:c(x,x)}
\end{align}
We have
\begin{align}
G(\omega,\omega')=\sum_{n,n'=0} \Cohere_{n',n} p(\tau(E_n - \omega)) p(\tau(E_{n'} - \omega'))].
\label{eq:c(x,x)_2}
\end{align}
Here, the function $p(\tau\omega)$ is the dual Fourier transform of $g(t)$ related by $p(\tau\omega) = \frac{c}{2\pi} \int g(t) e^{i \theta_t}e^{-i \tau\omega t} dt$, which is the same as \autoref{eq:p_tau_omega_def1}.

\autoref{eq:c(x,x)_2} indicates that the energies of $\ket{n}$ and $\ket{n'}$, which are originally connected by the energy selection rule, are now decoupled. Therefore, one can directly evaluate the energy instead of the energy gaps by tuning the parameters $\omega,\omega'$.
In particular, when we 
consider $\hat{O} = I$ and $t' = 0$, where
\begin{equation}
\begin{aligned}
    G(\omega,0) &= \frac{c}{2\pi} \int_{-\infty}^\infty G(\tau t,0)e^{-i\tau\omega t} dt. \\
    &=   \frac{c}{2\pi} \int_{-\infty}^\infty \braket{\psi_0 | e^{i\tau Ht} |\psi_0} g(t)e^{-i\tau\omega t} dt \\
    &= \frac{c}{2\pi} \sum_j |c_j|^2 \int_{-\infty}^\infty  g(t) e^{i\tau (E_j - \omega) t} dt \\
    &= \sum_j |c_j|^2 p(\tau(E_j - \omega)).
\end{aligned}
\label{eq:G_omega_0}
\end{equation}
To see why $ G(\omega,0) $ could    select  the eigenvalues, let us consider a matrix function acting on the Hamiltonian as
 \begin{equation} \label{eq:g_H}
    \hat{p}( H ) := \sum_{i=0}^{N-1} p (E_i ) \ket{u_i}\bra{u_i},
\end{equation}
where $p(h) : \mathbb R  \rightarrow \mathbb C$ is a generic continuous-variable function determining the transformation of the spectrum of the Hamiltonian.
One can verify that
\begin{equation}
   G(\omega,0)  = \braket{\psi_0 | \hat{p}(\tau( H-\omega )) | \psi_0},
\end{equation}
which indicates that $G(\omega,0) $  effectively realises the spectral filter operator $\hat{p}$ on the initial state.
For instance, the projection operator could be $\hat{p}(\tau H) = e^{-  \tau^2 H^2}$, which projects out the contribution of other eigenstates with an increasing $\tau$. 

The eigenvalue information associated with the initial state $\ket{\psi_0}$ be expressed by
\begin{equation}
    P(\omega) = \sum_j |c_j|^2 \delta(\omega-E_j),
\end{equation}
and one can verify that
\begin{equation}
    G(\omega,0) = ({p} \star P)(\omega),
\end{equation}
which is because 
\begin{equation}
    ({p} \star P)(\omega) = \int_{-\infty}^\infty p(t) P(\omega - t) dt = \sum_j |c_j^2| \int_{-\infty}^\infty\delta(\omega - E_j - t) dt = \sum_j |c_j|^2 p(E_j - \omega).
\end{equation}

Note that the method in \cite{kokcu2024linear} could be regarded as a special case of our method when taking the filter operator to be an identity. They mainly discussed the dynamics simulation (rather than the spectroscopic feature estimation) within a linear response framework. 
Beyond the linear response regime, our method could be applicable to detect nonlinear spectroscopic features. Spectroscopic signatures appearing in nonlinear response can be found in \cite{nandkishore2021spectroscopic}.
For example, we can apply perturbations three times and obtain the higher order time correlation functions, similarly to the 2D coherent spectroscopy. We can resolve the continuum of the excitation spectrum by analysing the nonlinear susceptibility~\cite{wan2019resolving}.
It is worth noting that their method for simulating the dynamics of bosonic and fermionic systems could be employed in this context.

 

\subsection{Analysis of the algorithmic error and resource requirement}
\label{sec:detailed_error_analysis}

In this section, we provide proof for Lemma 1 and Lemma 2 in Methods.

\begin{proof}(of Lemma 1)

When $\omega$ is close to $\Endiff_j$ satisfying $|\omega -\Endiff_j | \leq 0.5 \varepsilon  $,
we have
\begin{equation}
	\Cohere_j - G(\omega) \leq (1 - \proj(\tau(\Endiff_j - \omega))) \Cohere_j  + \max_{i\neq j}  \proj(\tau(\Endiff_i - \omega)) \leq (1 - e^{- \tau^2 (0.5 \varepsilon)^2})	\Cohere_j + 0.05 \tau^2 \varepsilon^2 \Cohere_j
	\leq 0.3 \tau^2 \varepsilon^2 \Cohere_j.
\end{equation}
The second inequality uses the following fact  that the inequality  
\begin{equation}
	\max_{i\neq j}  \proj(\tau(\Endiff_i - \omega)) \leq  e^{- \tau^2 (0.9 \Resolution )^2} \leq   0.05 \tau^2 \varepsilon^2 \Cohere_j,
\end{equation}
holds when $ \tau =   \frac{1}{0.9 \Resolution} \sqrt{\ln \left( \frac{20}{\varepsilon^2 \Cohere_j} \right) } \geq 1$ and $\varepsilon \leq 0.2  \Resolution$.
Thus the first inequality in Eq.~(14) in Lemma 1 holds.

On the other hand, the quantity $\Cohere_j - G(\omega)$ can be lower bounded by
\begin{equation}
\begin{aligned}
	\Cohere_j - G(\omega) &\geq (1 - \proj(\tau(\Endiff_j - \omega))) \Cohere_j - \max_{i\neq j} \proj(\tau(\Endiff_i - \omega)) \\
	& \geq (1 - e^{-\tau^2\varepsilon^2)} \Cohere_j - e^{- \tau^2 (0.9 \Resolution )^2}\\
	&\geq 0.85 \tau^2 \varepsilon^2 \Cohere_j - 0.05 \tau^2 \varepsilon^2 \Cohere_j = 0.8 \tau^2 \varepsilon^2 \Cohere_j. 
	\end{aligned}
\end{equation}
In the second inequality, we used the fact that $|\Endiff_i - \omega| \geq 0.9 \Resolution$ for $i \neq j$.
The third inequality 
$
	e^{- \tau^2 \varepsilon^2} \leq 1 - 0.85 \tau^2 \varepsilon^2 
 $
 holds when $\tau \varepsilon < 1/2$, which can be achieved when $\varepsilon \leq \tilde{\mathcal{O}}( \Resolution (\ln {\Cohere_j}^{-1})^{-\frac{1}{2}})$.
	

\end{proof}

\begin{lemma}
\label{lemma:resource}
The estimation $\hat{\Endiff}_j $ is related to the true transition energy $\Endiff_j$ by
$ 
	|\hat{\Endiff}_j - \Endiff_j| \leq \varepsilon.
$
with a failure probability of $\delta$, when $\tau \geq  \frac{1}{0.9 \Resolution}\sqrt{ \ln \left( \frac{10}{\varepsilon^2 \Cohere_j} \right) }$, the cutoff $T   \geq 2\sqrt{2 \ln (\sqrt{10} \tau^{-1} \varepsilon^{-1})} $ and the   number of measurements $N_s \geq 200 (\varepsilon^4 \Cohere_j^2 \tau^4)^{-1} \ln(4/\delta)$.
\end{lemma}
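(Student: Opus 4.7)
The strategy is to convert a precision requirement in the frequency variable $\omega$ into a precision requirement on the \emph{value} of the estimator $\hat G(\omega)$, and then control the latter via the truncation error from the finite cutoff $T$ and the statistical error from a finite sample size $N_s$. Concretely, Lemma~1 already tells us that, provided $\tau \geq (0.9\Resolution)^{-1}\sqrt{\ln(10/(\varepsilon^2\Cohere_j))}$, the function $G(\omega)$ exhibits a separation of at least $\Delta := 0.5\tau^2\varepsilon^2\Cohere_j$ between the ``good'' region $\{\omega:|\omega-\Endiff_j|\le 0.5\varepsilon\}$ (where $G(\omega)\ge \Cohere_j(1-0.3\tau^2\varepsilon^2)$) and the ``bad'' region $\{\omega:\varepsilon<|\omega-\Endiff_j|<0.1\Resolution\}$ (where $G(\omega)\le \Cohere_j(1-0.8\tau^2\varepsilon^2)$). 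If we can guarantee, with probability at least $1-\delta$, that $|\hat G(\omega)-G(\omega)|<\Delta/2$ uniformly over the search interval, then $\hat\Endiff_j=\mathrm{argmax}_\omega \hat G(\omega)$ must lie in the good region and therefore $|\hat\Endiff_j-\Endiff_j|\le \varepsilon$.

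Next I would decompose $\hat G(\omega)-G(\omega)=\bigl(\hat G(\omega)-G^{(T)}(\omega)\bigr)+\bigl(G^{(T)}(\omega)-G(\omega)\bigr)$ and control the two pieces separately. For the \emph{truncation} term, the Gaussian tail estimate already recalled in the main text gives $|G^{(T)}(\omega)-G(\omega)|\le e^{-T^2/4}$; substituting the prescribed $T\ge 2\sqrt{2\ln(\sqrt{10}\,\tau^{-1}\varepsilon^{-1})}$ yields the bound $\tau^2\varepsilon^2/10$, comfortably below $\Delta/4$ for any $\Cohere_j\le 1$ (and this is the quantitative origin of the factor $\sqrt{10}$ in the cutoff). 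For the \emph{statistical} term, I would observe that with the Gaussian choice $p(\omega)=e^{-\omega^2}$ one has $c/(2\pi)=1$, so each single-shot estimator $\hat G^{(T)}_i(\omega)=\hat o(\tau t_i)e^{i\theta_{t_i}}e^{i\tau\omega t_i}$ is bounded in modulus by $1$, and its real and imaginary parts are each bounded by $1$. Hoeffding's inequality, applied to each part separately, then gives
\begin{equation}
\mathbb{P}\bigl(|\hat G(\omega)-G^{(T)}(\omega)|>\varepsilon_s\bigr)\le 4\exp\!\bigl(-N_s\varepsilon_s^2/4\bigr)
\end{equation}
for any fixed $\omega$. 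Choosing $\varepsilon_s=\Delta/4=0.125\,\tau^2\varepsilon^2\Cohere_j$ and requiring the right-hand side to be at most $\delta$ produces the sample count $N_s\gtrsim 200(\varepsilon^4\Cohere_j^2\tau^4)^{-1}\ln(4/\delta)$ as stated.

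The remaining subtlety, which I expect to be the main obstacle, is passing from concentration at a \emph{single} $\omega$ to uniform concentration over the continuous search interval, since $\hat\Endiff_j$ is defined as an argmax. My plan is to handle this by a Lipschitz-net argument: the single-shot estimator $\hat G^{(T)}_i(\omega)$ is Lipschitz in $\omega$ with constant $\tau T$, so the same holds in expectation for $\hat G(\omega)$; hence it suffices to establish concentration on an $\mathcal{O}(\Delta/(\tau T))$-net of the interval $[\Endiff_j-\Resolution/2,\Endiff_j+\Resolution/2]$. The size of this net grows only polylogarithmically in $1/\varepsilon$ through the logarithmic factors hidden in $\tau$ and $T$, so it can be absorbed into the $\ln(4/\delta)$ factor (possibly at the cost of slightly enlarging the constant $200$). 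After this union bound the composite error $|\hat G-G|$ remains below $\Delta/2$ uniformly with probability at least $1-\delta$, which, combined with the separation provided by Lemma~1, yields $|\hat\Endiff_j-\Endiff_j|\le\varepsilon$ and completes the proof. Proposition~2 then follows by substituting the prescribed $\tau$ and $T$ into the total running time $\tau\cdot T\cdot N_s$.
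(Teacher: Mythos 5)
Your proposal follows essentially the same route as the paper's proof: invoke Lemma~1 for the separation between the good and bad frequency regions, split $\hat G - G$ into a truncation piece bounded by the Gaussian tail $e^{-T^2/4}$ and a statistical piece bounded by Hoeffding applied to the (bounded) single-shot estimators, and conclude by the peak-finding/contradiction argument, recovering the same sample count $200(\varepsilon^4\Cohere_j^2\tau^4)^{-1}\ln(4/\delta)$. The only real difference is in how uniformity over $\omega$ is handled: the paper takes the argmax over an explicit $\varepsilon/2$-grid of $M = 2[\Resolution/\varepsilon]+1$ classically evaluated points and simply asserts the Hoeffding bound ``$\forall \omega$'' without a union bound, whereas you keep a continuous argmax and supply the uniformity via a Lipschitz-net union bound --- a slightly more careful treatment of the same step.
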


\begin{proof}
	
Suppose we already have a rough estimation of $\Endiff_j$ which lies in the range of 	$[\Endiff_j^L, \Endiff_j^R ]$ with $\Endiff_j^L = \Endiff_j - \Resolution/2 $ and $\Endiff_j^R = \Endiff_j + \Resolution/2$.
Same as that in \cite{zeng2021universal,wang2022quantum}, we consider a uniformly discretised grid with the grid resolution at least $\varepsilon$. 
With this consideration, the discredited frequency is given by 
$
	\omega_k = \Endiff_j^L - 0.5 \Resolution + \frac{\Resolution (k-1)}{M} $
for $k = 1, 2, ..., M$ with $M =  2[\Resolution/\varepsilon]+1$ to ensure the resolution.
It is worth noting that computing $\hat{G}^{(T)}(\omega)$ with different values of $\omega_k$  requires purely classical computing, and $M$ is irrelevant to the measurement numbers $N_s$. One can use a much smaller grid size without demanding any additional quantum measurements. 

The following shows that the transition energy can be estimated within a certain precision given the resources listed in \autoref{lemma:resource}.
With the grid defined above, the estimation is determined by
\begin{equation}
    \hat \Endiff_j = \mathrm{argmax}_{k} \hat{G}^{(T)}(\omega_k).
    \label{eq:lemma_Gk_determine}
\end{equation}
The question is whether the estimation has a bounded error satisfying $|\hat \Endiff_j - \Endiff_j| \leq \varepsilon$.

The error due to a finite number of measurements can be bounded using the Hoeffding inequality. 
The estimator $\hat{G}^{(T)}(\omega)$ is related to ideal ${G}^{(T)}(\omega)$ by 
\begin{equation}
	|  {G}^{(T)}(\omega) - \hat{G}^{(T)}(\omega) | \leq \Cohere_j \varepsilon_n, ~\forall \omega \in \mathbb{R}
	\label{eq:sampling_error}
\end{equation}
with a failure probability $\delta/2$ when $N_s =  2(\varepsilon_n \Cohere_j)^{-2 } \ln( \frac{4}{\delta})$.
Here we used the fact that ${G_i}^{(T)}(\omega) \leq 1$.
Since the grid resolution is $\varepsilon/2$,  there exists $k_m$, such that
\begin{equation}
	|\omega_{k_m} - \Endiff_j| \leq 0. 5 \varepsilon. 
\end{equation}
Combining the truncated spectral detector  Eq.~(16) (defined in Methods) and \autoref{eq:sampling_error}, we have 
\begin{equation}
\begin{aligned}
	\Cohere_j - \hat{G}^{(T)}(\omega_{k_m}) &\leq (\Cohere_j - {G}(\omega_{k_m})) + |{G}(\omega_{k_m})- {G}^{(T)}(\omega_{k_m})| + |{G}^{(T)}(\omega_{k_m})- \hat{G}^{(T)}(\omega_{k_m}))|\\
	&\leq 0.3 \tau^2 \varepsilon^2 \Cohere_j  + \varepsilon_T \Cohere_j + \Cohere_j \varepsilon_n
	\leq 0.5  \tau^2 \varepsilon^2 \Cohere_j \\
	\end{aligned}
\end{equation}	
with a failure probability $\delta/2$.
In the last inequality, we have set $\varepsilon_n = 0.1 \tau^2 \varepsilon^2$ and $ \varepsilon_T \leq   0.1 \tau^2 \varepsilon^2$.	The latter condition can be satisfied by setting $T \geq 2 \sqrt{2 \ln (\sqrt{10} \tau^{-1} \varepsilon^{-1})}$.
This indicates that 
\begin{equation}
\hat{G}^{(T)}(\hat{\Endiff}_j) \geq \hat{G}^{(T)}(\omega_{k_m}) > (1-0.5    \tau^2 \varepsilon^2  )\Cohere_j.
    \label{eq:hat_G_omage_away}
\end{equation}


On the other hand, if the estimator determined by \autoref{eq:lemma_Gk_determine} fails to give an accurate estimation up to $\varepsilon$, that is, $|\hat{\Endiff}_j-{\Endiff}_j | > \varepsilon$, then  using Lemma 1 we have $\Cohere_j - {G}(\hat{\Endiff}_j) \geq 0.8  \tau^2 \varepsilon^2 \Cohere_j  $.
We have
\begin{equation}
\begin{aligned}
	\Cohere_j - \hat{G}^{(T)}(\hat{\Endiff}_j) &  \leq (\Cohere_j - {G}(\hat{\Endiff}_j) ) + | {G}(\hat{\Endiff}_j) - {G}^{(T)}(\hat{\Endiff}_j) |  + (|{G}^{(T)}(\hat{\Endiff}_j) - \hat{G}^{(T)}(\hat{\Endiff}_j)|  \\
	& \geq 0.8  \tau^2 \varepsilon^2 \Cohere_j - \varepsilon_T \Cohere_j - \varepsilon_n \Cohere_j \geq 0.6  \tau^2 \varepsilon^2 \Cohere_j
\end{aligned}
	\end{equation}
with a failure probability $\delta/2$.
The truncated estimator thus can be bounded 
\begin{equation}
	\hat{G}^{(T)}(\hat{\Endiff}_j) \leq \Cohere_j (1- 0.6  \tau^2 \varepsilon^2),
\end{equation}
which violates \autoref{eq:hat_G_omage_away}.
Consequently, the assumption  does not hold, which in turn proves that $ |\hat{\Endiff}_j-{\Endiff}_j |  \leq \varepsilon$ with a failure probability at most $\delta$.

\end{proof}



\subsection{Coherent error effect in transition energy estimation }
\label{sec:error_effect}

Our proposal relies on the measurement of $\hat{O}(t)$, which essentially requires the realisation of $e^{-i {H} t}$. The implementation by using product formulae will introduce a coherent Trotter error. 
By way of illustration, let us consider a simplified lattice model, whose Hamiltonian consists of two non-commutive terms as $H = H_1 + H_2$.
The first-order Trotter formula reads
\begin{equation} \label{eq:S1x}
\begin{aligned}
S_1(t) =   e^{-it H_2}e^{-it H_1} = e^{-it H_{\textrm{eff}}}
\end{aligned}
\end{equation}
where $H_{\textrm{eff}}$ has an explicit form as $H_{\textrm{eff}} =  H + \frac{1}{2} [H_1,H_2] + ... $ by the Baker–Campbell–Hausdorff  expansion.
We can find that this algorithmic error brings about a perturbation to the original Hamiltonian which can be formally represented as $H_{\textrm{eff}} = H + \delta H$.
The spectral features that we can actually probe are those of the new Hamiltonian.

Let us denote the eigenbases of the new effective Hamiltonian as $\{ \ket{\nu} \}$.
The quantity $G(\omega)$ becomes
\begin{equation}
	G(\omega) = \sum_{\nu',\nu} \Cohere_{\nu',\nu} \proj (E_{\nu'}-E_{\nu} - \omega).
\end{equation}

In the case of a lattice model which preserves the translation invariance, the new Hamiltonian also conserves translation invariance, that is, $\hat{\mathbf{P}} \ket{\nu} = \mathbf{p}_{\nu}  \ket{\nu}$.
The observable expectation is given by
$ 
	\braket{\nu|\hat{O}(\mathbf{x})|\nu'} = e^{i ({\mathbf{p}}_{\nu'} - {\mathbf{p}_{\nu} )}\mathbf{x}} \braket{ \nu | \hat{O}| \nu'}
$.
In this case, $\nu'$ and $\nu$ are connected  according to the momentum selection rule $\mathbf{k} = {\mathbf{p}}_{\nu'} - {\mathbf{p}_{\nu}}$, which is similar to the noiseless one. However, noise will result in a deviation in transition energies.

It is natural to examine the noise effect using   perturbation theory. The eigenenergy has a deviation from the original one 
$E_{\nu'} = E_n + \delta E_n $ with $ \delta E_n = \braket{n| \delta H| n}$.
The first-order change in the $n$th eigenstate is related to the unperturbed one by $\ket{\nu} = \ket{n} + \sum_{m \neq n} A_{mn} \ket{m}$ with $A_{mn} = \Delta_{nm}^{-1}{\braket{m|\delta H |n}} $.
The quantity $G(\omega)$ becomes
\begin{equation}
	G(\omega) = \sum_{\nu',\nu} \Cohere_{\nu',\nu} \proj (E_{n'}-E_{n} + \delta E_{n'}- \delta E_{n}  - \omega).
\end{equation}
The resolved energy difference  $E_{n'}-E_{n} + \delta E_{n'}- \delta E_{n}$ has a deviation from the original one. Nonetheless, the momentum selection rule still holds, which imposes $\mathbf{k}=\mathbf{p}_{\nu'}-\mathbf{p}_{\nu}$.
Here the coherence $\Cohere_{\nu',\nu}$ is changed from $\Cohere_{n',n}$.
Up to the first order, the coherence   has a difference as
\begin{equation}
\begin{aligned}
\Cohere_{\nu',\nu} &= \left (\rho^{n'n} + \sum_m A_{mn}\rho^{n',m} +  \sum_m A_{mn'}^*\rho^{m,n} \right) \left( \braket{n| \hat{O}|n'} + \sum_m A_{mn'} \braket{n| \hat{O}|m} + \sum_m A_{mn}^* \braket{m| \hat{O}|n'} \right) \\
	&= \Cohere_{n',n} +  \braket{n| \hat{O}|n'} \left(  \sum_m A_{mn}\rho^{n',m} +  \sum_m A_{mn'}^*\rho^{m,n} \right ) + \rho^{n'n} \left(\sum_m A_{mn'} \braket{n| \hat{O}|m} + \sum_m A_{mn}^* \braket{m| \hat{O}|n'} \right)
\end{aligned}
\end{equation}
One can find that in addition to $\Cohere_{n',n} $,    some eigenstates $\ket{m}$, which are absent in the original selection rule, also contribute to $G(\omega)$.
In the case where translation invariance is broken, the momentum selection rules will be lifted and the dispersion becomes broadened. This is similar to that of disordered systems. We can similarly derive the deviation of the energy and the coherence. We leave more detailed derivation to dedicated readers.

\section{Noise analysis, mitigation and resource requirements  }

As discussed in the main text, the error sources include (1) algorithmic error, (2) uncertainty error due to a finite number of measurements and (3) error due to imperfect quantum operations.
The preceding sections have discussed the algorithmic error and the uncertainty error (which is an unbiased error). We have discussed the resource requirement for the circuit depth and the number of measurements to ensure the simulation accuracy up to precision $\epsilon$.
In this section, we will analyse the error effect due to device noise and its mitigation strategy.

\subsection{General strategy}\label{sec:general_QEM}

We first introduce a general quantum error mitigation (QEM) strategy developed in \cite{sun2021mitigating} based on probabilistic error cancellation, and discuss the resource cost when considering noise.
In our protocol, we mainly need the implementation of $e^{-i H t_i}$ with different time lengths $t_i$. 
With analogue quantum simulators, $e^{-i H t_i}$ is directly implemented through engineering the Hamiltonian of controllable quantum hardware. 
When  assuming a Markovian type of noise that appears due to unwanted coupling between the device and the environment, the time evolution could  be described by the Lindblad master equation of the noisy state  $\rho_N(t)$  
\begin{align}\label{eqn:noisyeqn}
\frac{d \rho_N(t)}{dt}=-i[H(t), \rho_N(t)]+\lambda \mathcal{L}\big[\rho _N(t)\big].
\end{align}
In the above equation, $\mathcal{L}[\rho]= \frac{1}{2}\sum_k (2L_k \rho L_k^\dag -L_k^\dag L_k \rho -\rho L_k^\dag L_k )$ represents the noise superoperator with error strength $\lambda$  that describes the coupling with the environment. Common noise models include dephasing and damping types of noise, which can be described by local  Lindblad terms~\cite{houck2012chip,georgescu2014quantum}. 

\sun{
The main idea of probabilistic error cancellation~\cite{endo2018practical,temme2017error} is to apply a recovery operation to the noisy state, so that the noisy process is mitigated. Below we briefly the probabilistic error cancellation proposed in \cite{sun2021mitigating}.
Specifically, the noisy evolution of \autoref{eqn:noisyeqn} can be represented as 
${\rho_{N}(t+\delta t)} = \mc E_{N}(t)	{\rho_{\alpha}(t)}$ within a small time step $\delta t$ where  $\mc E_{\alpha}(t)$ denotes the noisy channel within small $\delta t$. 
A recovery operation $\mc E_Q$ is an operation that can approximately map the noisy evolution back to the noiseless one up to the first order as
$ 
\mc E_{I}(t)=\mc E_{Q}  \mc E_{N}(t)+\mc O(\delta t^2) $.
We can adopt a probabilistic way to effectively realise $\mc E_{Q}$. That is, we can decompose $\mc E_{Q}$ into a linear sum of physical operators $\{\mathcal{B}_j\}$ (i.e. basis operators) $
\mc E_Q = c \sum_{j} \alpha_j p_j \mathcal{B}_j,
$
with coefficients $c=1+\mc O(\lambda \delta t)$, $\alpha_j=\pm1$, and a normalised probability distribution $p_j$. The cost is $ \mc O (\exp( \lambda T_{\max} )$, and thus, the error mitigation is efficient as long as $\lambda T_{\max}$ is bounded.
We could see that in order to effectively implement QEM, the total required time should be short -- if the maximum time is short, then the number of measurements can be bounded.
As the time complexity of our method is $\mc O (\log (1/\epsilon))$, the total running time is $ \mathcal{O} (\textrm{poly}(1 / \epsilon) )$.
}

For a short time evolution,  the QEM strategy can be run in a stochastic setting, and it works for both analogue and digital quantum simulators. The number of the basis operations required for performing error mitigation on average scales linear in time as $\mc O(\lambda T_{\max})$.



\subsection{Error mitigation for global depolarising noise}

Below, we use a simple example to illustrate the error mitigation process. In order to mitigate the error effect, we need to assume the noise type. 
Within a short time, a simple way is to consider a depolarising noise model: the state remains unaffected with probability $\lambda \delta t$ while becomes a mixed state with probability $1 - \lambda \delta t$, with noise strength characterised by $\lambda$.
This noisy process can thus be described by
\begin{equation}
    \mathcal{E}_{\delta t} (\rho) = (1 - \lambda \delta t) \rho + \lambda \delta t \rho_{\mathrm{mix}}
\end{equation}
where $ \rho_{\mathrm{mix}} = \frac{I}{2^N} $ is the maximally mixed state and $I$ is the identity matrix as defined in the main text.
We could see that the noisy time-evolved state can be described by
\begin{equation}
 \rho(t) =   \prod_{\delta t}^T (\mathcal{E}_{\delta t} \circ \mathcal{U}_{\delta t}) (\rho ) = \Lambda(t)^{-1} \mathcal{U}_t (\rho) +  (1 -  \Lambda(t)^{-1}  )  \rho_{\mathrm{mix}}
\end{equation}
where we have defined $\mathcal{U}_t  := U (\cdot) U^{\dagger}$ and $\Lambda(t) = e^{\lambda t}$.
The overall action can be described by
\begin{equation}
    \mathcal{E} \circ \mathcal{U}_t (\rho ) = \Lambda(t)^{-1} U_t \rho U_t^{\dagger} + (1 -  \Lambda(t)^{-1} ) \rho_{\mathrm{mix}}
\end{equation}
This can be understood by evolving the state under $\mathcal{U} $ followed by a noise channel   $\mathcal{E} (\cdot) = \Lambda(t)^{-1} (\cdot) + (1 -  \Lambda(t)^{-1} ) \rho_{\mathrm{mix}} $.
The effective action of such a noise channel is a global depolarising noise. 
We can find the noise channel coupled with the unitary operator is only dependent on $t$.
The expectation value of a Pauli operator is \begin{equation}
    \mathbb E \hat{o}_{noisy}(t) = \tr (  \mathcal{E} \circ \mathcal{U}_t (\rho )) = \Lambda(t)^{-1} \mathbb E \hat{o}_{ideal}(t).
\end{equation}.

As discussed in the main text, we can determine $\lambda$ by fitting the noisy results with the ideal ones, in a similar spirit of randomised benchmarking.   We run the circuit $(\tilde {\mathcal{U}} _{t/m} \circ \mathcal{U}_{t/m})^m = \mathcal{I}$ where $\tilde {\mathcal{U}} _{t/m} = U^{\dagger}(t/m) (\cdot) U(t/m) $, which should be an identity channel in the ideal case. In the presence of noise, it becomes 
\begin{equation}
    ( \mathcal{E}_{t/m} \circ  \tilde {\mathcal{U}} _{t/m}\circ \mathcal{E}_{t/m} \circ \mathcal{U}_{t/m})^m (\rho) =  \Lambda(2t)^{-1}  \rho  + (1 -  \Lambda(2t)^{-1} )\rho_{\mathrm{mix}}
    \label{eq:noisy_fit}
\end{equation}
when the noise is gate-independent.
We could run it by fitting the result undergoing by the channel $ (\mathcal{E}_{t/m} \circ  \tilde {\mathcal{U}} _{t/m}  \circ \mathcal{E}_{t/m} \circ \mathcal{U}_{t/m})^m $   with different $t$ and $m$, such that this fitting process could be robust against state preparation and measurement errors.

Recall that each $\hat{o}(\tau t_i)$ is measured on a quantum computer.
When doing error mitigation by scaling, the variance of $G(\omega)$ is amplified by factor $ \Lambda(\tau |t_i|) $. 
In order to suppress the statistical error to $\epsilon$, the number of measurements is amplified by the factor.
The variance is related to $\Lambda(\tau  \max_i |t_i|)$.
Therefore, when the time is $\mathcal{O} ( \log (1/\epsilon))$, the total time is $\mathcal{O} (\textrm{poly}(1 / \epsilon) )$. 
For other methods with the maximum time complexity being $\mathcal{O} (\textrm{poly}(1/\epsilon))$,  the total time complexity is $\mc O ( \exp(1/\epsilon))$.

It is worth noting that the above process can be regarded as an effective  noise mitigation at the algorithmic level. This is in contrast to the case of general noise discussed in \autoref{sec:general_QEM} where physical operations (chosen from the set $\{\mathcal{B}_j \}$ are required to mitigate the noise.  

\begin{figure}[t]
    \centering
    \includegraphics[width=0.5\linewidth]{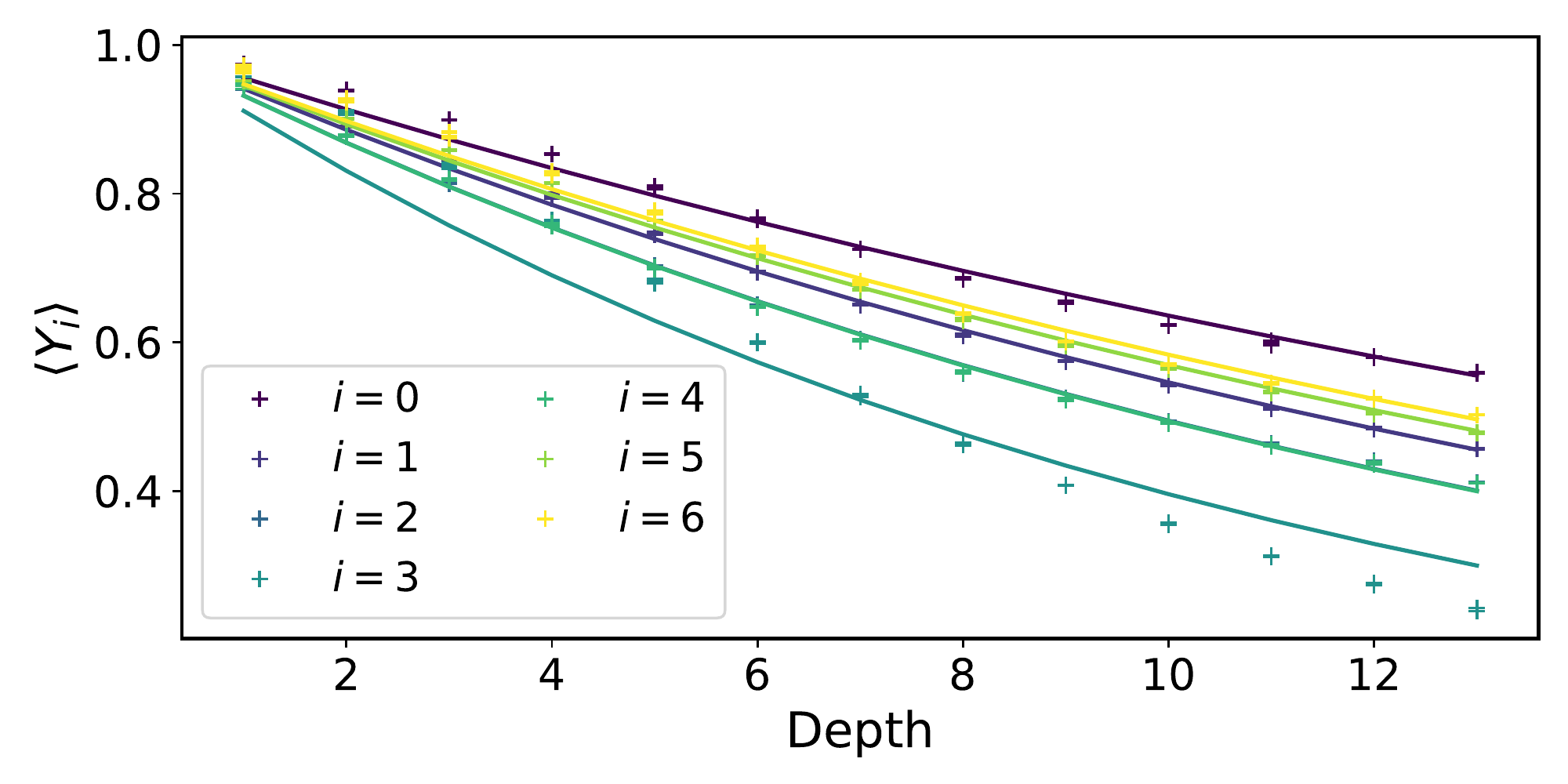}
    \caption{Exponential fitting for individual qubits in the 7-site Ising model with local depolarising noise. Noise strength is set to $p=0.005$. The setup is the same as that in Figure 5 in the main text.
    }
    \label{fig:single_q_fitting}
\end{figure}

\begin{figure}[t]
    \centering
    \includegraphics[width=0.5\columnwidth]{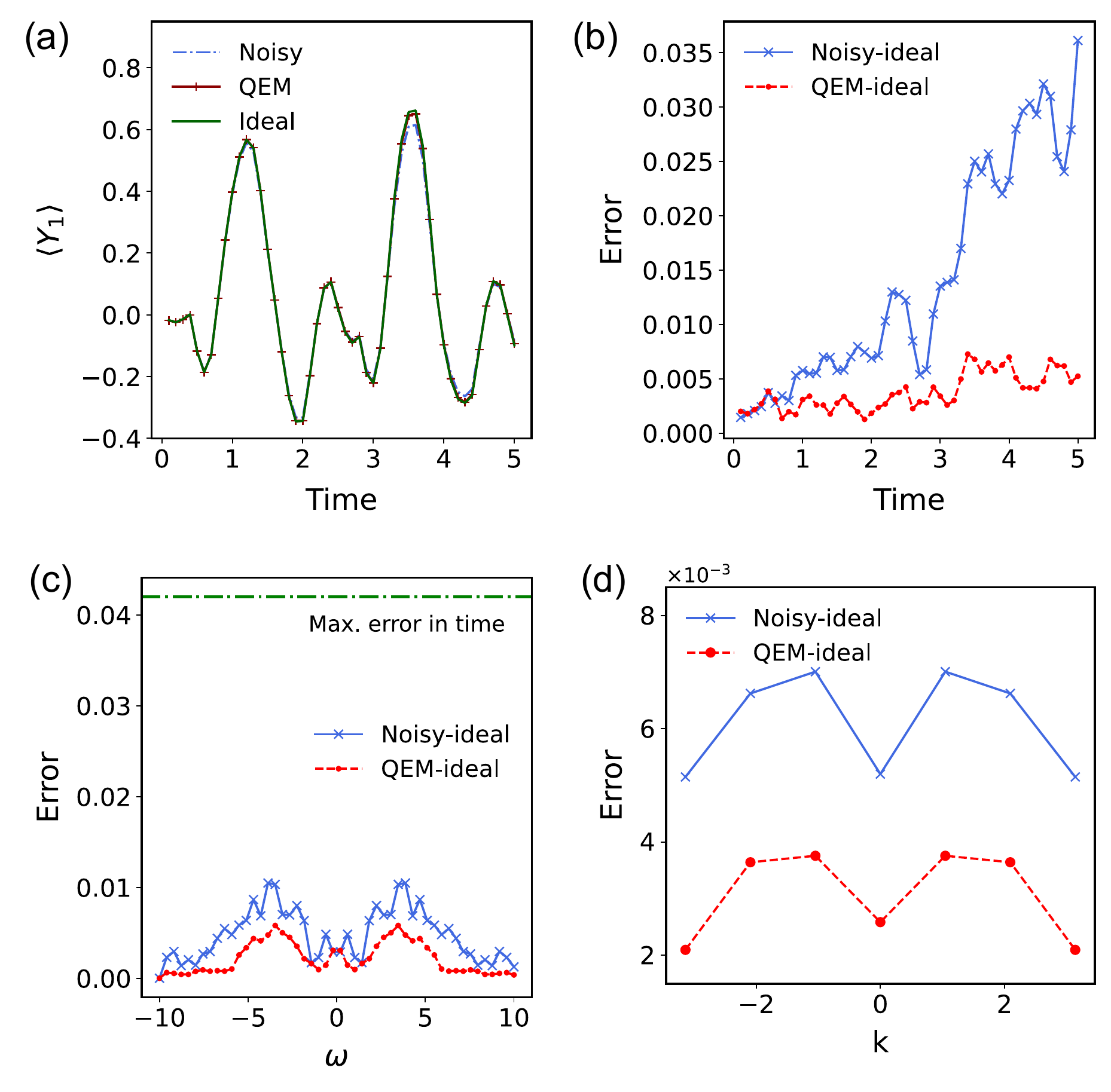}
    \caption{Noisy and error-mitigated results for spectral property estimation. (a) Time evolution of the observable $\braket{Y_i}$ at site $i=1$. The figure shows the results of the noisy (with noise rate $p=0.001$), the error mitigated  and the ideal (noiseless) cases for a 7-site Ising model. (b) RMSE of the noisy and the error-mitigated results in the time domain. (c) Error of the noisy and error mitigated spectrum in the frequency domain. The green line represents the maximum value of the error in the time domain to compare both results. (d) Error in the k-space.}
    \label{fig:result_low_noise}
\end{figure}
\begin{figure}[t!]
    \centering
    \includegraphics[width=0.5\columnwidth]{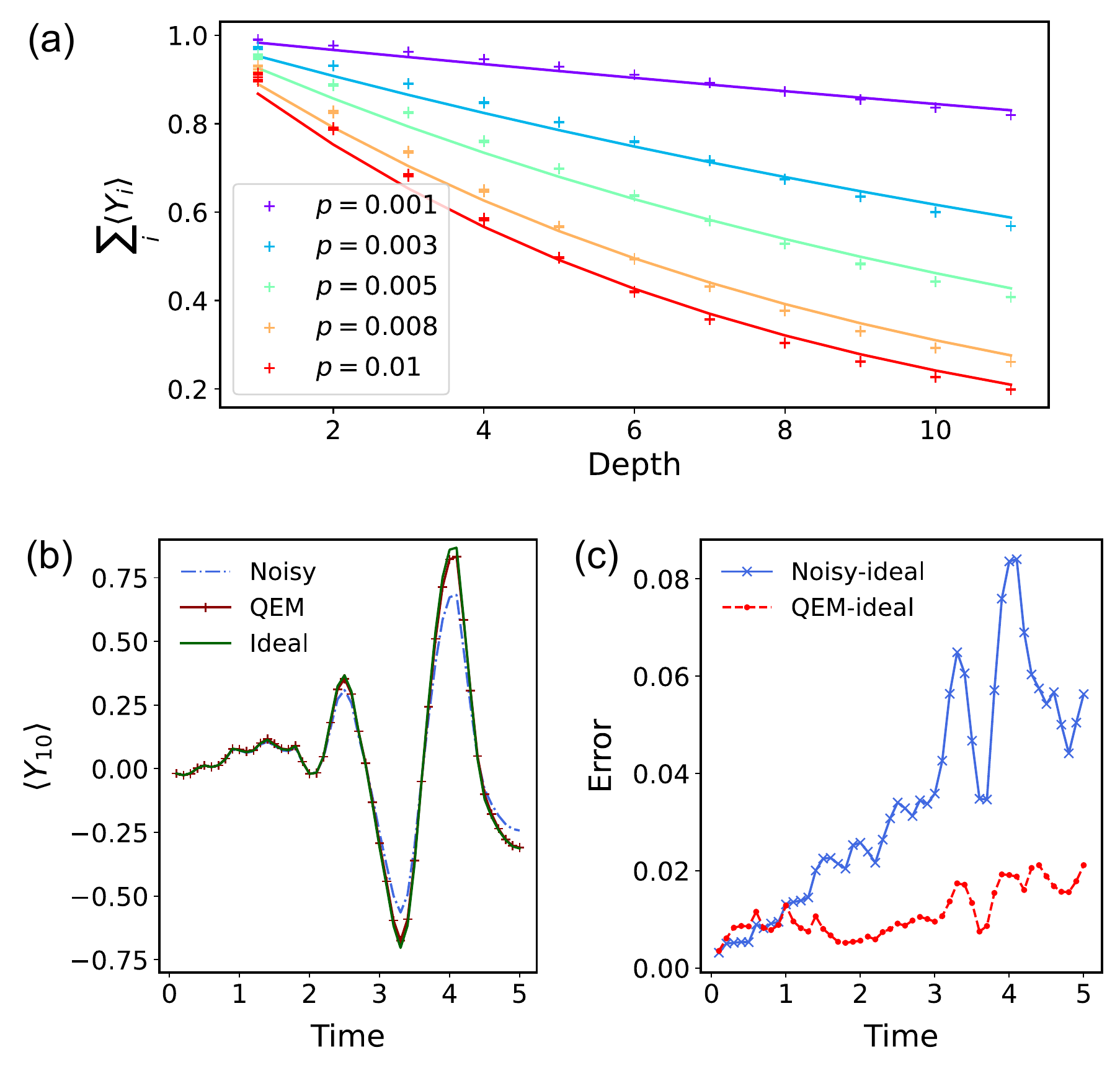}
    \caption{Results of the error mitigation on an 11-site Ising model. (a) Average case fitting to exponential function with predictive power $R=0.99$ for five different noise rates. (b) Time evolution of the observable $\braket{Y_i}$ at site $i=10$. In this case, the noise rate is set at $p=0.004$. (c) RMSE of the noisy and error mitigated results during the time evolution.}
    \label{fig:result_L_11}
\end{figure}

\subsection{Numerical simulation}

In this section, we provide additional simulation results to support the noise mitigation strategies proposed in the main text. Specifically, we display fittings results for each qubit, results for different noise rates, and results for larger system sizes.

In the main text, we demonstrated that the average expectation value $\sum_i {\braket{Y_i}}$ can be approximated by an exponential decay function. In \autoref{fig:single_q_fitting}, we present the fitting results for the individual expectation value in the 7-site Ising model with $h_z=0.1$. For each qubit $i$, the observable $\braket{Y_i}$ is fitted using an exponential decay function, with high predictive power $R\geq 0.98$. These individual fittings serve as the basis for our error mitigation strategy, showing that each qubit noisy measurement outcomes can be corrected effectively using these exponential functions. We note that this fitting strategy may be less effective for more general types of noise, and thus, we need error mitigation to obtain reliable results.

In \autoref{fig:result_low_noise}, we show the results for the same 7-site Ising model with a lower noise rate $p=0.001$. The figure presents the time dynamics and frequency domain analysis under different conditions: noisy, error mitigated, and ideal (noiseless). The QEM protocol continues to perform well in mitigating the noise effect, with observable errors reduced both in the time  and frequency domains. These results indicate that our spectroscopic protocol maintains effective across different noise rates.
To validate the scalability of our method, \autoref{fig:result_L_11} shows the results for an 11-site Ising model with noise rate $p=0.004$. We follow the same procedure as outlined above. The fitting (a), time dynamics (b) and time domain errors (c) are shown in \autoref{fig:result_L_11} with the noisy, error-mitigated, and ideal cases. The results highlight that the QEM approach continues to be effective even for relatively larger system sizes.\\ 






\section{Comparison with other related works}


Probing the spectral features of quantum systems is a highly active area of research that has attracted increasing attention. Numerous studies have been conducted in this direction. In this section, we compare our work with several representative studies in the field.

To begin with, we summarise the key elements of our method.
This work concerns the central components in probing the transition energies and excitation spectra. The key elements in the spectroscopic method include how the initial state and  the observable can be found to observe the transition energies, and whether spectral features can be virtually probed within a short time length.
For the first one, we consider an initial state that contains a branch of excitations that we want to probe and we require a nonvanising $\braket{n|\hat{O}|n'}$ to observe the transition between $\ket{n}$ and $\ket{n'}$. As widely accepted in the community,  these requirements in general remain a challenge, though we have provided a few insights into finding observables in typical quantum systems. Additionally, as the measurement complexity in the estimation of multiple observables can be reduced by using classical shadow methods, we can extract the spectral information from a large number of observables, which is the key idea in  \cite{chan2022algorithmic}.
For the second one, we have revealed a close relation between our method and the spectral filter methods, which naturally select the eigenenergies by projecting out the contributions from other eigenstates. Below we discuss related works from different aspects.

\subsection{Scattering spectroscopy experiments and simulation of the experiments}

It is interesting to discuss the similarities and differences between our method and conventional spectroscopy techniques. Both are capable of obtaining the energy excitation spectrum of a quantum system, but are different from operational perspectives.
We first provide a brief introduction to spectroscopy.
Experimental spectroscopy is a state-of-the-art probe approach that is used to uncover complex quantum many-body behaviours. 
In magnetic neutron scattering, for example, neutrons interact with spins of electrons, and the intensity of the scattered neutrons reflects the magnetic response of electrons in the materials. This in turn   carries certain information about the magnetic interaction in the materials being probed. 
The observable in inelastic neutron scattering~\cite{lovesey1984theory,boothroyd2020principles} is the dynamical structure factor $S(\mathbf{Q}, \omega )$,  also known as the magnetic response function, which is related to a two-point unequal-time correlator
\begin{equation}
 C(t,t') = \braket{\hat{\mathbf{S}}_1(t) \hat{\mathbf{S}}_2(t')},
 \label{eq:two_UTC}
\end{equation}
 by the Fourier transform, where $\hat{\mathbf{S}}$ is a spin operator. 
In conventional spectroscopy, the observables are related to two-point unequal-time correlation functions $C(t,t')$ and the initial state is in its equilibrium.
The two-point unequal-time correlator contains spectral information on the spin dynamics of a many-body system. 
In a translationally invariant system,  the dynamical structure factor $S(\mathbf{Q}, \omega )$ reaches its local maximum when  the energy selection rule, as well as the momentum selection rule, are both satisfied.
In a spectroscopic experiment, consequently, we usually track the {peak} of intensities in the neutron scattering spectrum, from which we can infer the energy dispersion. Some  degrees of freedom in engineering the system are possible in spectroscopy experiments, such as through the application of an external electric or magnetic field. Nevertheless, the cost is huge because of the extreme experimental conditions and the high requirements in synthesising pure materials (pure in the sense that there are not many purities and the interactions types are clear).



Several works considered simulations of spectroscopy~\cite{lee2021simulation,lee2022simulating}, which is based on the simulation of the two-point unequal-time correlation function given by \autoref{eq:two_UTC}. A drawback of the simulated spectroscopy is that it inherits the limitation of spectroscopy experiments and does not overcome it. In addition, in order to measure the unequal-time correlation function, a Hadamard-test circuit (a controlled time evolution) is usually required.
As mentioned before, in the spectroscopy experiment or its simulation, the samples to be probed are in their equilibrium state, and thus, the information is restricted to the diagonal form.  An advantage of our method is that because our initial state is not a steady state, and thus it can probe the energy difference between different excited states.
In terms of implementation, they need one ancilla qubit that controls the rest of the qubits. In contrast, there is no overhead in compiling the non-local gate in our protocol and thus for lattice models, the depth within each time step is $\mc O(1)$.



In our work, the transition between different excited states can be probed given that the coherence is nonzero. One contribution of this work is to analyse how to choose the initial state and the observables in a physics-inspired way. This is rarely discussed in existing works. For example, in a recent paper~\cite{kokcu2024linear}, they assumed that the ground state could be prepared as the initial state, although they argued that this is not the scope of their work. As indicated by the complexity conjecture, the ground state and the thermal state are hard to prepare.


\sun{
Our method is not restricted to the implementation in an analogue way. Indeed, our method is more versatile and can be useful when FTQC is advent.
The reason why the method developed in this work can go beyond pure analogue quantum simulation is twofold. The first is about the initial state preparation, and the second one is the the versatility in the time evolution.
To see this point concretely, we give a class of examples of which analogue quantum computers may be hard to probe.
Let us start with a model Hamiltonian for superconductivity. 
Because of the large numbers of particles involved, the fluctuations in the number of Cooper pairs should be small, which suggests a mean-field approximation to the BCS Hamiltonian.
The BCS Hamiltonian becomes quadratic, which reads
\begin{equation}
\label{eq:BCS_Hamil}
H=\sum_{{k}, \sigma} \epsilon_{{k}} \hat c_{{k}, \sigma}^{\dagger} \hat c_{{k}, \sigma}+\sum_{{k}}\left(\Delta_{{k}} \hat c_{{k}, \uparrow}^{\dagger} \hat c_{-{k}, \downarrow}^{\dagger}+\Delta_{{k}}^* \hat c_{-{k}, \downarrow} c_{{k}, \uparrow}\right) 
\end{equation}
where the irrelevant constant is removed.
Analogue simulators can hardly simulate this type of Hamiltonians (e.g. its time evolution), which do not conserve particle numbers.
However, since it is bi-linear in terms of creation and annihilation operators, the Hamiltonian can be diagonalized. 
 Specifically, by using a Bogoliubov transformation, this Hamiltonian can be transformed into a  diagonal form
$
H=\sum_{{k}, \sigma} \omega_{{k}} \hat \gamma_{{k}\sigma}^{\dagger} \hat \gamma_{{k}\sigma} 
 $
where $\hat \gamma_{{k}\sigma}^{\dagger}$ and $\gamma_{{k}\sigma}$ are a new set of fermionic operators that satisfy the canonical anticommutation relations, and  can be regarded as a rotated basis with respect to the original one.
The rotated basis is related to the original basis by the unitary transformation as
\begin{equation}
   \hat  \gamma_{j}^{\dagger} = U \hat c_{j}^{\dagger} U^{\dagger}
\end{equation}  
where $j = (k,\sigma)$ and $U$ is a unitary operator which does not conserve particle numbers. As discussed in quantum computing literature by Google's team~\cite{jiang2018quantum}, this unitary operator can be decomposed into local operators and thus can be implemented easily on quantum computers. }


For the interacting case, we can prepare the initial state with a single quasiparticle excitation,
\begin{equation}
\label{eq:quasiparticle_state}
    \ket{\psi_0} = \hat{\gamma}_j^{\dagger} \ket{\mathrm{vac}} = U \hat{c}_j^{\dagger} \ket{\mathrm{vac}}.
\end{equation}
Although $\hat \gamma$ is not the previous operator, it can be implemented with a unitary transformation to the original basis. They also satisfy the canonical anticommutation relations.
To prepare the initial state given by \autoref{eq:quasiparticle_state}, we only need to apply a unitary operator to an easy-to-prepare state. 
\autoref{eq:quasiparticle_state} could serve a good initial state when the quasiparticle picture still holds.
The Hamiltonian evolution can be realised by using Trotterisation or a random sampling way that is introduced in the main text.

Our method can be used to simulate the features of linear response, which is one application of the method presented here. The dynamics simulation methods for bosonic and fermionic systems introduced in \cite{kokcu2024linear} can be directly employed within our framework. Beyond the scope discussed in \cite{kokcu2024linear}, we have discussed the conditions for designing the initial state and the complexity of exploring the spectral features.


\subsection{ Engineered spectroscopy methods}
Finally, we compare our method to engineered spectroscopy methods, in particular Refs.~\cite{jurcevic2015spectroscopy,yoshimura2014diabatic,senko2014coherent}.
It has been shown that nonequilibrium dynamics after a global quench~\cite{villa2019unraveling,jurcevic2015spectroscopy} or a local quench~\cite{villa2020local} could unveil certain excitation spectra, which has been termed quench spectroscopy. 
The basic idea is that quench will drive the initial stationary state out of equilibrium and generate low-lying quasiparticle excitations. The excitation spectrum can thus be obtained by measuring a properly chosen observable. For instance, the basic protocol of local quench spectroscopy for a lattice model with translation invariance is that we first initialise the system in its ground state, then apply a local operation to a single lattice site, and finally measure the dynamics of a local observable, which was initially proposed in \cite{villa2020local}.
It is worth noting that the 'local quench' in the original paper may stretch the conventional meaning of quench.  Quench usually refers to a process where parameters in the Hamiltonian are changed in time, and usually the time scale for the change of parameters is very fast. For example, a system is prepared as an eigenstate of a Hamiltonian $\hatH_0$ at $t < t_0$, while at time $t_0$, the system is evolved dynamically under a different Hamiltonian $\hatH_0 + \hatH_1$. A more accurate description of 'local quench' in the protocol in \cite{villa2020local} could be  'local perturbation'.

Let us consider a one-dimensional transverse field Ising model with,
  \begin{equation}
  	\hatH  = \sum_{i<j \leq N} J_{ij} \hat{\sigma}^x_i \hat{\sigma}^x_{j} + h \sum_{j \leq N} \hat{\sigma}^z_j,
  \end{equation}
  where $\hat{\sigma}^{\alpha}_i$ ($\alpha = x,y,z$) is a Pauli operator on the $i$th site, $J_{ij}$ is the strength of spin-spin coupling between the $i$th and $j$th site.
 Ref.~\cite{jurcevic2015spectroscopy} considered a strong field case $h \gg \max J_{ij}$, in which the energy spectrum of $H$ is split  into $N+1$ decoupled subspaces spanned by different excitation numbers. Below we briefly review the proposal in~\cite{jurcevic2015spectroscopy}. {The Hamiltonian $H$ conserves the total excitations numbers $\hat{n} = \sum_j (\hat{\sigma}^z_j + 1)/2$.
 They proposed observation of quasiparticle spectroscopy by engineering the initial state consisting of the particular quasiparticle excitations.
More specifically, by rotating the spins on each site $\ket{\theta_j} = \cos(\theta_j) \ket{0}_j + \sin(\theta_j)\ket{1}$ where $\ket{0}_j$ represents a spin-up state, the initial state $\ket{\psi_0} = \otimes_{j=1}^N \ket{\theta_j}$ could be a good approximation of a superposition of the ground state and the eigenstate of $H$ in the single-excitation subspace. 
To probe the single quasiparticle excitations $E_k$, the initial state is prepared as
  \begin{equation}
  	\ket{\psi_0} \approx \ket{0} + \beta \ket{k},
  	\label{eq:spectroscopy_prod_init_E_k}
  \end{equation}
  where $\beta$ is a small constant,  $\ket{0}$ is the ground state and $\ket{k}$ is the eigenstate with a momentum $k$.
   Specifically, as suggested in \cite{jurcevic2015spectroscopy}, the eigenstates can be written as $\ket{k} = \sum_{j=1}^N \tilde{A}_j^k \ket{1}_j \otimes_{i\neq j } \ket{0}_{i  }$ where for nearest-neighbour couplings  $\tilde{A}_j^k = \sqrt{2/(N+1)} \sin (k j \pi /(N+1))$. By setting $\theta_j = \tan^{-1} (\beta \tilde{A}_j^k) $, we have a tensor product state that is a good approximation of \autoref{eq:spectroscopy_prod_init_E_k}. 
It is easy to verify that  the state coherence $\braket{0| \psi_0  } \braket{\psi_0|k} = \beta$. 
For probing transitions between $\ket{k}$ and $\ket{k'}$, the authors prepared the state as
  \begin{equation}
  	\ket{\psi_0} \approx \ket{0} + \beta (\ket{k} + \ket{k')},
  \end{equation}
in which the state coherence $\braket{k| \psi_0  } \braket{\psi_0|k'} = \beta^2$. 
The choice of the initial states and observables is a special case of our method as discussed in Methods.

Yoshimura \textit{et al.} considered a time-dependent field $B = B(t)$, which is decreased from a large polarising field to a constant,  to create excitations, a method termed diabatic-ramping spectroscopy~\cite{yoshimura2014diabatic}.   The transition energies can be obtained by taking the Fourier transform of the observable dynamics.
Senko \textit{et al.} considered a similar strategy, which applies time-dependent field $B(t) = B_0 + B_p \sin(2\pi v_p t)$ for probing the energy spectrum of a weakly coupled system. At the basis of this method is the emergence of an energy resonance between $\ket{n}$ and $\ket{n'}$ when the frequency of the external field, $v_p$, matches the transition energies $|\Endiff_{n',n}|$~\cite{senko2014coherent}. 
The emergence of resonance at $v_p = |\Endiff_{n',n}|$ could be understood by time-dependent perturbation theory.
In addition to the above specific Ising model,  spectroscopy protocols have demonstrated that excitations can be effectively created in cases of Bose-Hubbard models~\cite{roushan2017spectroscopic,villa2019unraveling}, spin chains~\cite{jurcevic2015spectroscopy,yoshimura2014diabatic,senko2014coherent}, topological systems~\cite{nandkishore2021spectroscopic}, and disordered systems~\cite{villa2021finding}.

\subsection{Quantum algorithms for eigenenergy estimation}
Existing universal quantum algorithms for finding eigenstates and the associated eigenenergies~\cite{low2017hamiltonian,martyn2021grand,gilyen2019quantum,wang2012quantum,zintchenko2016randomized,wang2016quantum}, such as quantum phase estimation and quantum signal processing, generally require a deep circuit with long-time controlled operations, which remains a challenge for near-term quantum hardware. 
More importantly, simply in terms of efficiency,  having an experiment-friendly method to access the behaviours of materials without requiring too many experimental resources is desirable. Our method only requires the realisation of time evolution $e^{-i\hatH t}$ without reliance on any ancillary qubits, a basic and most promising application of quantum computing~\cite{childs2018toward}. This is in contrast to many Hamiltonian simulation algorithms and variational dynamics simulations, which usually require controlled-unitary operations. 
Our method is therefore compatible with an analogue quantum simulator, and has the advantage of potentially being more robust against noise. 
The quantum circuit complexity of our method for transition energy estimation is shown to be logarithmic in precision, while maintaining to be polynomial when device noise is present.

Zintchenko \textit{et al.}~\cite{zintchenko2016randomized} proposed an ancilla-free spectral gap estimation method. The spectral gap information is extracted from the computational-basis measurement results $|\braket{0|U^{\dagger} e^{-i\hat H t} U|0}|^2$ with $U$ drawn uniformly from the Haar–random measure, which is different from our work. 
Recently, Wang \textit{et al.} proposed quantum algorithms for ground state energy estimation~\cite{wang2022quantum}. 
More specifically, they proposed using a  Gaussian derivative function in the form of $p_{\sigma}(t) = -\frac{1}{\sqrt{{2\pi} \sigma^3}} t \exp( -\frac{t^2}{2 \sigma^2} )$ as a filter to estimate the ground state energy, where $\sigma$ plays a similar role to $\tau^{-1}$.
They achieved a maximal time complexity which is logarithmic in   $\varepsilon$ and a total running time $\tilde{\mathcal{O}}(\varepsilon^{-2})$. It is worth noting that for the  Gaussian derivative function, when $\omega$ approaches $E_j$, the convolution function value will be close to zero, instead of reaching its maximum.  This function is thus constrained to estimate the ground-state energy instead of transition energies. 
Another relevant work is by Huo and Li, which proposed the following filtering function, 
$
f(t)=\frac{1}{\pi} \frac{\beta}{\beta^2+t^2} e^{-\frac{\beta^2+t^2}{2 \tau^2}}
 $, a product of Lorentz and Gaussian functions~\cite{huo2023error}.
 However, we find that this function cannot achieve logarithmic dependence because of the sharp feature close to $t \rightarrow 0$, which results in a flat function after the Fourier transform.

Recently, there have been some related works built upon the basics of quantum mechanics that the spectral information is contained in dynamics.
A representative work is the algorithmic shadow spectroscopy, which proposed to infer the energy difference by calculating the Fourier transform of the observable expectations under time evolution~\cite{chan2022algorithmic}. 
These methods based on post-processing time-dependent signals inherit the problem of these classical methods, that is, generally require an evolution time proportional to the inverse of precision. 
\sun{
By contrast, our method considers realising a spectral detector $G(\omega)$ to determine transition energies and has a theoretical guarantee for estimation accuracy. 
Gu \textit{et al.} proposed an error-resilient algorithm for phase estimation without ancilla~\cite{gu2023noise}, similarly requiring long-time evolution. A detailed comparison between the two works could be interesting.
}




\begin{figure}[t]
\centering
\includegraphics[width=1\textwidth]{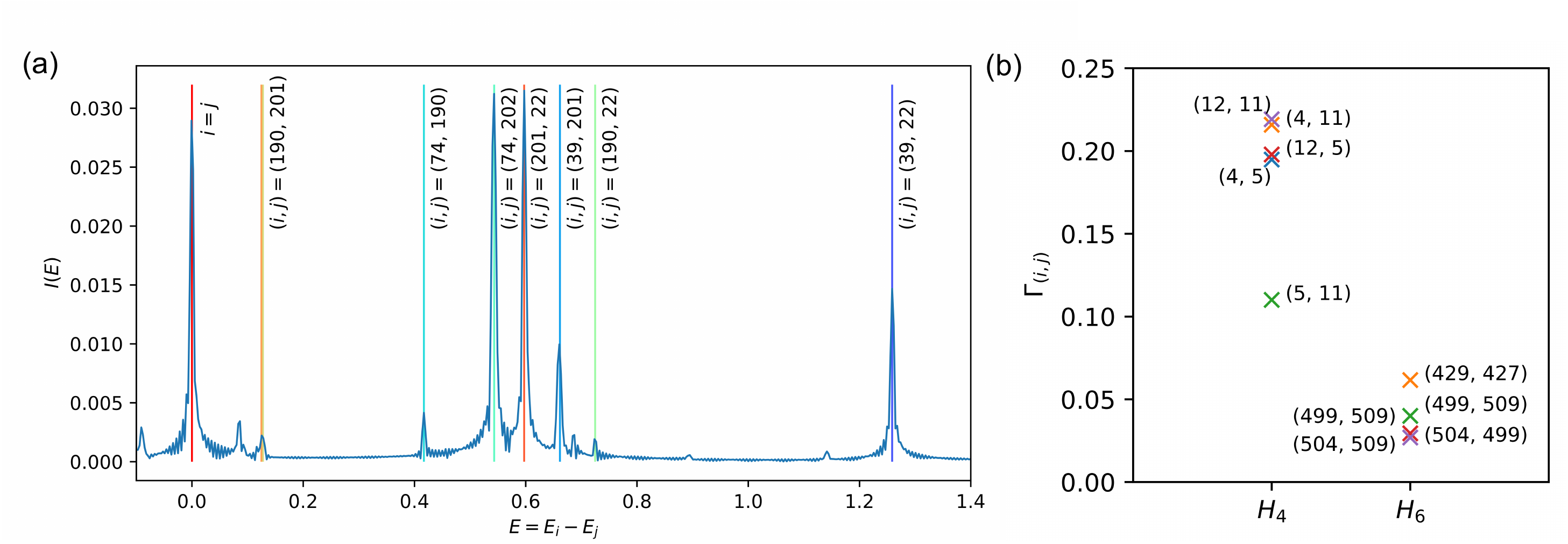}
    \caption{(a) Transition energy spectra search of the $\mathrm{H}_4$ molecule at the bond length $r = 3.0 \AA$. The vertical lines show ideal transition energies, which are calculated by exact diagonalisation. The observable is chosen as a particle-conserving operator $\hat{c}^{\dagger}_5 \hat{c}_7 + \hat{c}^{\dagger}_5 \hat{c}_7$ with fermionic   (creation)  annihilation operator $\hat{c}$ ($\hat{c}^{\dagger}$)  with qubit numbering from zero.
The cutoff for evaluating the integral is chosen as $T = 1$.
(b) The largest coherences $\Gamma_{i,j}$ for $H_4$ and $H_6$. The transitions are labelled by a pair $(i,j)$ aside.
}
\label{fig:H4}
\end{figure}

\section{Numerical and experiment results}

In this section, we discuss the implementation of our method in detail, and show more numerical and experimental results.


\subsection{Transition energy estimation for molecular systems}

In the main text, we showed the simulation of the energy differences for the LiH molecule, which is encoded in six qubits. The cutoff for evaluating the integral is chosen as $T = 2.5$.
The observable is chosen as a particle-conserving operator $\hat{c}^{\dagger}_0 \hat{c}_1 + \hat{c}_0 \hat{c}^{\dagger}_1$ with fermionic   (creation)  annihilation operator $\hat{c}$ ($\hat{c}^{\dagger}$), which considers the transitions between low-lying excited states, in order to make $\braket{n|\hat{O}|n'}$ non-negligible. 
It is anticipated that more excitations will emerge when the molecule becomes complicated.
We consider $\mathrm{H}_4$ and detect its excitation spectrum using an approach similar to that in the main text.
\autoref{fig:H4}(a) shows the excitation spectrum of $\mathrm{H}_4$ at the bond length $r = 3.0 \AA$ encoded in eight qubits. \autoref{fig:H4}(b) shows the coherence $\Cohere_{i,j}$ for $\mathrm{H}_4$ and $\mathrm{H}_6$.

\subsection{Excitation spectra of lattice models}


In the main text, we presented the excitation spectra of the $11$-site transverse-field Ising model  and the 13-site Heisenberg model.
The experiment was performed on the 27-qubit IBM Kolkata quantum device. The experimental results of the real-time dynamics on the IBM device are shown in \autoref{fig:time_evo_IBM}. We present the simulation of the energy band for the Ising and Heisenberg chains with $51$ qubit in \autoref{fig:heis_1D}. We set the parameters in the same way as Figure 3 in the main text. The numerical results have a good agreement with the  dispersion relations in theory.

\begin{figure}[h]
\centering
\includegraphics[width=0.65\textwidth]{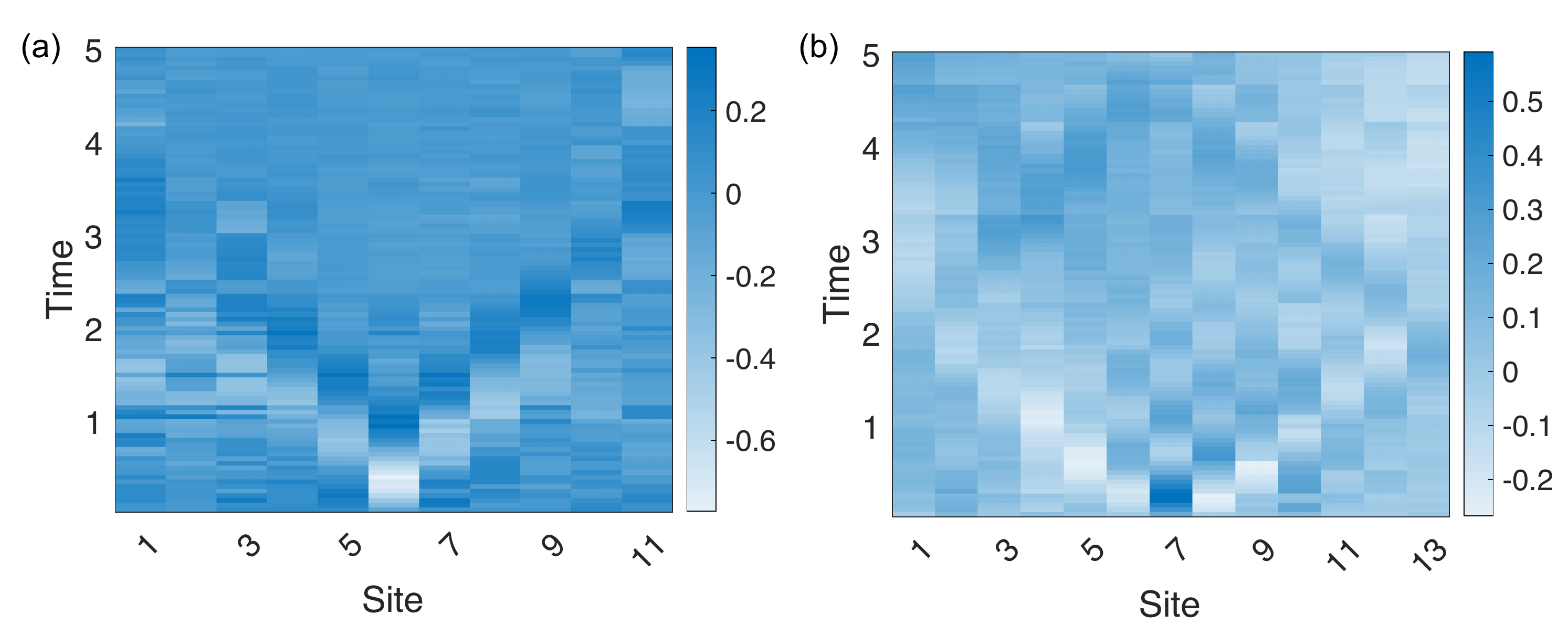}
    \caption{(a,b) Time evolution of an $11$-site transverse-field Ising model (a) and a 13-site Heisenberg mode (b) executed on the IBM Kolkata quantum device.  
    }
\label{fig:time_evo_IBM}
\end{figure}

\begin{figure}[h]
    \centering
    \begin{subfigure}
        \centering
        \includegraphics[width=0.35\textwidth]{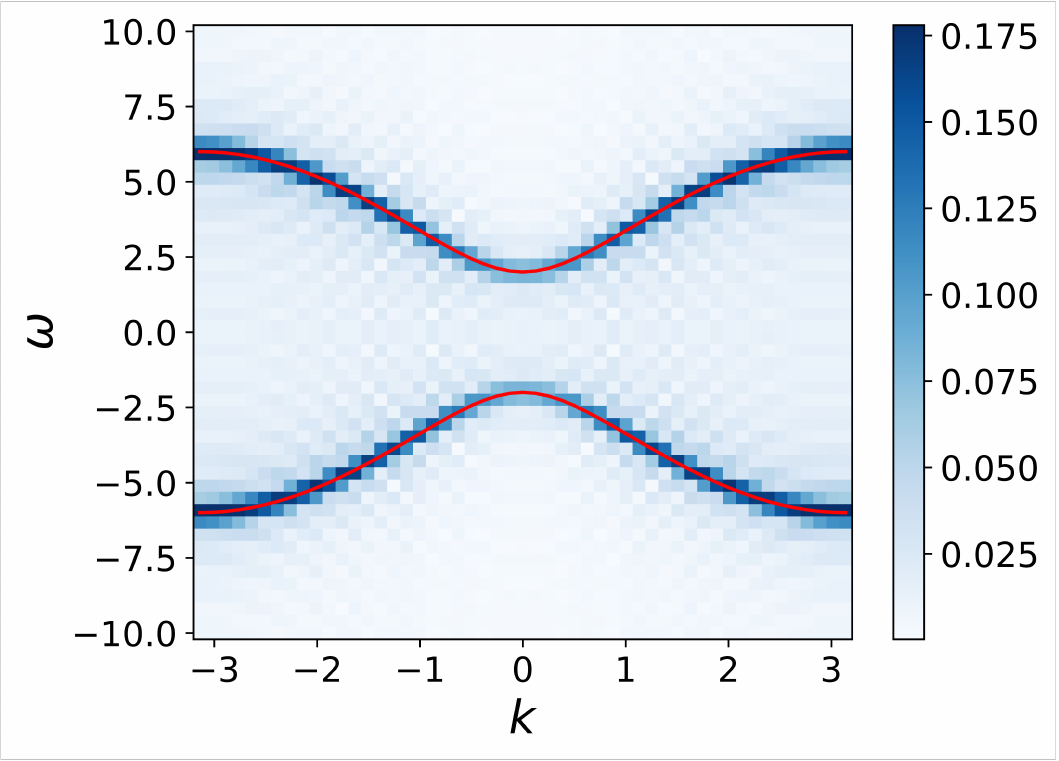}
    \end{subfigure}
    \begin{subfigure}
        \centering
        \includegraphics[width=0.35\textwidth]{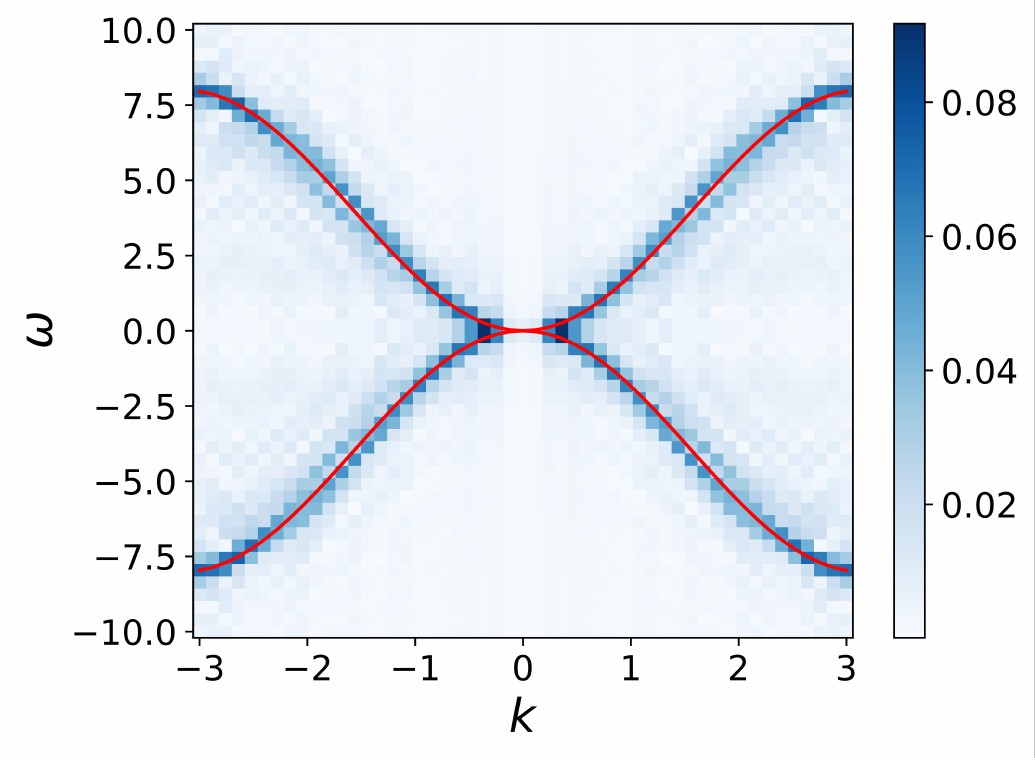}
    \end{subfigure}
    \caption{ Numerical simulation of the energy dispersions of lattice models. The excitation spectra of the Ising (a) and  Heisenberg (b) model on a $51$ site chain. The red lines represent the analytic results. The total evolution time is $T_{\mathrm{tot}}=10$.}
    \label{fig:heis_1D}
\end{figure}

For the Hubbard models, we find the ground state  using DMRG, apply local perturbation, evolve the system under the Hamiltonian, and measure the expectation value of the number operator. The results for the bosonic chain with an average filling of $\bar{n}=1.4$ and $U/J=2$  and for the fermionic chain simulation with $h/J=2$ are shown in \autoref{fig:bose}.

\begin{figure}[h]
    \centering
    \begin{subfigure}
        \centering
        \includegraphics[width=0.35\textwidth]{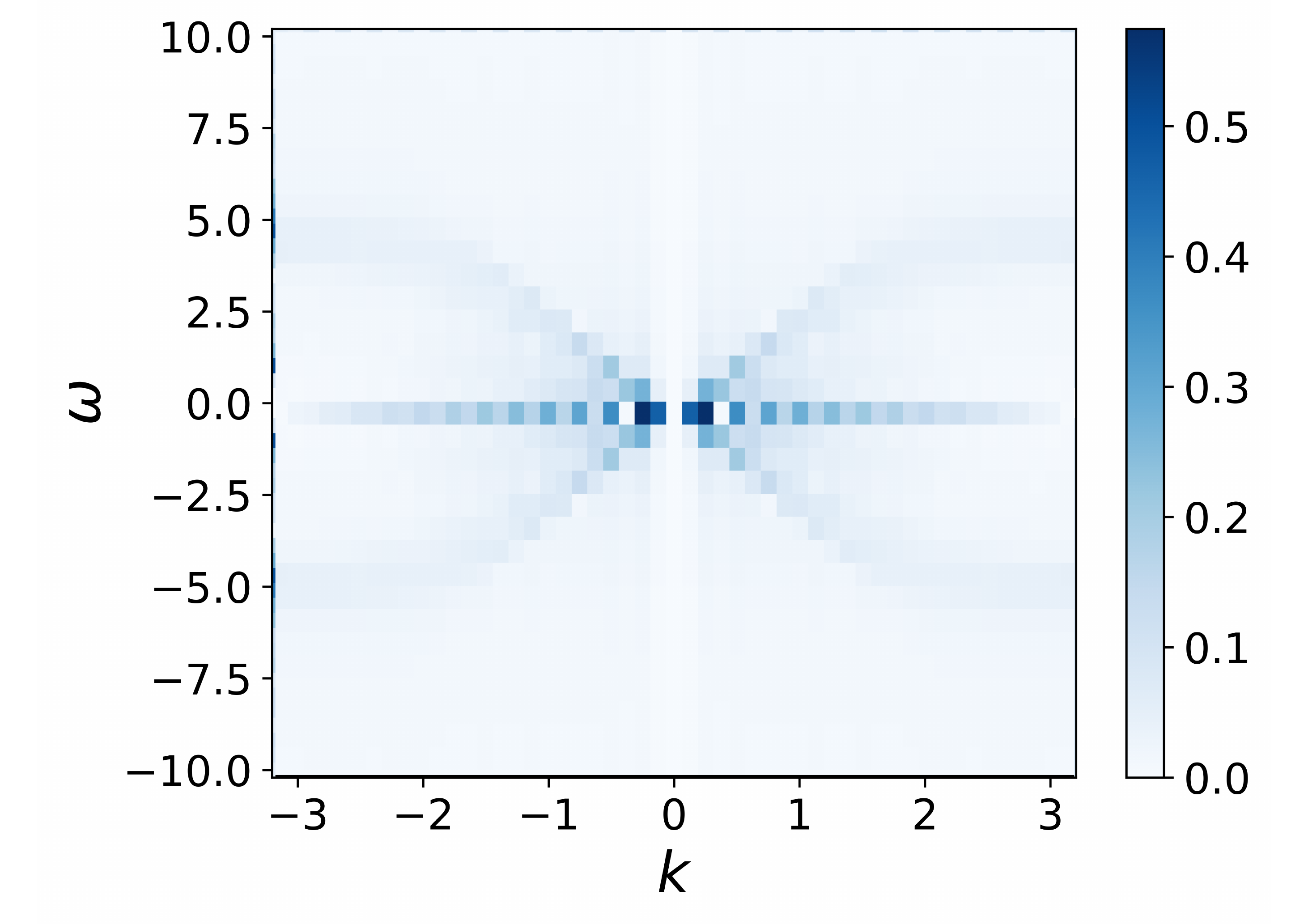}
    \end{subfigure}
    \begin{subfigure}
        \centering
        \includegraphics[width=0.35\textwidth]{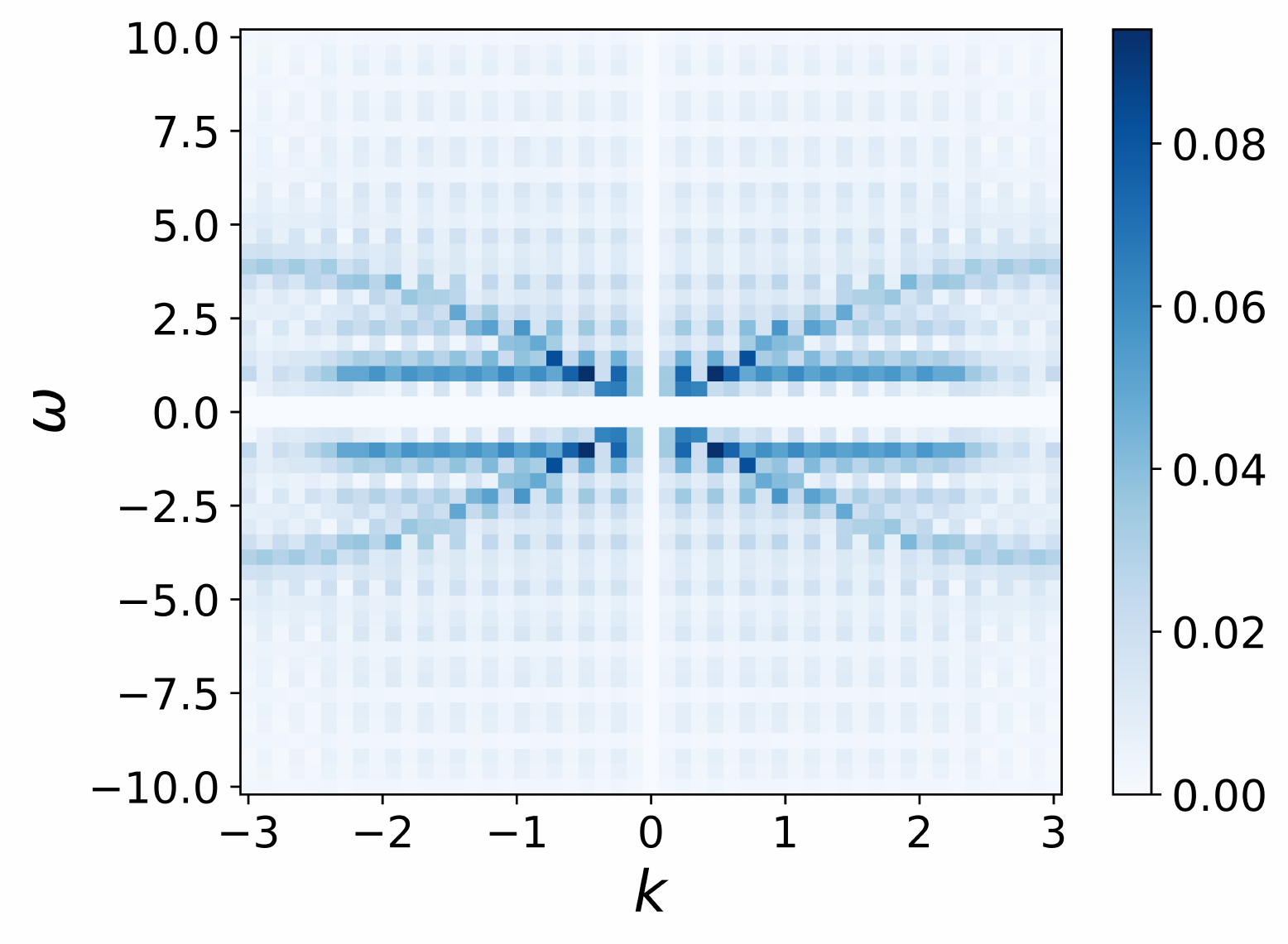}
    \end{subfigure}
    \caption{Simulation of the Bose-Hubbard model for $U/J=2$ (a), and the Fermi-Hubbard model $h/J=2$ (b) for on a chain with $L=49$ sites. The energy dispersion relations are shown.}
    \label{fig:bose}
\end{figure}

\begin{figure}[h]
    \centering
    \begin{subfigure}
        \centering
        \includegraphics[width=0.43
\textwidth]{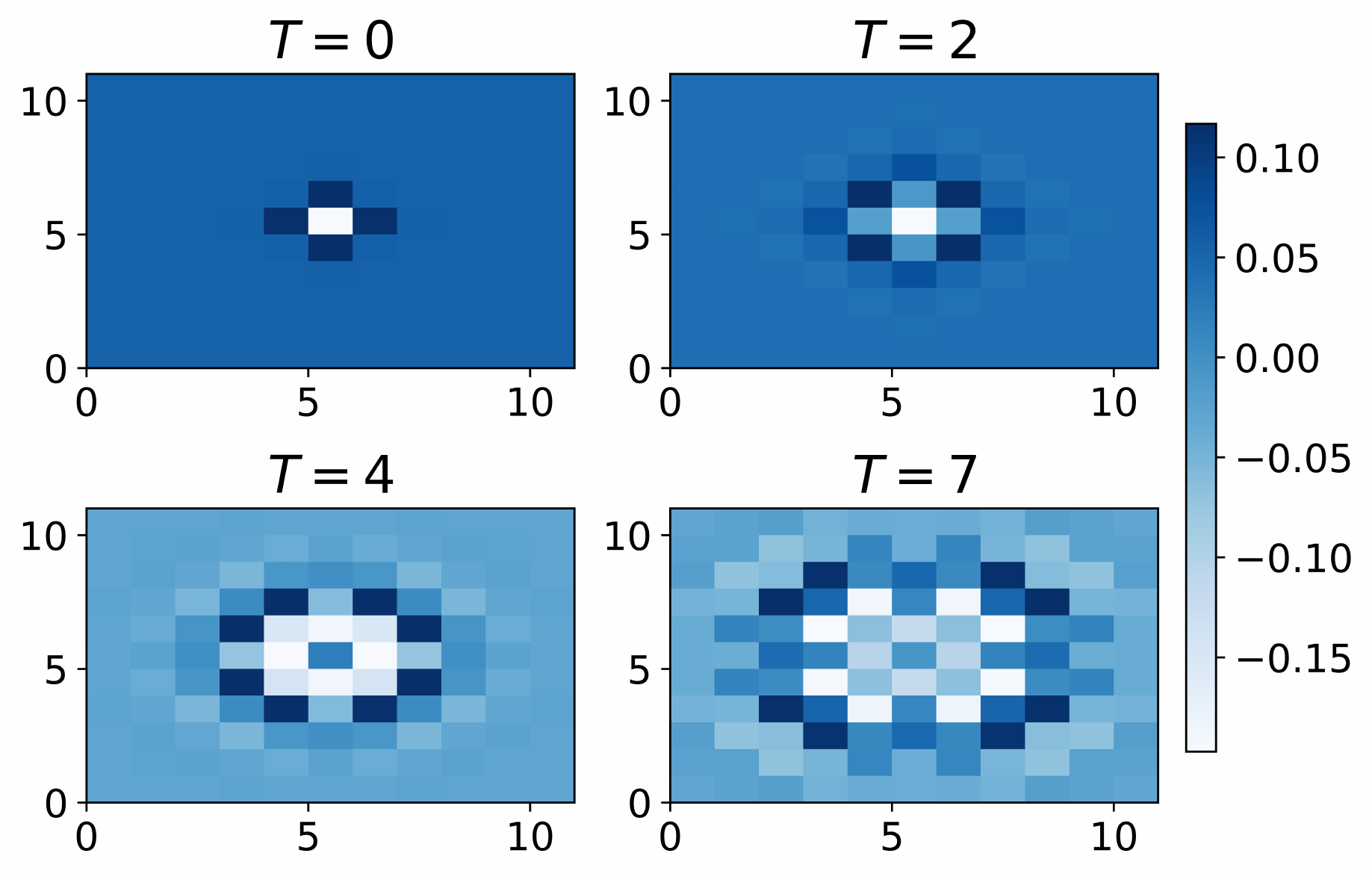}
        \label{fig:1a}
    \end{subfigure}
    \begin{subfigure}
        \centering
        \includegraphics[width=0.33\textwidth]{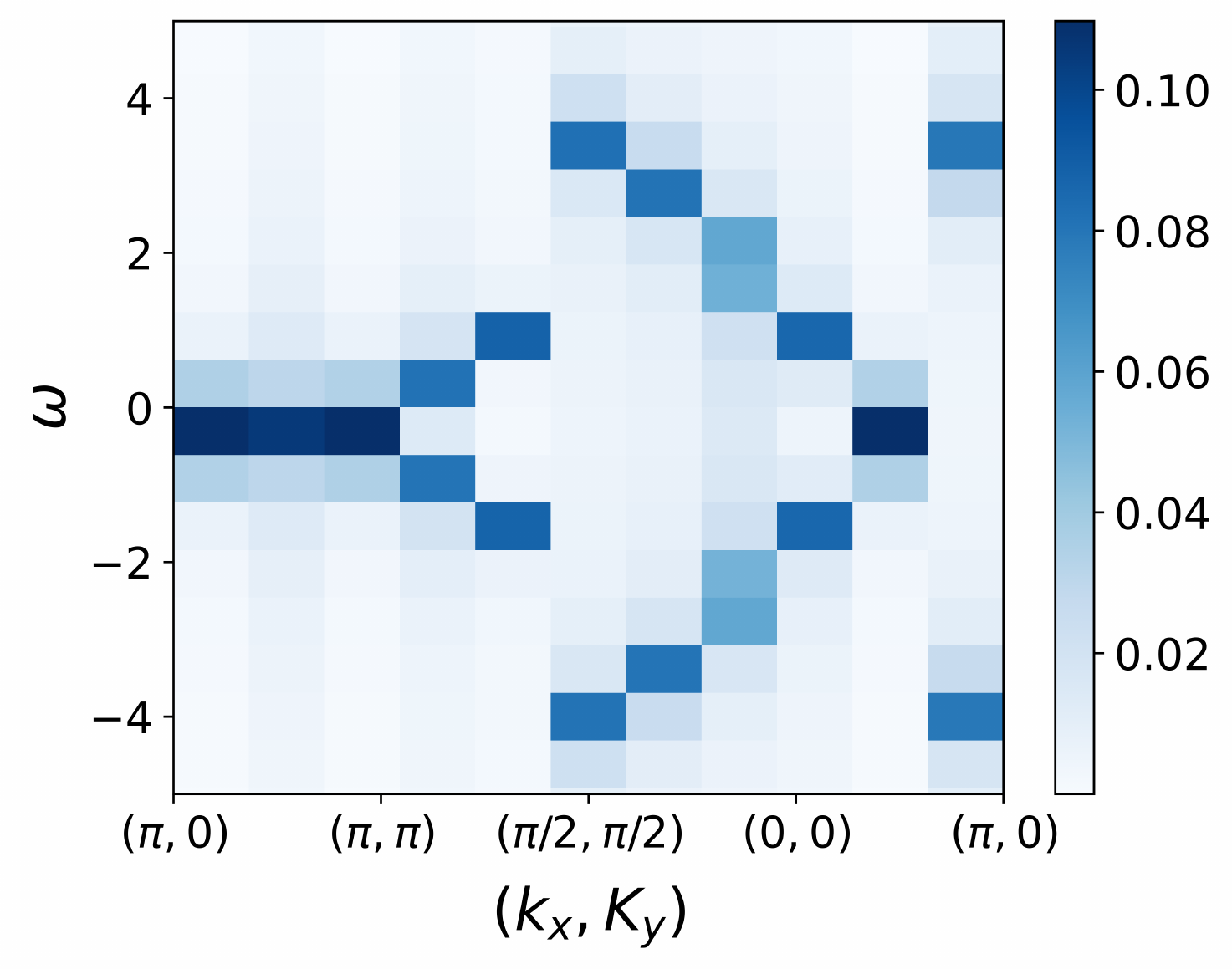}
        \label{fig:1b}
    \end{subfigure}
    \caption{Simulation of the Heisenberg model Hamiltonian applied on a square lattice with  $L_x=L_y=11$: (a) Evolution of the expectation value of the spin operator $\left\langle \sigma_i^y \left(t\right)\right\rangle$. (b) The normalized modulus of the spectral function.
    }
    \label{fig:Heis_2D}
\end{figure}


Finally, we apply MPS to study the excitation spectrum of 2D lattices of dimensions $L_x \times L_y$ where $L_x=L_y=11$. The simulation result for the lattice model with nearest-neighbour interaction is shown in  \autoref{fig:Heis_2D}. 
Even though we can obtain the spectrum of the 2D system and examine how the trend adjusts to the expected result,  the resolution is limited by the significant computational complexity involved.
One could improve the resolution by raising the number of sites, which, however, will significantly increase the time duration of classical simulations based on tensor networks. 

\newpage
\clearpage

\end{document}